\title{Verification of Agent-Based Artifact Systems}
\author{\name Francesco Belardinelli 
\email
  belardinelli@ibisc.fr\\ \addr Laboratoire Ibisc, Universit\'e
  d'Evry, France\\
\name Alessio Lomuscio \email
  a.lomuscio@imperial.ac.uk\\ \addr Department of Computing, Imperial
  College London, UK\\ \name Fabio Patrizi \email
  fabio.patrizi@dis.uniroma1.it\\ \addr 
  Dipartimento di Ingegneria Informatica, Automatica e Gestionale ``A.~Ruberti''\\
  Universit\`a di Roma ``La Sapienza'', Italy}
\tikzstyle{every initial by arrow}=[initial text=] 
\tikzstyle{every state}=[ellipse,fill=none,draw=black,text=black,inner sep=1pt,minimum size=2mm] 
\tikzstyle{every picture}=[->,>=stealth',shorten >=1pt,auto,node distance=2cm, semithick]
\newcommand{\aqis}{AC-MAS}
\newcommand{\ACP}{ACP}
\newcommand{\const}{\textit{const}}
\newcommand{\adom}{\textit{adom}}
\newcommand{\im}{Im}
\newcommand{\free}{\textit{free}}
\newcommand{\vars}{\textit{vars}}
\newcommand{\iso}{\simeq}
\newcommand{\Const}{C}
\newcommand{\Vars}{Var}
 \newenvironment{proof}{\textbf{Proof.\ }}{\qed}
\newenvironment{proofsk}{\textbf{Proof (sketch).\ }}{\qed}
\long\def\eatpar#1{%
\ifx#1\par                      % se il token e' \par
\let\nextmove=\eatpar           % rimetti \eatpar in coda
\else
\let\nextmove=#1%               altrimenti, rimetti il token in coda
\fi
\nextmove%                      il token o \eatpar viene rimesso in coda
}
\def\qed{\hfill{\qedboxempty}      % qed with empty box
  \ifdim\lastskip<\medskipamount \removelastskip\penalty55\medskip\fi}
\def\qedboxempty{\vbox{\hrule\hbox{\vrule\kern3pt
                 \vbox{\kern3pt\kern3pt}\kern3pt\vrule}\hrule}}
\def\qedfull{\hfill{\qedboxfull}   % qed with full box
  \ifdim\lastskip<\medskipamount \removelastskip\penalty55\medskip\fi}
\def\qedboxfull{\vrule height 4pt width 4pt depth 0pt}
\newcommand{{\incolumn}}[1]{\begin{tabular}[c]{c} #1 \end{tabular}}
\newcommand{{\incolumnmath}}[1]{\begin{array}[c]{c} #1 \end{array}}
\gdef\urltilde{\lower 0.6ex\hbox{~}}
 \newcommand{\D}{\mathcal{D}}
 \newcommand{\F}{\mathcal{F}}
 \renewcommand{\L}{\mathcal{L}}
 \renewcommand{\P}{\mathcal{P}}
\newcommand{\Q}{\mathcal{Q}} 
\renewcommand{\S}{\mathcal{S}} \newcommand{\T}{\mathcal{T}}
\newcommand{\defterm}[1]{\mbox{\underline{\it\smash{#1}\vphantom{\lower.1ex\hbox
{x}}}}}
\newcommand{\ra}{\rightarrow}
\newcommand{\lra}{\leftrightarrow}
\newcommand{\set}[1]{\{#1\}}                      % set
\newcommand{\card}[1]{{|{#1}|}}                     % cardinality of a set
\newcommand{\tup}[1]{\langle #1\rangle}            % tuple
\newcommand{\myi}{{(i)}\xspace}
\newcommand{\myii}{{(ii)}\xspace}
\newcommand{\myiii}{{(iii)}\xspace}
\newcommand{\myiv}{{(iv)}\xspace}
\newcommand{\Nat}{{\rm I\kern-.23em N}}
\newcommand{\commentout}[1]{}
\edef\marginnotetextwidth{\the\textwidth}
\newtheorem{definition}{Definition}[section]
\newtheorem{theorem}[definition]{Theorem}
\newtheorem{lemma}[definition]{Lemma}
\newtheorem{corollary}[definition]{Corollary}
\newtheorem{proposition}[definition]{Proposition}
\begin{document}
\maketitle

\begin{abstract}
Artifact systems are a novel paradigm for specifying and implementing
business processes described in terms of interacting modules called
{\em artifacts}.  Artifacts consist of {\em data} and {\em
lifecycles}, accounting respectively for the relational structure of
the artifacts' states and their possible evolutions over time.
In this paper we put forward artifact-centric multi-agent systems, a
novel formalisation of artifact systems in the context of multi-agent
systems operating on them. Differently from the usual process-based
models of services, the semantics we give explicitly accounts for the
data structures on which artifact systems are defined.

We study the model checking problem for artifact-centric multi-agent
systems against specifications written in a quantified version of
temporal-epistemic logic expressing the knowledge of the agents in the
exchange. 
We begin by noting that the problem is undecidable in general. We then
identify two noteworthy restrictions, one syntactical and one
semantical, that enable us to find bisimilar finite abstractions and
therefore reduce the model checking problem to the instance on finite
models. Under these assumptions we show that the model checking
problem for these systems is EXPSPACE-complete.
We then introduce artifact-centric programs, compact and declarative
representations of the programs governing both the artifact system and
the agents. We show that, while these in principle generate
infinite-state systems, under natural conditions their verification
problem can be solved on finite abstractions that can be effectively
computed from the programs.
Finally we exemplify the theoretical results of the paper through a
mainstream procurement scenario from the artifact systems literature.

\end{abstract}

%\newpage

\section{Introduction}\label{sec:intro}

Much of the work in the area of \emph{reasoning about knowledge}
involves the development of formal techniques for the representation
of epistemic properties of rational actors, or \emph{agents}, in a
multi-agent system (MAS). The approaches based on modal logic are
often rooted on interpreted systems~\cite{ParikhRamanujam85}, a
computationally grounded semantics~\cite{Wooldridge00a} used for the
interpretation of several temporal-epistemic logics. This line of
research was thoroughly explored in the 1990s leading to a significant
body of work~\cite{fhmv:rak}. Further significant explorations have
been conducted since then; a recent topic of interest has focused on
the development of automatic techniques, including model
checking~\cite{cgp99}, for the verification of temporal-epistemic
specifications for the autonomous agents in a
MAS~\cite{mck04,Verics07,LomuscioQuRaimondi09}. This has led to
developments in a number of areas traditionally outside artificial
intelligence, knowledge representation and MAS, including
security~\cite{DechesneWang10,CiobacaDK12},
web-services~\cite{LomuscioSPS10} and cache-coherence
protocols in hardware design~\cite{BaukusMeyden04}.  The ambition of
the present paper is to offer a similar change of perspective in the
area of
\emph{artifact systems}~\cite{CohnH09}, a growing topic in
Service-Oriented Computing (SOC).

\emph{Artifacts} are structures that ``combine data and process in an
holistic manner as the basic building block[s]''~\cite{CohnH09} of
systems' descriptions. Artifact systems are services constituted by
complex workflow schemes based on artifacts which the agents
interact with.  The data component is given by the relational databases
underpinning the artifacts in a system, whereas the workflows are
described by ``lifecycles'' associated with each artifact schema.
While in the standard services paradigm services are made public by
exposing their processes interface, in artifact systems both the data
structures and the lifecycles are advertised. Services are composed in
a ``hub'' where operations on the artifacts are
executed. Implementations of artifact systems, such as the IBM engine
{\sc Barcelona}~\cite{HHV11}, provide a hub where the service
choreography and service orchestratation~\cite{Alonso04} are carried
out.

While artifact systems are beginning to drive new application areas,
such as case management systems~\cite{marin-etal-bpm12}, we identify two
shortcomings in the present state-of-the-art. Firstly, the artifact
systems literature~\cite{BGHLS07,DeutschHPV09,Hull08,NooijenFD2012}
focuses exclusively on the artifacts themselves. While there is
obviously a need to model and implement the artifact infrastructure,
importantly we also  need to account for the agents implementing the
services acting on the artifact system. This is of particular
relevance given that artifact systems are envisaged to play a leading
role in information systems. We need to be able to reason not just
about the artifact states but also about what actions specific
participants are allowed and not allowed to do, what knowledge they
can or cannot derive in a system run, what system state they can
achieve in coordination with their peers, etc. In other words, we need
to move from the description of the artifact infrastructure to one
that encompasses both the agents and the infrastructure.

Secondly, there is a pressing demand to provide the hub with
automatic choreography and orchestration capabilities. It
is well-known that choreography techniques can be leveraged on
automatic model checking techniques; orchestration can be recast as a
synthesis problem, which, in turn, can also benefit from model
checking technology. However, while model checking and its
applications are relatively well-understood in the plain process-based
modelling, the presence of data makes these problems much harder and
virtually unexplored. Additionally, infinite domains in the underlying
databases lead to infinite state-spaces and undecidability of the
model checking problem.

The aim of this paper is to make a concerted contribution to both
problems above. Firstly, we provide a computationally grounded
semantics to systems comprising the artifact infrastructure and the
agents operating on it. We use this semantics to interpret a
temporal-epistemic language with first-order quantifiers to reason
about the evolution of the hub as well as the knowledge of the agents
in the presence of evolving, structured data. We observe that the
model checking problem for these structures is undecidable in general
and analyse two notable decidable fragments. In this context, a
contribution we make is to provide finite abstractions to
infinite-state artifact systems, thereby presenting a technique for
their effective verification for a class of declarative agent-based,
artifact-centric programs that we here define. We evaluate this
methodology by studying its computational complexity and by
demonstrating its use on a well-known scenario from the artifact
systems literature.

\subsection{Artifact-Centric Systems}

Service-oriented computing is concerned with the study and
development of distributed applications that can be automatically
discovered and composed by means of remote interfaces. A point of
distinction over more traditional distributed systems is the
interoperability and connectedness of services and the shared
%%XML-based 
format for both data and remote procedure calls. Two
technology-independent concepts permeate the service-oriented
literature: orchestration and
choreography~\cite{Alonso04,singh2005service}. Orchestration involves
the ordering of actions of possibly different services, facilitated by
a controller or orchestrator, to achieve a certain overall
goal. Choreography concerns the distributed coordination of different
actions through publicly observable events to achieve a certain
goal. 
%%Standards for these concepts are provided by the W3C
%%consortium (\verb!http://www.w3.org/!). 
A MAS perspective~\cite{Wooldridge09} is known to be particularly
helpful in service-oriented computing in that it allows us to ascribe
information states and private or common goals to the various
services. Under this view the agents of the system implement the
services and interact with one another in a shared infrastructure or
environment.

A key theoretical problem in SOC is to devise effective mechanisms to
verify that service composition is correct according to some
specification. Techniques based on model checking~\cite{cgp99}
and synthesis~\cite{BerardiCGP08} have been put forward to solve the
composition and orchestration problem for services described and
advertised at interface level through finite state
machines~\cite{GLMP08}. More recently, attention has turned
to services described by languages such as
%AL: Check ref is short.
WS-BPEL~\cite{WSBPEL-full}, which provide potentially
unbounded variables in the description of the service process. Again,
model checking approaches have successfully been used to verify complex
service compositions~\cite{BertoliPT10,LomuscioQS12}.

While WS-BPEL provides a model for services with variables, the data
referenced by them is non-permanent. 
%\textbf{FP: extended paragraph, please check.}
The area of data-centric
workflows~\cite{HullNN09,nigam-caswell-03} evolved as an attempt to provide support for
permanent data, 
typically present in the form of underlying databases. 
%%
%%[Permanent data in services offers better solutions in terms of XXX, YYY, ZZZ.]  
Although usually abstracted away, permanent 
data is of central importance to services, which
typically query data sources and are driven
by the answers they obtain; see, e.g., \cite{BerardiCGHM05}. 
Therefore, a faithful model of a service behavior 
cannot, in general, disregard this component.
In response to this, proposals have been made in the workflows and
service communities in terms of declarative specifications
of \emph{data-centric services} that are advertised for automatic
discovery and composition. The artifact-centric
approach~\cite{CohnH09} is now one of the leading emerging paradigms
in the area. As described in~\cite{Hull08,Hulletal11}
artifact-centric systems can be presented along four dimensions.

{\bf Artifacts} are the holders of all structured information
available in the system. In a business-oriented scenario this may
include purchase orders, invoices, payment records, etc. Artifacts
may be created, amended, and destroyed at run time; however, abstract
artifact schemas are provided at design time to define the structure
of all artifacts to be manipulated 
%%created and amended 
in the system. Intuitively,
external events cause changes in the %%artifact 
system, including 
%the change 
in the value of artifact attributes.

The evolution of artifacts is governed by {\bf lifecycles}. %%Lifecycles
These 
%%are finite-state machines 
capture the changes that an artifact
may go through from creation to deletion. Intuitively, a purchase order
may be created, amended and operated on by several events before it is
fullfilled and its existence in the system terminated: a lifecycle
associated with a purchase order artifact formalises these
transitions. 

{\bf Services} are seen as the actors operating on the
artifact system. 
%%Services 
They represent both human and 
%%automatic
software actors, possibly distributed, that generate events on the
artifact system. Some services may ``own''
%%corresponding 
artifacts, and some artifacts may be shared by several services.  
However, not all artifacts, or parts of artifacts, are visible to all services. 
{\em Views} and {\em windows} respectively determine 
which parts of artifacts and which artifact instances
are visible to which service.
%%, and  define which
 %%is visible to particular services. 
 An artifact \emph{hub}
is a 
%%virtual or physical computational 
system that maintains the
artifact system and processes the events generated by the services.

Services generate events on the artifact system according to {\bf
associations}. Typically these are declarative descriptions providing
the precondition and postconditions for the 
%%execution 
generation of events.
%% by services to be acted upon on the artifact system. 
These generate changes in the artifact system according to the artifact
lifecycles. Since events may trigger changes in several artifacts in
the system, events are processed by a well-defined
semantics~\cite{damaggio-etal-bpm11,Hulletal11} that governs the sequence of changes an
artifact-system may undertake upon 
%%the execution 
consumption of an event. 
Such a semantics, based on the use of 
\emph{Prerequisite-Antecedent-Consequent} (PAC) rules, 
ensures acyclicity and full determinism in the updates on
the artifact system. GSM is a declarative language that can be used to
describe artifact systems. {\sc Barcelona} is an engine that can be
used to run a GSM-based artifact-centric system~\cite{HHV11}.

The above is a partial and incomplete description of the artifact
paradigm. We refer to ~\cite{CohnH09,Hull08,Hulletal11} for more
details.
 
As it will be clear in the next section, in line with the agent-based
approach to services, we will use agent-based concepts to model
services. The artifact-system will be represented as an environment,
constituted by evolving databases, upon which the agents operate;
lifecycles and associations will be modelled by local and global
transition functions. The model is intended to incorporate all
artifact-related concepts including views and windows.

In view of the above in this paper we address the following
questions. How can we give a transition-based semantics for artifacts
and agents operating on them? What syntax should we use to specify
properties of the agents and the artifacts themselves?  Can we verify
that an artifact system satisfies certain properties?  As this will be
shown to be undecidable, can we find suitable fragments on which this
can actually be carried out? If so, what is the resulting complexity?
Lastly, can we provide declarative specifications for the agent
programs so that these can be verified by model checking? Can this
technique be used on mainstream scenarios from the SOC literature?

This paper intends to contribute answering these questions.

\subsection{Related Work}

As stated above, virtually all current literature on artifact-centric
systems focuses on properties and implementations of the
artifact-system as such. Little or no attention is given to the actors
on the system, whether they are human or artificial agents. A few
formal techniques have, however, been put forward to verify the core,
non-agent aspects of the system; in the following we briefly compare
these to this contribution.

To our knowledge the verification of artifact-centric business
processes was first discussed in~\cite{BGHLS07}, where reachability
and deadlocks are phrased in the context of artifact-centric systems
and complexity results for the verification problem are given. The
present contribution differs markedly from~\cite{BGHLS07} by employing
a more expressive specification language, even if the
agent-related aspects are not considered, and by putting forward
effective abstraction procedures for verification.

%% a set of
%% fundamental problems, such as the existence of dead-ends and
%% reachability of success states, and states results on their
%% decidability and complexity.
%% %%
%% The present paper also tackles verification, but here we use a
%% specification language that is able to capture a large class of
%% properties, rather than focusing on particular instances. Most 
%% importantly, we devise actual techniques to carry out the verification
%% step.

In~\cite{gerede-su-icsoc07} a verification technique for
artifact-centric systems against a variant of computation-tree logic
is put forward.
%%: AL: seemed too vague.
%%  branching-time specification language is
%% considered, and verification techniques and results are obtained under
%% specific assumptions. 
The decidability of the verification problem is proven for the language
considered under the assumption that the interpretation domain is
bounded.
%%AL??
Decidability is also shown for the unbounded case by making
restrictions on the values that quantified variables can range over.
In the work here presented we also work on unbounded domains, but do
not require the restrictions present in~\cite{gerede-su-icsoc07}: we
only insist on the fact that the number of distinct values in the
system does not exceed a given threshold at any point in any run. Most
importantly, the interplay between quantification and modalities here
considered allows us to bind and use variables in different
states. This is a major difference as this feature is very expressive
and known to lead to undecidability.

A related line of research is followed
in~\cite{DeutschHPV09,DDV@tods2012}, where the verification problem
for artifact systems against two variants of first-order linear-time
temporal logic is considered.  Decidability of the verification
problem is retained by imposing syntactic restrictions on both the
system descriptions and the specifications to check. This effectively
limits the way in which new values introduced at every computational
step can be used by the system. Properties based on arithmetic
operators are considered in~\cite{DDV@tods2012}.
%
%AL: Don't understand the relationship
%% Interestingly, the latter work considers properties involving
%% arithmetic operators.
While there are elements of similarity between these approaches and
the one we put forward here, including the fact that the concrete
interpretation domain is replaced by an abstract one, the contribution
here presented has significant differences from these.
%% While similar in
%% the verification approach,
%% %%--in particular with respect to~\cite{DeutschHPV09}, where the original 
%%domain is replaced by a finite set of symbolic values--
%% our contribution has important differences, which make the 
%% setting and the results obtained 
%% incomparable.
%with those in~\cite{DeutschHPV09,DDV@tods2012}. 
Firstly, our setting is branching-time and not linear-time thereby
resulting in different expressive power. Secondly, differently from 
\cite{DeutschHPV09,DDV@tods2012}, we impose no constraints on nested
quantifiers. In contrast, \cite{DDV@tods2012} admits only universal
quantification over combinations of quantifier-free first-order
formulas. 
%% Thirdly, our results are independent from the particular
%% formalism used to specify the system, as our assumptions are stated at
%% the semantic level.
%AL: Rewritten but not sure if this was the point?? Please check if correct.
Thirdly, the abstraction results we present here are given in general
terms on the semantics of declarative programs and do not depend on a
particular presentation of the system. 
%% Given the above we see the
%% present contribution as complementary to~\cite{DeutschHPV09,DDV@tods2012}.

%% we consider a branching-time epistemic logic that is able to
%% express the knowledge state of actors interacting within the system.
%% Secondly, our specification language imposes no constraints on nested
%% quantifiers, while both references above assume some syntactic
%% restriction. In particular, the latter admits only universal
%% quantification over combinations of quantifier-free first-order
%% formulas.  Thirdly, our results are independent from the particular
%% formalism used to specify the system, as our assumptions are stated at
%% the semantic level.  Such differences make our contribution
%% complementary to those mentioned above.

More closely related to the present contribution
is \cite{bagheri-etal-corr-2012}, where conditions for the
decidability of the model checking problem for data-centric dynamic
systems, e.g., dynamic systems with relational states, are given. In
this case the specification language used is a first-order version of
the $\mu$-calculus. While our temporal fragment is subsumed by the
$\mu$-calculus, since we use indexed epistemic modalities as well as a
common knowledge operator, the two specification languages have
different expressive power. To retain decidability, like we do here,
the authors assume a constraint on the size of the states. However,
differently from the contribution here
presented, \cite{bagheri-etal-corr-2012} assume limited forms of
quantification whereby only individuals persisting in the system
evolution can be quantified over. In this contribution we do not make
this restriction.
%AL: rephrased a little what below.
%% In comparison to those above, the results presented
%% in \cite{bagheri-etal-corr-2012} are more closely related to ours.
%% These include an analysis of decidability conditions and the
%% development of an actual technique for the verification
%% of \emph{data-centric dynamic systems} (DCDS) --essentially, dynamic
%% systems with relational states-- against properties defined in a
%% first-order extension of the $\mu$-calculus. This logic, modulo
%% epistemic modalities, is a superset of the one here considered, i.e.,
%% FO-CTLK.  Even though the rich set of temporal modalities provided by
%% the $\mu$-calculus makes it strictly more expressive than the temporal
%% fragment of FO-CTLK, particularly with respect to linear-time
%% properties, the state-boundedness condition the authors assume limits
%% its applicability in verification.  Indeed, in order to guarantee
%% decidability,
%% \cite{bagheri-etal-corr-2012} introduces specific forms of 
%% quantification that consider only individuals
%% \emph{persisting} in the system evolution.
%% %%
%%
%% AL: With K_i the logic is not less expressive than plain \mu.
%%
%% In this paper, we do not required such a restriction but
%% use less expressive logics, thus obtaining 
%% a set of results that are essentially incomparable to 
%% those mentioned above.

Irrespective of what above, the most important feature that
characterises our work is that the set-up is entirely based on
epistemic logic and multi-agent systems. We use agents to represent
the autonomous services operating in the system and agent-based
concepts play a key role in the modelling, the specifications, and the
verification techniques put forward. Differently from all approaches
presented above we are not only concerned with whether the
artifact-system meets a particular specification. Instead, we also
wish to consider what knowledge the agents in the system acquire by
interacting among themselves and with the artifact-system during a
system run. Additionally, the abstraction methodology put forward is
modular with respect to the agents in the system. These features
enable us to give constructive procedures for the generation of finite
abstractions for artifact-centric programs associated with infinite
models. We are not aware of any work in the literature tackling any
of these aspects.

%% In these respects, this paper is radically different from
%% the existing state-of-the-art. 
 
%% Additionally, a critical feature that distinguishes our work from the
%% ones discussed above, is that autonomous agents, encoding the services
%% operating on the system, and agent-based concepts play a key role in
%% the modelling, the specifications, and the verification techniques put
%% forward.
%% are taken into account in the
%% verification process.
%% %%
%% Indeed, from the point of view of the artifact system, services are
%% autonomous entities that act based on their capabilities and the
%% knowledge they have about the system. In our framework, they are
%% modeled as agents, and the possibility of expressing properties about
%% their state of knowledge is guaranteed by the epistemic modalities
%% present in the logics we consider.

{\bf Relation to previous work by the authors.}  This paper combines
and expands preliminary results originally discussed
in~\cite{BelardinelliLP11},
\cite{BelardinelliLP11b},
 \cite{BelardinelliLP12}, and \cite{BelardinelliLP12b}.  In
particular, the technical set up of artifacts and agents is different
from that of our preliminary studies and makes it more natural to
express artifact-centric concepts such as views. Differently from our
previous attempts we here incorporate an operator for common knowledge
and provide constructive methods to define abstractions for all
notions of bisimulation. We also consider the complexity of the
verification problem, previously unexplored, and evaluate the
technique in detail on a case study.

%% we provide extended explanations and full proofs of all the results
%% presented in the referenced papers.  We consider an epistemic language
%% with common knowledge as well, we provide constructive methods to
%% define abstractions for all notions of bisimulation, and we present a
%% fully-fledge case study. Moreover, we consider partial verification techniques and analyse the complexity of the model checking problem.

\subsection{Scheme of the Paper}

The rest of the paper is organised as follows. In
Section~\ref{sec:aqis} we introduce Artifact-centric Multi-Agent
Systems (ACMAS), the semantics we will be using throughout the paper to
describe agents operating on an artifact system. In the same section
we put forward FO-CTLK, a first-order logic with knowledge and time to
reason about the evolution of the knowledge of the agents and the
artifact system. This enables us to propose a satisfaction relation
based on the notion of bounded quantification, define the model
checking problem, and highlight some properties of isomorphic states. 

An immediate result we will explore concerns the undecidability of the
model checking problem for ACMAS in their general
setting. Section~\ref{icsoc} is concerned with synctactical
restrictions on FO-CTLK that enable us to 
%% regain
%% decidability. Specifically, we show that restrictions exist that
%% enable us to 
guarantee the existence of finite abstractions of infinite-state
ACMAS, thereby making the model checking problem feasible by means of
standard techniques.

Section~\ref{kr} tackles restrictions orthogonal to those of
Section~\ref{icsoc} by focusing on 
%the identification of 
a subclass of ACMAS that admits a decidable model checking problem
when considering full FO-CTLK specifications. The key finding here is
that bounded and uniform ACMAS, a class identified by studying a
strong bisimulation relation, admit finite abstractions for any
FO-CTLK specification. The section concludes by showing that under
these restrictions the model checking problem is EXPSPACE-complete.

We turn our attention to artifact programs in Section~\ref{sec:as} by
defining the concept of artifact-centric programs. We define
them through natural, first-order preconditions and postconditions in
line with the artifact-centric approach. We give a semantics to them
in terms of ACMAS and show that their generated models are precisely
those
%AL: Not exactly ``those'' (as it's an injection) but I'd leave it for
%effectiveness of the lanauage
uniform ACMAS studied earlier in the paper. It follows that,
under some boundedness conditions, which can be naturally expressed, the
model checking problem for artifact-centric programs is
decidable and can be executed on finite models.

Section~\ref{sec:toy} reports a scenario from the artifact systems
literature. This is used to exemplify the technique by providing
finite abstractions that can be effectively verified.

We conclude in Section~\ref{sec:concl} where we consider the
limitations of the approach and point to further work.

%% Scheme of the paper.

%% \begin{enumerate}

%% \item Introduction

%% \item Verification of AC-MAS

%% \begin{enumerate}
%% \item Databases and First-order Logic
%% \item Artifact-Centric Multi-Agent Systems
%% \item Model Checking
%% \end{enumerate}

%% \item ICSOC

%% \begin{enumerate}
%% \item Bisimulation
%% \item Finite Abstraction
%% \item The Complexity of Model Checking FO-CTLK
%% \end{enumerate}

%% \item KR

%% \begin{enumerate}
%% \item Bisimulation
%% \item Finite Abstraction
%% \item The Complexity of Model Checking FO-CTLK
%% \end{enumerate}

%% %\item Undecidability of Bounded Uniform AC-MAS

%% \item Verification of GSM-based Artifact Systems
%% \begin{enumerate}
%% \item  Execution of GSM programs
%% \item Embedding of GSM Programs into AC-MAS
%% \item  Finite abstractions of GSM programs
%% \end{enumerate}

%% \item Case Study

%% \item Conclusions and Future Work 

%% \end{enumerate}

\section{Artifact-Centric Multi-Agent Systems}\label{sec:aqis}

%%\input{ac-mas.tex}

%input{ac-mas2.tex}

%TO-DO: 
%All ``relational'' or ``relation''?

In this section we formalise artifact-centric systems and state their
verification problem. As data and databases are important constituents
of artifact systems, our formalisation of artifacts relies on them as
underpinning concepts. However, as discussed in the previous section, we
here give prominence to agent-based concepts. As such, we define our
systems as comprising both the artifacts in the system as well as the
agents that interact with the system. 

A standard paradigm for logic-based reasoning about agent systems
is \emph{interpreted systems}~\cite{ParikhRamanujam85,fhmv:rak}. In
this setting agents are endowed with private local states and evolve
by performing actions according to an individual protocol. As data
play a key part, as well as to allow us to specify properties of the
artifact system, we will define the agents' local states as evolving
database instances. We call this formalisation artifact-centric
multi-agent systems (\aqis). \aqis\ enable us to represent naturally
and concisely concepts much used in the artifact paradigm such as the
one of \emph{view} discussed earlier.

Our specification language will include temporal-epistemic logic but
also quantification over a domain so as to represent the data. This is an
usual verification setting, so we will formally define the model
checking problem for this set up.

%% Specifically, we formulate the artifact system
%% as a 

%% state formally the model checking problem for
%% artifact-centric systems. To do so, we introduce the terminology that
%% will be used throughout the paper. Specifically, we first present
%% background material on databases and first-order logic. Then, we
%% provide a formalisation of artifact-centric systems as
%% artifact-centric multi-agent systems (\aqis). Finally, we state
%% precisely the model checking problem for \aqis\ with respect to a
%% first-order temporal-epistemic specification language.  Hereafter we
%% use basic definitions taken from~\cite{BelardinelliLP11b} to fix the
%% notation.
%% % used in the rest of
%% %the paper. 

\subsection{Databases and First-Order Logic}

As discussed above, we use databases as the basic building blocks for defining
the states of the agents and the artifact system. We here fix the
notation and terminology used. We refer to~ \cite{AbiteboulHV95} for
more details on databases.
\begin{definition}[Database Schemas] \label{def:dbschema}
  A {\em (relational) database schema} is a set $\D=\set{P_1/q_1,
    \ldots,P_n/q_n}$ of {\em relation 
    symbols} $P_i$, each associated with its arity $q_i\in\mathbb{N}$.
\end{definition}

Instances of database schemas are defined over 
%countable
interpretation domains.

\begin{definition}[Database Instances]
Given an interpretation domain $U$ and a database schema $\D$, a {\em
$\D$-instance} over $U$  is a mapping $D$ associating each relation
symbol $P_i \in \D$ with a {\em finite} $q_i$-ary relation over $U$,
i.e., $D(P_i)\subseteq U^{q_i}$.
%% Given a database schema $\D$, a {\em $\D$-instance} over an
%%  interpretation domain $U$ is a mapping $D$ associating each relation
%%  symbol $P_i \in \D$ with a {\em finite} $q_i$-ary relation over $U$,
%%  i.e., $D(P_i)\subseteq U^{q_i}$.
\end{definition}

The set of all $\D$-instances over an interpretation domain $U$ is
denoted by $\D(U)$. We simply refer to ``instances'' whenever the
database schema $\D$ is clear by the context.  The {\em active domain}
of an instance $D$, denoted as $\adom(D)$, is the set of all
individuals in $U$ occurring in some tuple of some predicate
interpretation $D(P_i)$.  Observe that, since $\D$ contains a finite
number of relation symbols and each $D(P_i)$ is finite, so is
$\adom(D)$.

To fix the notation, we recall the syntax of first-order formulas with
equality and no function symbols.  Let $\Vars$ be a countable set of
{\em individual variables} and $\Const$ be a finite set of {\em
individual constants}.  A \emph{term} is any element
$t\in \Vars\cup \Const$.
\begin{definition}[FO-formulas over $\D$]\label{def:fo}
Given a database schema $\D$, the formulas $\varphi$ of the
first-order language $\L_{\D}$ are defined by the following BNF
grammar:
\begin{eqnarray*}
\varphi & ::= & t=t'\mid P_i(t_1, \ldots ,t_{q_i})\mid \lnot \varphi \mid \varphi \ra \varphi \mid \forall x \varphi 
\end{eqnarray*}
where $P_i \in \D$, $t_1, \ldots ,t_{q_i}$ is a $q_i$-tuple of terms and $t,t'$ are terms.
\end{definition}
We assume ``$=$'' to be a special binary predicate with fixed obvious
interpretation. To summarise, $\L_\D$ is a first-order language with
equality over the relational vocabulary $\D$ with no function symbols
%AL: Do we need this?
% other than constants 
and with finitely many constant symbols from $\Const$.
Observe that considering a finite set of constants is not a
limitation. Indeed, since we will be working with finite sets of
formulas, $\Const$ can always be defined so as to be able to
express any formula of interest.
%% include all the (finitely many)
%% constants occurring in such formulas.
%AL original para is unclear. Formulas are built from constants, so
%don't understand business about ``extending''.
%%

In the following we use the standard abbreviations $\exists$, 
%$\top$, $\bot$, 
$\wedge$,
$\vee$, and $\neq$.
Also, free and bound variables are defined as standard.
For a formula $\varphi$ we denote the set of its variables as
$\vars(\varphi)$, the set of its free variables as $\free (\varphi)$,
and the set of its constants
%in $\varphi$ 
as $\const(\varphi)$.  We write
$\varphi(\vec x)$ 
to list
explicitly in arbitrary order all the free variables
$x_1,\ldots,x_\ell$ of $\varphi$.  By slight abuse of notation, we
treat $\vec x$ as a set, thus we write $\vec x=\free(\varphi)$.
A {\em sentence} is a formula with no free variables.  

Given an interpretation domain $U$ such that $\Const\subseteq U$,
an {\em assignment} is a function $\sigma: \Vars \mapsto U$.
For an assignment $\sigma$, we denote by $\sigma{x\choose u}$ the 
assignment such that:
\myi $\sigma{x\choose u}(x) = u$; and \myii 
$\sigma{x\choose u}(x') = \sigma(x')$, for every $x' \in \Vars$
different from $x$.
For convenience, we extend assignments to constants so that
$\sigma(t)=t$, if $t\in \Const$; that is, we assume a Herbrand
interpretation of constants.
We can now define the semantics of $\L_\D$.
\begin{definition}[Satisfaction of FO-formulas]\label{def:fo-sem}
Given a $\D$-instance $D$, an assignment $\sigma$,
and an FO-formula $\varphi\in\L_{\D}$, we inductively define
whether $D$ \emph{satisfies $\varphi$ under $\sigma$}, written $ (D, \sigma)
\models \varphi$, as follows:
\begin{tabbing}
 $ (D, \sigma)\models P_i(t_1,\ldots,t_{q_i})$ \ \ \ \ \ \ \ \=
 iff \ \ \ \ \ \ \= $\tup{\sigma(t_1),\ldots,\sigma(t_{q_i})}\in D(P_i)$\\ 
 $ (D, \sigma)\models t=t'$ \> iff \> $\sigma(t)=\sigma(t')$\\
 $ (D, \sigma)\models \lnot\varphi$ \> iff \> it is not the case that $ (D, \sigma) \models\varphi$\\
 $ (D, \sigma)\models \varphi\ra\psi$ \> iff \> $(D,\sigma)\models\lnot\varphi $ or $ (D, \sigma)\models\psi$\\
 $ (D, \sigma)\models \forall x\varphi$ \> iff \> for all
 $u\in \adom(D)$, we have that $ (D, \sigma{x \choose u}) \models\varphi$
\end{tabbing}

A formula $\varphi$ is {\em true} in $D$, written $D\models\varphi$,
iff $ (D, \sigma) \models \varphi$, for all assignments $\sigma$.
\end{definition}
Observe that we adopt an {\em active-domain} semantics, that is,
quantified variables range only over the active domain of $D$.
Also notice that constants are interpreted rigidly; so,
%i.e., 
two constants are equal if and only if they are syntactically the
same.  In the rest of the paper, we assume that every interpretation
domain includes $\Const$. Also, as a usual shortcut, we write
$(D, \sigma)\not \models \varphi$ to express that it is not the case
that $(D, \sigma) \models \varphi$.

Finally, we introduce the $\oplus$ operator on $\D$-instances
that will be used later in the paper.
Let the {\em primed version} of a database schema $\D$ 
be the schema $\D'=\set{P'_1/q_1, \ldots,P'_n/q_n}$ obtained from 
$\D$ by syntactically replacing each predicate symbol $P_i$ 
with its {\em primed version} $P_i'$ of the same arity.
\begin{definition}[$\oplus$ Operator]
%AL: made it clear d' is the instance of the primed version of the
%schema as above.
	%% Given two $\D$-instances $D$ and $D'$, $D\oplus D'$
	%% is the $(\D\cup\D')$-instance such that
	%% \myi~$D\oplus D'(P_i)=D(P_i)$;
	%% and \myii $D\oplus D'(P'_i)=D'(P_i)$.  
%AL: (also) can we remove i) ii) and state simply and? I did not want
%to do it in case one of these conditions is referred as i), ii) later
%on.
Given two $\D$-instances $D$ and $D'$, 
%of a database schema $\D$ 
%and a
%$\D'$-instance $D'$, where $\D'$ is the primed version of
%$\D$, 
we define $D\oplus D'$ as the $(\D\cup\D')$-instance such that
$D\oplus D'(P_i)=D(P_i)$ and $D\oplus D'(P'_i)=D'(P_i)$.
\end{definition}
Intuitively, the $\oplus$ operator defines a disjunctive join of the
two instances, where relation symbols in $\D$ are interpreted according
to $D$, while their primed versions are interpreted according to $D'$.

%AL: More? Examples in DBs??

\subsection{Artifact-Centric Multi-Agent Systems}\label{subsec:acmas}

In the following we introduce the semantic structures that we will use
throughout the paper. We define an artifact-centric multi-agent
system as a system comprising an environment representing all
interacting artifacts in the system and a finite set of agents
interacting with such environment. As agents have views of the
artifact state, i.e., projections of the status of particular
artifacts, we assume the building blocks of their private local states
also to be modelled as database instances. In line with the interpreted
systems semantics~\cite{fhmv:rak} not everything in the agents' states
needs to be present in the environment; a portion of it may be
entirely private and not replicated in other agents' states.
%
%% We now present the formal model for artifact-centric systems proposed
%% in \cite{BelardinelliLP12b}, which accounts for the role played by
%% agents in the execution of the system.
%We extend the
%definition of AC-MAS in~\cite{BelardinelliLP12} so as to incorporate
%parametric actions.
%%
So, we start by introducing the notion of \emph{agent}.
\begin{definition}[Agent]\label{def:ags}
%AL: U was outside before!
%Let $U$ be an interpretation domain. 
Given an interpretation domain $U$, an {\em agent} is a tuple
$A=\tup{\D,L,Act, Pr}$, where:
\begin{itemize}
\item $\D$ is the {\em local database schema};
\item $L\subseteq \D(U)$ is the set of {\em local states};
\item $Act$ is the finite set of \emph{action types} of the form  
$\alpha(\vec p)$, where $\vec p$ is the tuple of \emph{abstract
parameters};
\item $Pr:L\mapsto 2^{Act(U)}$ is the {\em local protocol
function}, where $Act(U)$ is the set of \emph{ground actions} of the
form $\alpha(\vec u)$ where $\alpha(\vec p)\in Act$ and $\vec u \in
U^{|\vec p|}$ is a tuple of \emph{ground parameters}.
\end{itemize}
\end{definition}

Intuitively, at a given time each agent $A$ is in some local state
$l \in \D(U)$ that represents all the information agent $A$ has at its
disposal. In this sense we follow~\cite{fhmv:rak} but require
that this information is structured as a database. Again, following
standard literature we assume that the agents are autonomous and
proactive and perform the actions in $Act$ according to the protocol
function $Pr$. In the definition above we distinguish between
``abstract parameters'' to denote the language in which particular
action parameters are given, and their concrete values or
``ground parameters''.

We assume that the agents interact among themselves and with an
environment comprising all artifacts in the system. As artifacts are
entities involving both data and process, we can see them as
collections of database instances paired with actions and governed by
special protocols. Without loss of generality we can assume the
environment state to be a single database instance including all
artifacts in the system. From a purely formal point of view this
allows us to represent the environment as a special agent. Of course,
in any specific instantiation the environment and the agents will be
rather different, exactly in line with the standard propositional
version of interpreted systems.

We can therefore define the synchronous composition of agents with the
 environment.
%% Further, agents can be thought of as modules that can be
%%  composed together to obtain \emph{artifact-centric multi-agent
%%  systems}.
% (\aqis) as defined below.
%%
\begin{definition}[Artifact-Centric Multi-Agent Systems]\label{def:sys-db-ag}
Given an interpretation domain $U$ and a set $Ag=\set{A_0,\ldots,A_n}$
of agents $A_i=\tup{\D_i,L_i,Act_i, Pr_i}$ defined on $U$, an {\em
artifact-centric multi-agent system} (or \aqis) is a tuple
$\P=\tup{\S, U, s_0,\tau}$ where:
\begin{itemize}
\item $\S\subseteq L_0\times\cdots\times L_n$ is the set of {\em
reachable global states};
\item $U$ is the {\em interpretation domain};
\item $s_0\in\S$ is the {\em initial global state};
\item  $\tau:\S\times Act(U) \mapsto 2^\S$
	is the {\em global transition function}, where
	$Act(U)=Act_0(U)\times\cdots\times Act_n(U)$ is the set
	of \emph{global (ground) actions}, and
	$\tau(\tup{l_0,\ldots,l_n},\tup{\alpha_0(\vec
	u_0),\ldots,\alpha_n(\vec u_n)})$ is defined iff
	$\alpha_i(\vec u_i) \in Pr_i(l_i)$ for every $i \leq
	n$.  \end{itemize}
\end{definition}

As we will see in later sections, \aqis\ are the natural extension of
interpreted systems to the first order to account for environments
constituted of artifact-centric systems. They can be seen as a
specialisation of quantified interpreted
systems~\cite{BelardinelliL12}, a general extension of
interpreted systems to the first-order case.

In the formalisation above the agent $A_0$ is referred to as the {\em
environment} $E$. The environment includes all artifacts in the system
as well as additional information to facilitate communication between the
agents and the hub, e.g., messages in transit etc.  At any given time
an \aqis\ is described by a tuple of database instances, representing
all the agents in the system as well as the artifact system. A single
interpretation domain for all database schemas is given. Note that
this does not break the generality of the representation as we can
always extend the domain of all agents and the environment before
composing them into a single \aqis. The global transition function
defines the evolution of the system through synchronous composition of
actions for the environment and all agents in the system.

Much of the interaction we are interested in modelling involves
message exchanges with payload, hence the action parameters, between
agents and the environment, i.e., agents operating on the
artifacts. However, note that the formalisation above does not
preclude us from modelling agent-to-agent interactions, as the global
transition function does not rule out successors in which only some
agents change their local state following some actions. Also observe
that essential concepts such as \emph{views} are naturally expressed
in \aqis\ by insisting that the local state of an agent includes part
of the environment's, i.e., the artifacts the agent has access to. Not
all \aqis\ need to have views defined, so it is also possible for the
views to be empty.
%AL: Anything else to clarify above??

Other artifact-based concepts such as lifecycles are naturally
expressed in \aqis. As artifacts are modelled as part of the
environment, a lifecycle is naturally encoded in \aqis\ simply as the
sequence of changes induced by the transition function $\tau$ on the
fragment of the environment representing the lifecycle in question. We
will show an example of this in Section~\ref{sec:toy}. 
  
%%   Usually, agent $A_0$ is referred to as the {\em
%% environment} $E$.
%%
Some technical remarks now follow. To simplify the notation, we denote
a global ground action as $\vec\alpha(\vec u)$, where
$\vec \alpha=\tup{\alpha_0(p_0),\ldots,\alpha_n(p_n)}$ and $\vec
u=\tup{\vec u_0,\ldots,\vec u_n}$, with each $\vec u_i$ of appropriate
size.
%% Notice that the agents in a \aqis\ share a common
%% interpretation domain $U$.
%%
%In the rest of the paper we assume the function $\tau$ to be computable.
%In particular, for any state $s\in \S$ and ground action 
%$\vec\alpha(\vec u)\in Act(U)$, we assume that
%$\card{\tau(s,\vec\alpha(\vec u))}$ is finite.
%%
%This will allow us to abstract from the way $\tau$ is 
%actually specified.
We define the {\em transition relation} $\to$ on $\S \times \S$ such
that $s \to s'$ if and only if there exists a $\vec \alpha(\vec u) \in
Act(U)$ such that $s'\in\tau(s,\vec \alpha(\vec u))$.
If $s \to s'$, we say that $s'$
is a \emph{successor} of $s$.
A \emph{run} $r$ from $s\in\S$ is an infinite
sequence $s^0 \to s^1\to \cdots $,  with $s^0=s$. 
%and $\to$ is the global transition relation.  
For $n\in \mathbb{N}$, we take $r(n)\doteq s^n$.  
A state $s'$ is 
\emph{reachable from $s$} if there exists a run $r$ from 
the global state 
$r(0) = s$ such that~$r(i) = s'$, for some $i\geq 0$.
We assume that the relation $\ra$ is serial. This can be easily
obtained by assuming that each agent has a \textsf{skip} action
enabled at each local state and that performing \textsf{skip} induces
no changes in any of the local states. We consider $\S$ to be the set
of states reachable from the initial state $s_0$.
For convenience we will use also the concept of {\em
temporal-epistemic} (t.e., for short) run.  Formally a t.e.~run $r$
from a state $s \in \S$ is an infinite sequence $s^0\leadsto
s^1\leadsto \ldots$ such that $s^0=s$ and $s^i \to s^{i+1}$ or
$s^i \sim_k s^{i+1}$, for some $k \in Ag$.  
%%AL: Repetition below?
%% For $n \in \mathbb{N}$,
%% $r(n) \doteq s^n$.  
A state $s'$ is said to be \emph{temporally-epistemically reachable}
(t.e.~reachable, for short) from $s$ if there exists a t.e.~run $r$
from the global state $r(0) = s$ such that for some $i\geq 0$
we have that $r(i) = s'$.  Obviously, temporal-epistemic runs include
purely temporal runs as a special case.

As in plain interpreted systems~\cite{fhmv:rak}, we say that two
global states $s=\tup{l_0,\ldots,l_n}$ and $s'=\tup{l'_0,\ldots,l'_n}$
are
%said to be
\emph{epistemically indistinguishable} 
for agent $A_i$, written $s \sim_i s'$, 
if $l_i = l'_i$. Differently from interpreted systems the local equality is
evaluated on database instances. 
%% This is consistent with the standard definition of knowledge as
%% identity of local states~\cite{fhmv:rak}.
%
%%Also, differently from interpreted systems, 
Also, notice that we admit $U$
to be infinite, thereby allowing the possibility of 
%infinite, so
%AL:?? 
%is
the set of states $\S$ to be infinite.  Indeed, unless we specify
otherwise, we will assume to be working with infinite-state \aqis.

Finally, for technical reasons it is useful to refer to a {\em global}
database schema $\D = \D_0\cup\cdots\cup \D_n$ of an \aqis.
Every global state $s=\tup{l_0,\ldots,l_n}$ is associated with the
(global) $\D$-instance $D_s\in\D(U)$ such that
$D_s(P_i)=\bigcup_{j \in Ag} l_j(P_i)$, for $P_i\in \D$.
We omit the subscript $s$ when $s$ is clear from the context and we
write $\adom(s)$ for $\adom(D_s)$.
Notice that for every $s \in \S$, the $D_s$ associated
with $s$ is unique, while the converse is not true in general.
\label{refref}

\subsection{Model Checking}
We now define the problem of verifying an artifact-centric multi-agent
system against a specification of interest.
%%
%Since the states of an artifact are characterised by their data content, the atomic 
%components of the specifications 
%need to capture relational properties pertaining to the states. 
%This, together with the fact that the domain of data may be infinite, 
%makes the problem substantially more challenging than standard 
%model checking~\cite{clarke-etal-book}.
%
By following the artifact-centric model, we wish to give data the same
prominence as processes. To deal with data and the underlying database
instances, our specification language needs to include first-order
logic. Further, we require temporal logic to describe the system
execution. Lastly, we use epistemic logic to express the information the
agents have at their disposal. Hence, we define a first-order temporal
epistemic specification language to be interpreted on AC-MAS. The
specification language will be used in Section~\ref{sec:as} to formalise
properties of artifact-centric programs.
%Differently from~\cite{BelardinelliLP11b}, 
%to reason about
%artifact-centric systems we use a first-order temporal-epistemic
%specification language with 
%we assume no restriction on the interaction between quantifiers and modalities.
%%

\begin{definition}[The Logic FO-CTLK] \label{FOCTLK}
The first-order CTLK (or FO-CTLK) formulas  $\varphi$ over a database
schema $\D$ are inductively defined by the following BNF:
%\begin{itemize}
%  \item if $\phi$ is an FO-formula in $\L_{\D}$, then $\phi \in \L_{\P}$;
%  \item if $\varphi$, $\varphi'$ are FO-CTLK formulas in $\L_{\P}$,
%    then $\neg \varphi$, $\varphi \to \varphi'$ and $\forall x
%    \varphi$ belong to $\L_{\P}$;
%  \item if $\psi$, $\psi'$ are FO-CTLK formulas in $\L_{\P}$ with no
%    free variables in any subformula of the form $t = t'$, then $AX
%    \varphi$, $A \varphi\U \varphi$, $E \varphi \U \varphi$, and $K_i
%    \varphi$ belong to $\L_{\P}$.
%\end{itemize}
\begin{eqnarray*}
\varphi & ::= & \phi \mid \lnot \varphi \mid
\varphi \ra \varphi \mid \forall x \varphi \mid AX \varphi \mid A
\varphi U \varphi \mid E \varphi U \varphi \mid K_i \varphi \mid C \varphi
\end{eqnarray*}
where $\phi\in\L_{\D}$ and $0 < i \leq n$.
\end{definition}
The notions of free and bound variables for FO-CTLK 
 extend straightforwardly from $\L_\D$, as well as functions 
$\vars$, $\free$, and $\const$. 
As usual, the temporal formulas $AX \varphi$ and $A
\varphi U \varphi'$ (resp.~$E \varphi U \varphi'$) are read as ``for
all runs, at the next step $\varphi$'' and ``for all runs (resp.~some
run), $\varphi$ until $\varphi'$''.  The epistemic formulas
$K_i \varphi$ and $C \varphi$ intuitively mean that ``agent $A_i$
knows $\varphi$'' and ``it is common knowledge among all agents that
$\varphi$'' respectively. 
We use the abbreviations $EX\varphi$, $AF\varphi$, $AG
\varphi$, $EF\varphi$, and $EG\varphi$ as standard.
Observe that free variables can occur within the scope of modal
operators, thus allowing for the unconstrained alternation of
quantifiers and modal operators, thereby allowing us to refer to
elements in different modal contexts.
We consider also a number of fragments of FO-CTLK. 
%FO-CTL is
%first-order CTL, obtained from FO-CTLK by removing the clauses for
%epistemic operators in Def.~\ref{FOCTLK}; FO-ECTL is the existential
%fragment of FO-CTL defined as follows:
%\begin{eqnarray*}
%\varphi & ::= & \phi \mid 
%\varphi \ra \varphi \mid \forall x \varphi \mid \exists x \varphi \mid EX \varphi \mid E \varphi U \varphi
%\end{eqnarray*}
%where $\phi\in\L_{\D}$.
%
%Finally, 
%AL: Restated the above. 
The {\em sentence atomic} version of FO-CTLK without epistemic
modalities, or SA-FO-CTL, is the language obtained from
Definition~\ref{FOCTLK} by removing the clauses for epistemic operators and
restricting atomic formulas to first-order \emph{sentences}, so that
no variable appears free in the scope of a modal operator:
\begin{eqnarray*}
\varphi & ::= & \phi \mid
\lnot\varphi\mid
\varphi \ra \varphi \mid
AX \varphi \mid A
\varphi U \varphi \mid E \varphi U \varphi 
\end{eqnarray*}
where $\phi\in\L_{\D}$ is a sentence.

We will consider also the language FO-ECTLK, i.e., the
existential fragments of FO-CTLK, defined as follows:
%%, where
%%only the existential operators $EX$, $EU$, $\bar{K}_i$ and $\bar{C}$ are
%%used in the definition of formulas.
%%
\begin{eqnarray*}
\varphi & ::= & \phi \mid \varphi\wedge\varphi \mid \varphi\vee\varphi\mid
\forall x \varphi \mid \exists x \varphi \mid EX \varphi \mid 
E \varphi U \varphi\mid \bar K_i \varphi \mid \bar C \varphi,
\end{eqnarray*}
where $\phi\in\L_\D$, 
with $\wedge$ and $\vee$ the standard abbreviations,
$\bar K_i \varphi\equiv \lnot K_i \lnot\varphi$, and 
$\bar C \varphi\equiv \lnot C \lnot\varphi$.

%%As to SA-FO-ECTL, we have, for $\phi\in\L_\D$ a sentence:
%%\begin{eqnarray*}
%%\varphi & ::= & \phi \mid \varphi\wedge\varphi \mid \varphi\vee\varphi
%%\mid EX \varphi \mid 
%%E \varphi U \varphi.
%%\end{eqnarray*}

%%=======
%%existential fragments of FO-CTLK and SA-FO-CTL respectively, where
%%only the existential operators $EX$, $EU$, $\bar{K}_i$ and $\bar{C}$ (the
%% duals of $K_i$ and $C$)  appear in formulas.
%%>>>>>>> .r1465

The semantics of FO-CTLK formulas is defined as follows.
\begin{definition}[Satisfaction for FO-CTLK]\label{def:fo-ctl-sem}
Consider an \aqis\ $\P$,
an FO-CTLK formula $\varphi$, 
a state $s \in \P$, and an assignment $\sigma$.
We inductively define whether $\P$ {\em satisfies} $\varphi$ 
in $s$ under $\sigma$, 
written $(\P,s,\sigma)\models \varphi$, as follows:
%%
%{\small
\begin{tabbing}
	 $(\P,s,\sigma)\models \varphi\ra\varphi'$ \ \ \ \ \ \ \=
	 iff \ \ \ \ \ \ \ \=  $(\P,D_s,\sigma)\not\models\varphi$ or $(\P,D_s,\sigma)\models\varphi'$\kill
	 $(\P,s,\sigma)\models \varphi$  \> iff \>
         $(D_s,\sigma) \models \varphi$, if $\varphi$ is an FO-formula\\
	 $(\P,s,\sigma)\models \lnot\varphi$ \>  iff \>  it is not the
	 case that $(\P,s,\sigma)\models\varphi$\\
	 $(\P,s,\sigma)\models \varphi\ra\varphi'$  \> iff \>  $(\P,s,\sigma)\models\lnot\varphi$ or $(\P,s,\sigma)\models\varphi'$\\
	 $(\P,s,\sigma)\models \forall x \varphi$  \> iff \>  for all $u \in \adom(s)$, $(\P,s,\sigma{x \choose u}) \models \varphi$\\
	 $(\P,s,\sigma)\models AX\varphi$ \>  iff \>  for all runs $r$,   if $r(0)=s$, then $(\P,r(1),\sigma)\models\varphi$\\
	 $(\P,s,\sigma)\models A\varphi U\varphi'$ \>  iff \>  for all
	 runs $r$, if $r(0)=s$, then there is $k\geq 0$ s.t.~$(\P,r(k),\sigma)
\models\varphi'$,\\ \> \> and for all $j$, $0\leq j< k$ implies $(\P,r(j),\sigma)\models\varphi$\\
	 $(\P,s,\sigma)\models E\varphi U\varphi'$ \>  iff \>  for
	 some run $r$, $r(0)=s$ and there is $k\geq 0$ s.t.~$(\P,r(k),\sigma)
\models\varphi'$,\\ \> \>  and for all $j$, $0\leq j< k$ implies $(\P,r(j),\sigma)\models\varphi$\\
	 $(\P,s,\sigma)\models K_i \varphi$ \>  iff \>  
for all $s'$, $s \sim_i s'$ implies $(\P,s',\sigma)\models\varphi$\\
	 $(\P,s,\sigma)\models C \varphi$ \>  iff \>  
for all $s'$, $s \sim s'$ implies $(\P,s',\sigma)\models\varphi$
\end{tabbing}
where $\sim$ is the transitive closure of $\bigcup_{1\dots n} \sim_i$.
%}
\end{definition}
A formula $\varphi$ is said to be \emph{true} at a state $s$, written $(\P, s) \models
\varphi$, if $(\P,s,\sigma)\models \varphi$ for all assignments $\sigma$. Moreover,
$\varphi$ is said to be {\em true} in $\P$, written
$\P \models \varphi$, if $(\P, s_0) \models \varphi$.

A key concern in this paper is to explore the model checking of \aqis\
against first-order temporal-epistemic specifications.
\begin{definition}[Model Checking]
Model checking an \aqis\ $\P$ against an FO-CTLK formula $\varphi$
amounts to finding an assignment $\sigma$ such that
$(\P,s_0, \sigma) \models\varphi$.
\end{definition}
%AL: ?
%% Formally, the problem can be stated as follows:
%% \begin{quote}
%% Given an \aqis\ $\P$ and
%% an FO-CTLK formula $\varphi$,
%% find an assignment
%% $\sigma$ such that $(\P,s_0, \sigma) \models\varphi$.
%% \end{quote}
%% %%
%AL: ?
%% Even in the context of a purely temporal language, the problem is
%% decidable whenever the domain $U$ is finite, and undecidable in
%% general \cite{BelardinelliLP11b}.
%%
%% Since we are here operating on a strictly
%% stronger language than the one of \cite{BelardinelliLP11b}, 
%% the following result immediately follows.
%%
It is easy to see that whenever $U$ is finite the model checking problem
is decidable as $\P$ is a finite-state system. In general this is not
the case. 
\begin{theorem}\label{th:undec}
The model checking problem for \aqis\ w.r.t.~FO-CTLK is undecidable.
\end{theorem}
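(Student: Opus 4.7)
The plan is to establish undecidability by reduction from the halting problem for deterministic Turing machines, which is classically undecidable. Given such a machine $M$, I will construct an \aqis\ $\P_M$ and an FO-CTLK formula $\varphi_{\mathit{halt}}$ such that $\P_M \models \varphi_{\mathit{halt}}$ if and only if $M$ halts on the empty input. The construction exploits two features of the framework: the interpretation domain $U$ may be infinite, and the transition function $\tau$ can produce successors in which previously unused elements of $U$ appear in the database.

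First I would fix $U$ to be a countably infinite set that contains the finite alphabet of $M$, a constant $c_{\mathit{halt}}$ naming the halting control state, and an inexhaustible reservoir of ``position identifiers''. The environment $E$ has a database schema with a unary predicate $Q$ (current control state), a unary predicate $H$ (current head position), a binary predicate $T$ (tape contents, mapping positions to symbols) and a binary predicate $S$ (a successor relation on the positions used so far). A single agent $A_1$ has one action type $\mathit{step}(p)$ with an abstract parameter $p$. The global transition function $\tau$ is defined so that from a state encoding configuration $C$, the ground action $\mathit{step}(u)$ is enabled, and the unique successor encodes the configuration $C'$ obtained by applying one step of $\delta_M$ to $C$, where the parameter $u \in U$ is used as a fresh identifier for a newly visited tape cell whenever $H$ is about to move beyond the current $S$-maximum. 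The initial state $s_0$ encodes $M$'s initial configuration, and $\S$ is the set of reachable states from $s_0$.

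The specification is
\[
\varphi_{\mathit{halt}} \;=\; EF\,\exists x\,(Q(x) \wedge x = c_{\mathit{halt}}),
\]
which is a sentence of FO-CTLK. By construction, every run of $\P_M$ that consistently chooses a genuinely fresh $u$ when required simulates the deterministic computation of $M$, and the branching introduced by different choices of fresh values only produces isomorphic continuations; no run encodes a ``spurious'' halt. Hence $\P_M \models \varphi_{\mathit{halt}}$ iff the computation of $M$ reaches its halting state, and model checking FO-CTLK on \aqis\ is therefore undecidable.

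I expect the main obstacle to be the faithful encoding of the freshness requirement using only the resources in Definition~\ref{def:sys-db-ag}. Because $\tau$ is a function of state and ground action, freshness cannot be enforced inside a single action; instead it must be captured by the definition of $\tau$ itself, which I arrange to be undefined on any ground $\mathit{step}(u)$ whose parameter $u$ is required to be new but already appears in the active domain of $s$ (this test is first-order definable over $D_s$). A secondary technical point is ensuring that the simulation is deterministic modulo fresh names, so that the existential quantifier $EF$ really witnesses halting of $M$ rather than an artefact of the nondeterminism in $\tau$; this is handled by noting that isomorphic AC-MAS runs satisfy the same FO-CTLK formulas under any assignment, as will be used repeatedly later in the paper.
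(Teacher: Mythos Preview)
Your proposal is correct and follows essentially the same route as the paper's proof sketch: reduce the halting problem by encoding a Turing machine's configurations as database instances of an \aqis\ and checking a reachability formula for the halting state (the paper refers to Theorem~4.10 of \cite{DeutschSV07} for the detailed construction, which your encoding with $Q$, $H$, $T$, $S$ and a fresh-parameter action closely mirrors). The only minor point to tidy up is seriality of $\to$ after $M$ halts, which you can handle with a self-loop at halting configurations.
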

\begin{proofsk}
%AL: I'd keep it - it's short anyway and focuses the attention. 
%\notem{FP: I would remove this proof sketch.}
This can be proved by showing that every Turing machine $T$ whose tape
contains an initial input $I$ can be simulated by an artifact system
$\P_{T,I}$. The problem of checking whether $T$ terminates on
that particular input can be reduced to checking whether
$\P_{T,I}\models\varphi$, where $\varphi$ encodes the termination
condition.  The detailed construction is similar to that of
Theorem~4.10 of \cite{DeutschSV07}.
\end{proofsk}
Given the general setting in which the model checking problem is
defined above, the negative result is not surprising. In the following
we identify syntactic and semantic restrictions for which the problem
is decidable.
%% This result is obviously negative. However, in this paper we identify
%% syntactic and semantic restrictions, covering interesting cases, by
%% which model checking is decidable.
%
%AL: Given the subsectioning I think this is best at the beginning of
%3.
%% Specifically, in the next section we restrict the specification
%% language to sentence atomic FO-CTL, while considering the class of
%% all \aqis. Whereas in Section~\ref{kr} we consider the full FO-CTLK,
%% while suitably restricting the \aqis\ of interest.  Results along the
%% first line have appeared in \cite{BelardinelliLP12}, even though in a
%% different setting, whereas the second line was explored
%% in \cite{BelardinelliLP11b}. We begin with the former.

\subsection{Isomorphisms}

We now investigate the concept of isomorphism on \aqis. This will be
needed in later sections to produce finite abstractions of
infinite-state \aqis.  In what follows let $\P = \tup{\S, U,
s_{0}, \tau}$ and $\P' = \tup{\S', U', s'_{0}, \tau}$ be two \aqis.
%AL (possibly the same),
%AL I think we need to state s\in S all the time.
%% and assume, unless stated differently, that $s = \tup{l_0, \ldots,
%% l_n} \in \S$ and $s' = \tup{l'_0, \ldots, l'_n} \in \S'$.
%
%To show that abstractions of AC-MAS preserve FO-CTL formulas, we
%%To introduce the notion of {\em bisimilarity}
%%between \aqis\ 
%\cite{BelardinelliLP11b},
%%
%AL Not needed. We next state when two \aqis\ states are {\em isomorphic}.
%% with respect to a set of constants $C$.

\begin{definition}[Isomorphism] \label{def:ad-iso}
Two local states $l, l'\in \D(U)$ are {\em isomorphic}, written
$l \iso l'$, iff there exists a bijection
$\iota:\adom(l)\cup \Const \mapsto \adom(l')\cup\Const$ such that:
\begin{itemize}
\item[\myi] 
%\myi 
$\iota$ is the identity on $\Const$;  
\item[\myii] 
%\myii 
for every $P_i \in \D$, $\vec u \in U^{q_i}$, 
%$j \in Ag$, 
we have that
$\vec{u} \in l(P_i)$ iff $\iota(\vec u) \in l'(P_i)$.
\end{itemize}
When this is the case, we say that $\iota$ is a \emph{witness} for
$l \iso l'$.

Two global states $s \in \S$ and $s' \in \S'$ are {\em isomorphic}, written
$s \iso s'$, iff there exists a bijection
$\iota:\adom(s)\cup \Const \mapsto \adom(s')\cup\Const$ such that
%\begin{itemize}
%\item[\myi] 
%\myi 
%$\iota$ is the identity on $\Const$;  
%\item[\myii] 
%\myii 
for every 
%$P_i \in \D$, $\vec u \in U^{q_i}$, 
$j \in Ag$, $\iota$ is a witness for $l_j \iso l'_j$.
%we have that
%$\vec{u} \in l_j(P_i)$ iff $\iota(\vec u) \in l'_j(P_i)$.
%\end{itemize}
\end{definition}
Notice that isomorphisms preserve the constants in $\Const$ 
as well as predicates in the local states up to renaming of the
corresponding terms. 
%
%AL as well as the structure of 
%% the interpretation of the relation symbols in $\D$.
%
Any function $\iota$ as above is called a \emph{witness} for $s \iso
s'$.  Obviously, the relation $\iso$ is an equivalence relation.
%%, as it is trivially reflexive, symmetric and transitive. 
%%
%AL: Below f(s) may not be in S'. TO FIX.
Given a function $f: U \mapsto U'$ defined on $\adom(s)$,
%that is 
%total on $\adom(D_1)$, 
$f(s)$ denotes the interpretation in $\D(U')$
% in $\S'$
obtained from $s$ by renaming each
$u \in \adom(s)$ as $f(u)$.
%Observe that 
If $f$ is also injective (thus invertible) and the identity on
$\Const$, then $f(s) \iso s$.
%%
%Further, we introduce a notion of isomorphism on local states 
%by defining that a local states $l \in L$ is {\em isomorphic} to $l'\in L'$, or
%$l \iso l'$, iff there exists a bijection
%$\iota:\adom(l)\cup \Const \mapsto \adom(l')\cup\Const$, which 
%\myi is the identity on $\Const$; and
%\myii for every $P_i \in \D$, $\vec u \in U^{q_i}$, 
%$\vec{u} \in l(P_i)$ iff $\iota(\vec u) \in l'(P_i)$.

%AL Very convoluted....
%% Although isomorphic states have the same relational structure, being
%% isomorphic, while necessary, is not a sufficient condition for two
%% states to satisfy the same FO-formulas.  Indeed, satisfaction depends
%% also on the values assigned to free variables.

%%\textbf{FP: Example starts here}
{\bf Example.}
For an example of isomorphic states, consider an 
agent with local database schema $\D=\set{P_1/2, P_2/1}$,
let $U=\set{a,b,c,\ldots}$ be an interpretation domain, and
fix the set $C=\set{b}$ of constants. Let $l$ be the local state
such that 
$l(P_1)=\set{\tup{a,b},\tup{b,d}}$ and 
$l(P_2)=\set{a}$ (see Figure~\ref{fig:iso}).  
Then, the local state $l'$
such that 
$l'(P_1)=\set{\tup{c,b},\tup{b,e}}$ and 
$l'(P_2)=\set{c}$ is isomorphic to $l$. 
This can be easily seen by considering 
the isomorphism $\iota$, where:
$\iota(a)=c$, $\iota(b)=b$, and $\iota(d)=e$.
On the other hand, the state $l''$ where
$l''(P_1)=\set{\tup{f,d},\tup{d,e}}$ and
$l''(P_2)=\set{f}$ is not isomorphic to $l$. 
Indeed,  although a bijection exists that 
``transforms'' $l$ into $l''$, it is easy to see that  
none can be such that  $\iota'(b)=b$.
\begin{figure}
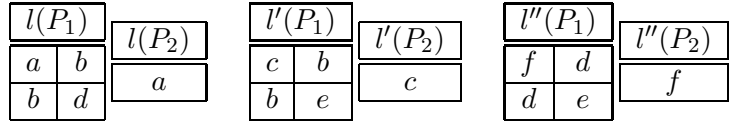

	\centering
	\begin{tabular}{|c|c|}
	\hline
	\multicolumn{2}{|c|}{$l(P_1)$}\\
	\hline\hline
	$a$ & $b$ \\
	\hline
	$b$ & $d$ \\
	\hline
	\end{tabular}
	\begin{tabular}{|c|}
	\hline
	$l(P_2)$\\
	\hline\hline
	$a$ \\
	\hline
	\end{tabular}
	~~~~
	\begin{tabular}{|c|c|}
	\hline
	\multicolumn{2}{|c|}{$l'(P_1)$}\\
	\hline\hline
	$c$ & $b$ \\
	\hline
	$b$ & $e$ \\
	\hline
	\end{tabular}
	\begin{tabular}{|c|}
	\hline
	$l'(P_2)$\\
	\hline\hline
	$c$ \\
	\hline
	\end{tabular}
	~~~~
	\begin{tabular}{|c|c|}
	\hline
	\multicolumn{2}{|c|}{$l''(P_1)$}\\
	\hline\hline
	$f$ & $d$ \\
	\hline
	$d$ & $e$ \\
	\hline
	\end{tabular}
	\begin{tabular}{|c|}
	\hline
	$l''(P_2)$\\
	\hline\hline
	$f$ \\
	\hline
	\end{tabular}
	\caption{Examples of isomorphic and non-isomorphic local states.\label{fig:iso}}
\end{figure}

%%\textbf{FP: end of example}

Note that, while isomorphic states have the same relational structure,
two isomorphic states do not necessarily satisfy the same FO-formulas
as satisfaction depends also on the values assigned to free variables.
To account for this, we introduce the following notion.
\begin{definition}[Equivalent assignments]\label{def:eq-as}
Given two
% isomorphic 
states $s\in \S$ and $s'\in \S'$, and a set of variables $V \subseteq
Var$,
two assignments $\sigma : Var \mapsto U$ and $\sigma' : Var \mapsto U'$ are 
\emph{equivalent for $V$ w.r.t.~$s$ and $s'$}
%(w.r.t.~$D_1$ and $D_2$)
 iff 
there exists a bijection
$\gamma:\adom(s)\cup \Const\cup \sigma(V)
\mapsto\adom(s')\cup \Const\cup \sigma'(V)$
such that:
\begin{itemize}
\item[\myi] 
%\myi 
$\gamma|_{\adom(s)\cup \Const}$ is a witness for $s \iso s'$; 
\item[\myii] 
%\myii 
$\sigma'|_V= \gamma \circ \sigma|_V$.
\end{itemize}
\end{definition}
Intuitively, equivalent assignments preserve both the (in)equalities of
the variables in $V$ and
%AL:??
%%  the relation of the assigned values 
%% with the elements in $s$ and $s'$. 
the constants in $s, s'$ up to renaming. Note that, by definition, the
above implies that $s,s'$ are isomorphic.
We say that two assignments are \emph{equivalent for an FO-CTLK
formula} $\varphi$, omitting the states $s$ and $s'$ when it is clear
from the context, if these are equivalent for $\free(\varphi)$.
%\end{comment}

%% We can now prove the following result, which states that
%% %under equivalent assignments, 
%% isomorphic instances 
%% satisfy exactly the same FO-formulas.
We can now show that isomorphic states satisfy exactly the same
FO-formulas.
\begin{proposition}\label{prop:iso-inst-eq}
Given two isomorphic states $s\in \S$ and $s'\in \S'$, an FO-formula
$\varphi$, and two assignments $\sigma$ and $\sigma'$ equivalent for
$\varphi$, we have that
\begin{eqnarray*}
	(D_{s}, \sigma) \models \varphi & \text{iff} &
	(D_{s'}, \sigma') \models \varphi
\end{eqnarray*}
\end{proposition}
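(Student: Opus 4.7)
The plan is to prove the statement by structural induction on the FO-formula $\varphi$. Let $\gamma:\adom(s)\cup\Const\cup\sigma(V)\mapsto\adom(s')\cup\Const\cup\sigma'(V)$, with $V=\free(\varphi)$, be the bijection witnessing the equivalence of $\sigma,\sigma'$, so that $\iota\doteq\gamma|_{\adom(s)\cup\Const}$ is a witness for $s\iso s'$ and $\sigma'|_V=\gamma\circ\sigma|_V$. The invariant I would maintain through the induction is that at each subformula the same (or a trivially restricted) $\gamma$ continues to witness equivalence for the relevant pair of assignments, so the inductive hypothesis can always be applied. A useful preliminary remark is that, by the Herbrand interpretation of constants and the identity of $\gamma$ on $\Const$, one has $\sigma'(t)=\gamma(\sigma(t))$ for \emph{every} term $t$ occurring in $\varphi$, not just for free variables.

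For the atomic cases the argument is routine. For equality $t=t'$, injectivity of $\gamma$ combined with the remark above immediately gives $\sigma(t)=\sigma(t')$ iff $\sigma'(t)=\sigma'(t')$. For a predicate atom $P_i(t_1,\ldots,t_{q_i})$, clause \myii of Definition~\ref{def:ad-iso} yields $\tup{\sigma(t_1),\ldots,\sigma(t_{q_i})}\in D_s(P_i)$ iff $\tup{\gamma(\sigma(t_1)),\ldots,\gamma(\sigma(t_{q_i}))}\in D_{s'}(P_i)$, which is exactly what is required once the remark is applied componentwise.

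The boolean cases $\lnot\varphi$ and $\varphi\ra\varphi'$ are immediate because $\free$ does not grow when passing to subformulas, so the same $\gamma$ still witnesses equivalence of $\sigma,\sigma'$ for each subformula and the inductive hypothesis applies directly.

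The only delicate case, and the main obstacle, is the universal quantifier $\forall x\,\varphi$. Here $V=\free(\forall x\,\varphi)=\free(\varphi)\setminus\{x\}$. Given $u\in\adom(s)$, I would set $u'\doteq\iota(u)\in\adom(s')$ and verify that the pair $\sigma{x\choose u},\sigma'{x\choose u'}$ is equivalent for $\free(\varphi)$ using $\gamma$ itself as witness: no extension is required because $u\in\adom(s)$ already lies in the domain of $\gamma$, and by construction $\sigma'{x\choose u'}(y)=\gamma(\sigma{x\choose u}(y))$ both for $y=x$ (both sides equal $u'$) and for $y\in V\setminus\{x\}$ (by hypothesis on $\sigma,\sigma'$). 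The inductive hypothesis then gives $(D_s,\sigma{x\choose u})\models\varphi$ iff $(D_{s'},\sigma'{x\choose u'})\models\varphi$. Since $\iota$ is a bijection between $\adom(s)\cup\Const$ and $\adom(s')\cup\Const$, quantifying $u$ over $\adom(s)$ is equivalent to quantifying $u'$ over $\adom(s')$, so the two universal claims coincide. The subtle point is precisely that no fresh witness needs to be introduced; this hinges on the active-domain semantics of $\forall$, which forces $u\in\adom(s)$ and hence $u$ into the domain of $\gamma$.
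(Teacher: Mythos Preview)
Your proposal is correct and follows essentially the same approach as the paper's own proof: structural induction on $\varphi$, with the atomic cases handled via the bijection $\gamma$ and the isomorphism condition, the boolean cases immediate, and the quantifier case handled by pairing $u\in\adom(s)$ with $\iota(u)\in\adom(s')$ and observing that $\sigma{x\choose u}$ and $\sigma'{x\choose\iota(u)}$ remain equivalent for the subformula. Your write-up is in fact somewhat more explicit than the paper's (the preliminary remark $\sigma'(t)=\gamma(\sigma(t))$ and the observation that the active-domain semantics ensures $u$ already lies in the domain of $\gamma$), but the argument is the same.
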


\begin{proof}
The proof is by induction on the structure of $\varphi$.  Consider the
base case for the atomic formula $\varphi \equiv P(t_1, \ldots, t_k)$.
Then $(D_s,\sigma) \models \varphi$ iff
$\langle \sigma(t_1), \ldots, \sigma(t_k) \rangle \in D_s(P)$. Since
$\sigma$ and $\sigma'$ are equivalent for $\varphi$, and $s \iso s'$,
this is the case iff
$\langle \sigma'(t_1), \ldots, \sigma'(t_k) \rangle \in D_{s'}(P)$,
that is, $(D_{s'},\sigma') \models \varphi$. The base case for
$\varphi\equiv t=t'$ is proved similarly, by observing that the
satisfaction of $\varphi$ depends only on the assignments, and that
the function $\gamma$ of Def.~\ref{def:eq-as} is a bijection, thus all
the (in)equalities between the values assigned by $\sigma$ and
$\sigma'$ are preserved. This is sufficient to guarantee that
$\sigma(t)=\sigma(t')$ iff $\sigma'(t)=\sigma'(t')$.
        The inductive step for the propositional connectives is
        straightforward.
	Finally, if $\varphi\equiv \forall y \psi$, then
        $(D_s,\sigma) \models \varphi$ iff for all
        $u \in \adom(s)$, $(D_s,\sigma{y \choose
        u}) \models \psi$. Now
	consider the witness  $\iota = \gamma|_{\adom(s) \cup \Const}$ for
        $s \iso s'$, where $\gamma$ is as in Def.~\ref{def:eq-as}. 
%	For every $u\in\adom(D_1)$,	
        We have that 
	$\sigma{y \choose u}$ and $\sigma'{y \choose \iota(u)}$
	are equivalent for $\psi$.
By induction hypothesis
	$(D_s,\sigma{y \choose u}) \models \psi$ iff
	$(D_{s'},\sigma'{y \choose \iota(u)}) \models \psi$. Since
	$\iota$ is a bijection, this is the case iff 
	for all $u'\in\adom(s')$, $(D_{s'},\sigma'{y \choose u'}) \models \psi$, 
	i.e.,
	$(D_{s'},\sigma') \models \varphi$.  
\end{proof}

This leads us to the following result.
\begin{corollary} \label{prop:iso-inst-eq2}
Given two isomorphic states $s\in \S$ and $s'\in \S'$ and an FO-sentence
	$\varphi$,
%	and two assignments $\sigma$ and $\sigma'$ equivalent 
%        for $\varphi$, 
we have that
\begin{eqnarray*}
	D_{s} \models \varphi & \text{iff} &
	D_{s'} \models \varphi
\end{eqnarray*}
\end{corollary}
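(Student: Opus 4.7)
The plan is to derive the corollary as a direct consequence of the preceding Proposition by specialising it to sentences. Since a sentence has no free variables, the dependency on the assignment in the satisfaction relation vanishes, and the equivalence-of-assignments hypothesis required by the proposition becomes trivial to discharge from the isomorphism $s \iso s'$ alone.

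Concretely, I would proceed as follows. First, recall that $D_s \models \varphi$ abbreviates $(D_s,\sigma) \models \varphi$ for every assignment $\sigma$, and observe that when $\varphi$ is a sentence a straightforward induction (or direct inspection of Def.~\ref{def:fo-sem}) shows that $(D_s,\sigma) \models \varphi$ iff $(D_s,\sigma'') \models \varphi$ for \emph{any} two assignments $\sigma,\sigma''$, since only the values of $\sigma$ on $\free(\varphi) = \emptyset$ could influence the truth value. Thus it suffices to show that $(D_s,\sigma) \models \varphi$ iff $(D_{s'},\sigma') \models \varphi$ for \emph{some} pair $\sigma,\sigma'$ of our choosing.

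Next, I would invoke the isomorphism $s \iso s'$ to pick a witness $\iota : \adom(s) \cup \Const \to \adom(s') \cup \Const$, and then take $\sigma,\sigma'$ to be any assignments at all (for instance, both mapping every variable to some fixed constant). Since $\free(\varphi) = \emptyset$, the condition of Def.~\ref{def:eq-as} collapses to the existence of a bijection $\gamma : \adom(s) \cup \Const \to \adom(s') \cup \Const$ whose restriction to $\adom(s) \cup \Const$ witnesses $s \iso s'$, and the compatibility equation $\sigma'|_V = \gamma \circ \sigma|_V$ is vacuous for $V = \emptyset$. Setting $\gamma = \iota$ therefore shows that $\sigma$ and $\sigma'$ are equivalent for $\varphi$ w.r.t.~$s$ and $s'$.

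Applying Proposition~\ref{prop:iso-inst-eq} to $\varphi$, $\sigma$, and $\sigma'$ then yields $(D_s,\sigma) \models \varphi$ iff $(D_{s'},\sigma') \models \varphi$. Combining this with the assignment-independence of sentence satisfaction noted above gives $D_s \models \varphi$ iff $D_{s'} \models \varphi$, which is the desired conclusion. I do not anticipate any real obstacle: the only mild subtlety is to recognise that the equivalence-of-assignments hypothesis in Def.~\ref{def:eq-as} is automatically fulfilled once the free-variable set is empty and the states are isomorphic, so the corollary reduces to a one-line specialisation of the proposition.
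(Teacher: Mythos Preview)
Your proposal is correct and takes essentially the same approach as the paper: both reduce immediately to Proposition~\ref{prop:iso-inst-eq} by noting that when $\free(\varphi)=\emptyset$ the equivalence-of-assignments condition of Def.~\ref{def:eq-as} collapses to the existence of an isomorphism witness. The only cosmetic difference is that the paper argues by contrapositive (starting from a falsifying assignment $\sigma$ and transporting it via $\sigma'=\iota\circ\sigma$), whereas you phrase it directly using assignment-independence of sentences; the underlying idea is identical.
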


\begin{proof} From right to left. Suppose, by contradiction, that
$D_{s} \not \models \varphi$. Then there exists an assignment $\sigma$
  s.t.~$(D_{s},\sigma) \not \models \varphi$.  Since $\free(\varphi)
  = \emptyset$, if $\iota$ is a witness for $s \simeq s'$, then the
  assignment $\sigma' = \iota \circ \sigma$ is equivalent to $\sigma$
  for $s$ and $s'$.  By Proposition~\ref{prop:iso-inst-eq} we have
  that $(D_{s'},\sigma') \not \models \varphi$, that is,
% Finally,
%since $\varphi$ is
%  an FO-sentence, 
%we obtain that 
$D_{s'} \not \models \varphi$. The case from left to right can be shown
similarly.
\end{proof}

Thus, isomorphic states cannot be distinguished by FO-sentences. This
enables us to use this notion when defining simulations as we will see
in the next section.

\section{Abstractions for Sentence Atomic FO-CTL} \label{icsoc}

In the previous section we have observed that model checking
\aqis\ against FO-CTLK is undecidable in general. So, it is clearly of interest to 
identify decidable settings. In what follows we introduce two main
results. The first, presented in this section, identifies restrictions
on the language; the second, presented in the next section, focuses on
semantic constraints. While these cases are in some sense orthogonal to
each other, we show that they both lead to decidable model checking
problems. They are also both carried out on a rather natural subclass
of \aqis\ that we call \emph{bounded}, which we identify below. Our
goal for proceeding in this manner is to identify finite abstractions
of infinite-state \aqis\ so that verification of programs, that
admit \aqis\ as models, can be conducted on them, rather than on
infinite-state \aqis . We will see this in detail in
Section~\ref{sec:as}.

%% In this section we show that under appropriate assumptions,
%% namely \emph{boundedness}, the model checking problem for
%% SA-FO-CTL formulas w.r.t.~\aqis\ is decidable.
%% %%
%% %AL: What below is just misleading I think
%% We do so by proving that instead of checking the concrete, 
%% infinite-state \aqis\ $\P$, we can perform the verification step 
%% on an abstract, finite-state $\P'$,
%% that can effectively be derived from $\P$, and that is 
%% equivalent to $\P$ with respect to satisfaction of SA-FO-CTL formulas.
%% %%

Given our aims we begin by defining a first notion of bisimulation in
the context of \aqis. Bisimulations will be used to show that all
bounded \aqis\ admit a finite, bisimilar, abstraction that satisifies
the same SA-FO-CTL specifications as the original \aqis .  Also in
what follows we assume that $\P=\tup{\S, U, s_0,\tau}$ and
$\P'=\tup{\S', U', s'_0,\tau'}$.

%% We start by defining a notion of
%% %\aqis\ 
%% bisimilarity, i.e., a relation that captures the fact that two \aqis\
%% have the same structure with respect to 
%% %both 
%% temporal 
%% %and epistemic
%% transitions.
%% %%
%% Then, we 
%% %identify the class of so-called uniform \aqis, i.e.,
%% %those systems whose behavior does not depend on the actual data 
%% %contained in their state, but only on the structure, and  
%% prove that
%% %uniform and 
%% SA-bisimilar \aqis\ satisfy exactly the same SA-FO-CTL
%% formulas.
%%
%% Finally, with such a result at hand, we introduce the notion of
%% system abstraction, and show that if the 
%% \aqis\ $\P$ we started from is bounded, i.e., contains a 
%% bounded number of distinct elements in its states, 
%% then there exists a finite abstraction $\P'$ SA-bisimilar to $\P$,
%% thus satisfying 
%% exactly the same SA-FO-CTL formulas.
%%
%%We assume $\Const$ to be finite, and consider, w.l.o.g.,
%%only formulas $\varphi$ such that $\const(\varphi)\subseteq \Const$.
%%

%%\subsection{Temporal Bisimulations} \label{SA-bisim}

%% %\begin{comment}
%% Next, we generalise to the entire \aqis\ the intuition behind 
%% Proposition~\ref{prop:iso-inst-eq2}, and define 
%% (sufficient) conditions that guarantee two \aqis\ to be indistinguishable
%% by SA-FO-CTL formulas. 
%% %%
%% Obviously, because satisfaction of such formulas depends on the 
%% whole \aqis\ --not just on a single state-- we  
%% need to capture the fact that two \aqis\ have the same structure.
%% %%
%% This is the purpose of the following definitions.

\begin{definition}[Simulation]\label{def:SA-sim}
%%The \aqis\ $\P'$ {\em simulates} $\P$, written $\P \preceq \P'$,
%%iff there exists 
A relation $R \subseteq \S \times \S'$ is a \emph{simulation}
%%between the \aqis\ $\P$ and $\P'$
%$\P=\tup{\S, U, s_0,\tau}$ and $\P'=\tup{\S', U, s'_0,\tau'}$
iff
%%\begin{itemize}
%%	\item[\myi] $s_{0} \preceq s'_{0}$;
%%	\item[\myi] 
$\tup{s,s'}\in R$ implies:
\begin{enumerate} 
	\item\label{SA-simul-1} $s \iso s'$;
	\item\label{SA-simul-2} for every $t \in \S$, if $s \ra t$
	then there exists $t' \in \S'$ 
	s.t.~$s'\ra t'$   and $\tup{t, t'}\in R$.
	%%
%%	\item \label{SA-simul-3} for every $t \in \S$, if $s \sim_i t$ then there exists
%%	$t' \in S'$ s.t.~$t \sim_i t'$ and $t \preceq t'$.
      	\end{enumerate}
	%%
%%\end{itemize}
\end{definition}
%Observe that $\preceq$ is well-defined, as $s \oplus t \iso s' \oplus
%t'$ implies $t \iso t'$.
%%
Definition~\ref{def:SA-sim} presents the standard notion of simulation
applied to the case of \aqis. The difference from the propositional
case is that we here insist on the states being isomorphic, a
generalisation from the usual requirement for propositional valuations
to be equal~\cite{BlackburndRV01}. 
%AL: Don't we need to define that P' simulates P??
% FP: Yes, done!
As in the standard case, two states $s\in \S$ and 
$s'\in \S'$ are said to be \emph{similar}, written $s\preceq s'$, 
if there exists a simulation relation $R$ s.t.~$\tup{s,s'}\in R$.
It can be proven that the similarity relation $\preceq$ 
is a simulation itself, and in particular the largest one w.r.t. set inclusion,
and that it is transitive and reflexive.
Finally, we say that $\P'$ {\em simulates} $\P$, written $\P\preceq \P'$, 
if $s_0\preceq s'_0$. We extend the above to bisimulations.

%% This notion formalizes the intuition that, modulo isomorphism between
%% states, the temporal
%% %%and epistemic 
%% behaviours of $\P'$
%% include those of $\P$. Thus, in a sense, the structure of 
%% $\P$ is a fragment of that of $\P'$.
%% %%
%% We capture full equivalence between the structures of 
%% $\P$ and $\P'$ with the following definition.
%%
\begin{definition}[Bisimulation]\label{def:SA-bisim-ext}
A relation $B \subseteq \S \times \S'$ is a \emph{bisimulation}
%between the \aqis \ $\P$ and $\P'$
%$\P=\tup{\S, U, s_0,\tau}$ and $\P'=\tup{\S', U, s'_0,\tau'}$
iff both $B $ and $B^{-1}=\set{\tup{s',s}\mid \tup{s,s'}\in B}$ are 
simulations.
%%between $\S$ and $\S'$.
\end{definition}
%% Old def:
%% A relation $\approx$ on $\S \times \S'$ is an \emph{SA-bisimulation} between 
%% $\P$ and $\P'$ 
%% iff both $\approx $ and ${\approx}^{-1}=\set{\tup{s',s}\mid
%% s \approx s'}$  are SA-simulations between $\P$ and $\P'$.
%%
%AL: Hmmm. Simulation is defined only on S and not on P so far
%%
We say that two states $s\in \S$ and $s'\in \S'$ are 
{\em bisimilar}, written $s\approx s'$,  
if there exists a bisimulation $B$ s.t.~$\tup{s,s'}\in B$.
Similarly to simulations, it can be proven that the 
bisimilarity relation $\approx$ is the largest bismulation.
Further, it is an equivalence relation.
Finally, $\P$ and $\P'$ are said to be 
{\em bisimilar}, written $\P \approx \P'$, if 
$s_0\approx s_0'$.
%
%\end{definition}
%%
%%Obviously, if $s \approx s'$ then $s \preceq s'$ and $s' \preceq s$.
%%However, the viceversa is not true in general.
%%
%Also,
%It can be easily proved that 
%%Notice that $\approx$ is an equivalence relation.
%
%%SA-Bisimilar systems have matching branching structure, and pairwise
%%isomorphic states along corresponding runs.  
%

Since, as shown in Proposition~\ref{prop:iso-inst-eq2}, the
satisfaction of FO-sentences is invariant under isomorphisms, we can
now extend the usual bisimulation result from the propositional case
to that of SA-FO-CTL.
We begin by showing a result on bisimilar runs.

%% and considering known
%% results on bisimilar systems at the propositional
%% level~\cite{BlackburndRV01}, a natural question is whether
%% SA-bisimilar \aqis\ satisfy the same SA-FO-CTL formulas.

%% %%
%% It turns out that this is the case, as shown below. 
%% %
%% We start with partial results, necessary to prove 
%% the preservation theorem for SA-bisimilar AC-MAS.
%
%In what follows a {\em temporal epistemic run} $r$ from $s \in \S$ is
%an infinite sequence $s^0\leadsto s^1\leadsto \ldots$ such that
%$s^i \to s^{i+1}$ or $s^i \sim_k s^{i+1}$, for some $k \in Ag$.  For
%$n \in \mathbb{N}$, $r(n) \doteq s^n$.  A state $s'$ is said to
%be \emph{t.e.~reachable from $s$} iff there exists a temporal
%epistemic run $r$ from the initial global state $r(0) = s$ s.t.~$r(i)
%= s'$, for some $i\geq 0$.  T.e.~runs are not to be confused with the
%(purely temporal) runs introduced in Sec.~\ref{sec:aqis}.
%%
%Prop.~\ref{prop:nxt-uniform} generalizes to temporal epistemic runs.
%%
\begin{proposition}\label{prop:run}
Consider two \aqis\ $\P$ and $\P'$
%$\P=\tup{\S, U, s_0,\tau}, \P'=\tup{\S', U, s'_0,\tau'}$ 
such that $\P \approx \P'$, $s \approx s'$, for some
$s \in \S, s' \in \S'$, and a run $r$ of $\P$ such that $r(0) = s$. 
%% Consider two SA-bisimilar \aqis\ $\P,\P'$ and 
%% two SA-bisimilar states $s \in \P$ and $s' \in \P'$.
%% For every run $r$ of $\P$, if $r(0) = s$ 
%% %and 
%% %for all $i \geq 0$, 
%% %$\card{U'}\geq |\adom(r(i))\cup\adom(r(i+1))\cup C\cup \sigma(\free(\varphi))|$,
Then there exists a run $r'$ of $\P'$ such that:
\begin{itemize}
	\item[\myi] $r'(0)=s'$; 
	\item[\myii] for all $i\geq 0$, $r(i) \approx r'(i)$.
\end{itemize}
\end{proposition}

\begin{proof}
%The proof is by induction. 
%AL: I do not think we are building it. We just show it exists.
%We inductively build $r'$ from $r$.
We show by induction that such run $r'$ in $\P'$ exists.  For $i =
0$, let $r'(0) = s'$. Obviously, $r(0)\approx r'(0)$.  Now, assume, by
induction hypothesis, that 
%AL: As above.
%AL: $r'(i)$ is defined and 
$r(i)\approx r'(i)$.  Let $r(i) \to r(i+1)$. Since $r(i)\approx
r'(i)$, by Def.~\ref{def:SA-sim}, there exists $t' \in \S'$ such that
$r'(i) \to t'$ and $r(i+1)\approx t'$.
Let 
%AL: I do not see this (??): a run is a run - we do not define it.
% by defining 
$r'(i+1)=t'$; hence we obtain $r(i+1)\approx r'(i+1)$.
By definition $r'$ is a run of $\P'$.
\end{proof}

This enables us to show that bisimilar \aqis\ preserve SA-FO-CTL
formulas. This is an extension of analogous results on propositional
CTL. 

%% We can now prove the following key result on the preservation of
%% SA-FO-CTL formulas in AS-bisimilar \aqis.
%% %

\begin{lemma}\label{lm:bis-iff-eq2}
Consider the \aqis\ $\P$ and $\P'$
%$\P=\tup{\S, U, s_0,\tau}, \P'=\tup{\S', U, s'_0,\tau'}$ 
such that $\P \approx \P'$, $s \approx s'$, for some
$s \in \S, s' \in \S'$ and an SA-FO-CTL formula $\varphi$.
%% Consider two bisimilar 
%% \aqis\ $\P$ and $\P'$, 
%% two SA-bisimilar states $s \in \P$ and $s' \in \P'$, 
%% and an SA-FO-CTL formula $\varphi$.
 Then, 
\begin{eqnarray*}
(\P, s) \models \varphi & \text{iff} & (\P',s') \models \varphi
\end{eqnarray*}
\end{lemma}

\begin{proof}
The proof is by induction on the structure of $\varphi$.  Observe
first that since $\varphi$ is sentence-atomic, its satisfaction does
not depend on assignments. We report the proof for the left-to-right
part of the implication; the converse can be shown similarly. 
%% We prove the only-if-part only. The
%% opposite direction is symmetric.

The base case for an FO-sentence $\varphi$ follows from
Prop.~\ref{prop:iso-inst-eq2}.
The inductive cases for propositional connectives are straightforward.

For $\varphi \equiv AX \psi$, assume for contradiction that
$(\P,s) \models \varphi$ and $(\P',s') \not \models \varphi$.  Then,
there exists a run $r'$ s.t.~$r'(0)=s'$ and
$(\P',r'(1)) \not \models\psi$.  By Def.~\ref{def:SA-bisim-ext}
and~\ref{def:SA-sim} there exists a $t \in \S$ s.t.~$s \to t$ and
$t \approx r'(1)$. Further, by seriality of $\to$, $s \to t$ can be
extended to a run $r$ s.t.~$r(0)=s$ and $r(1) = t$.  By the induction
hypothesis we obtain that $(\P,r(1)) \not \models \psi$.  Hence,
$(\P,r(0)) \not \models AX \psi$, which is a contradiction.

%%For $\varphi \equiv K_i \psi$, assume by contradiction that
%%$(\P,s) \models \varphi$ and $(\P',s') \not \models \varphi$.  
%%Then, there exists $t'$ in $\P$ such that $s'\sim_i t'$ and 
%%$(\P',t')\not\models \psi$.
%%By Definition~\ref{def:SA-bisim-ext} (and~\ref{def:SA-sim}) 
%%there exists $t \in \P$ such that~$s \sim_i t$ and
%%$t \approx t'$.	
%%But then, by applying the induction hypothesis,
%%we conclude that $(\P,t) \not \models \psi$, i.e.,
%%$(\P,s) \not \models K_i\psi$. Contradiction.
	
For $\varphi\equiv E\psi U\phi$, 
let $r$ be a run with $r(0)=s$ such that there exists $k\geq 0$
such that $(\P,r(k))\models \phi$, and for every $j$, $0\leq j <k$
implies $(\P,r(j))\models \psi$.  By Prop.~\ref{prop:run} there exists
a run $r'$ s.t.~$r'(0)=s'$ and for all $i\geq 0$, $r'(i)\approx r(i)$.
By the induction hypothesis we have that for each $i \in \mathbb{N}$, 
%%
%we have that is, 
$(\P,r(i))\models\psi$ iff
$(\P',r'(i))\models\psi$, and $(\P,r(i))\models\phi$
iff $(\P',r'(i))\models\phi$.  
Therefore, $r'$ is a run
s.t.~$r'(0)=s'$, $(\P',r'(k))\models\phi$, and for every
$j$, $0\leq j < k$ implies $(\P',r'(j))\models\psi$, i.e.,
$(\P',s')\models E\psi U\phi$.

%\smallskip
For $\varphi\equiv A\psi U\phi$, assume for contradiction that
$(\P,s)\models\varphi$ and
$(\P',s')\not\models\varphi$.  
Then, there exists a run
$r'$ s.t.~$r'(0)=s'$ and for every $k\geq 0$, if 
$(\P',r'(k))\models\phi$, then there exists $j$ s.t.~$0\leq
j < k$ and $(\P',r'(j))\not\models\psi$.
By Prop.~\ref{prop:run} there exists a run $r$
s.t.~$r(0)=s$ and for all $i\geq 0$, $r(i)\approx r'(i)$.
Further, 
by the induction hypothesis we have that 
%% applies for every
%% $i\geq 0$. So, 
$(\P,r(i))\models\psi$ iff
$(\P',r'(i))\models\psi$ and $(\P,r(i))\models\phi$
iff $(\P',r'(i))\models\phi$. But then $r$ is s.t.~$r(0)=s$
and for every $k\geq 0$, if $(\P,r(k))\models\phi$, then
there exists $j$ s.t.~$0\leq j < k$ and
$(\P,r(j))\not\models\psi$. That is,
$(\P,s)\not\models A\psi U \phi$, which is a contradiction.

\begin{comment}	
%	\smallskip
For $\varphi \equiv K_i \psi$, we
%proceed as in the case of $AX\psi$. 
assume for contradiction that $(\P,s, \sigma)\models\varphi$ and
$(\P',s', \sigma')\not\models\varphi$.
%%
Then, there exists a $t'$ s.t.~$s' \sim_i t'$ and
$(\P',t',\sigma') \not \models\psi$.
%%
By Def.~\ref{def:sim} there exists $t$ s.t.~$t \approx
t'$ and $s \sim_i t$.
%and $\sigma$ and $\sigma'$ are equivalent for
%$\psi$ w.r.t.~$s''$ and $s'''$.
%%
Thus, we can apply the induction hypothesis,
%by replacing $s$ with 
%	$r(1)$, $s'$ with $r'(1)$, and $\varphi$ with $\psi$
%	(observe that $\vars(\varphi)=\vars(\psi)$ and $\free(\varphi)=\free(\psi))$.
and we obtain that $(\P,t,\sigma)\not \models \psi$. Hence
$(\P,s, \sigma)\not \models K_i \psi$, which is a contradiction.
%%	The other direction is proven analogously.
%[to do so, we need 
%	1. and 2. to hold also on all $\sim_i$]

For $\varphi \equiv C \psi$, we
%proceed as in the case of $AX\psi$. 
assume for contradiction that $(\P,s, \sigma) \models \varphi$ and
$(\P',s', \sigma') \not \models\varphi$.
%%
Then, there exists $t'$ s.t.~$s' \sim t'$ and
$(\P',t',\sigma') \not \models\psi$. 
%
This means that there is a purely epistemic run $r'$ s.t.~$r'(0) = s'$
and $r'(i) = t'$ for some $i \geq 0$.
%%
By Prop.~\ref{prop:run} there exists a purely epistemic run $r$
s.t.~$r(0) = s$, $r(0) \sim r(i)$ and $r(i) \approx
r'(i)$.
%and $\sigma$ and $\sigma'$ are equivalent for
%$\psi$ w.r.t.~$s''$ and $s'''$.
%%
Thus, we can apply the induction hypothesis,
%by replacing $s$ with 
%	$r(1)$, $s'$ with $r'(1)$, and $\varphi$ with $\psi$
%	(observe that $\vars(\varphi)=\vars(\psi)$ and $\free(\varphi)=\free(\psi))$.
and we obtain that $(\P,r(i),\sigma)\not \models \psi$. Hence
$(\P,s, \sigma)\not \models C \psi$, which is a contradiction.
%%	The other direction is proven analogously.
%[to do so, we need 
%	1. and 2. to hold also on all $\sim_i$]
\end{comment}	
\end{proof}

By applying the result above to the case of $s=s_0$ and $s'=s'_0$, we
obtain the following. 
\begin{theorem} %\label{th:bis-iff}
Consider the \aqis\ $\P$ and $\P'$ such that $\P \approx \P'$, and an
SA-FO-CTL formula $\varphi$.
%% Consider two bisimilar 
%% \aqis\ $\P$ and $\P'$, 
%% and an SA-FO-CTL formula $\varphi$.
We have
\begin{eqnarray*}
\P \models \varphi & \text{iff} & \P' \models \varphi
\end{eqnarray*}
\end{theorem}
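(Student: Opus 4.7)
The plan is to derive this result as an immediate corollary of Lemma~\ref{lm:bis-iff-eq2}, by unfolding the two defined notions that appear only at the ``outer'' level of the statement. First I would recall from Definition~\ref{def:SA-bisim-ext} and the surrounding discussion that $\P \approx \P'$ means precisely $s_0 \approx s_0'$, i.e., there is a bisimulation $B \subseteq \S \times \S'$ with $\tup{s_0, s_0'} \in B$. Next I would recall from the definition following Definition~\ref{def:fo-ctl-sem} that $\P \models \varphi$ stands for $(\P, s_0) \models \varphi$, and analogously $\P' \models \varphi$ stands for $(\P', s_0') \models \varphi$.

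With these unfoldings in place, the theorem reduces to showing
\[
(\P, s_0) \models \varphi \ \ \text{iff} \ \ (\P', s_0') \models \varphi,
\]
which is exactly the instance of Lemma~\ref{lm:bis-iff-eq2} obtained by taking $s := s_0 \in \S$ and $s' := s_0' \in \S'$. The hypotheses of that lemma are met: $\P \approx \P'$ is given, $s_0 \approx s_0'$ follows from the definition of $\P \approx \P'$, and $\varphi$ is by assumption an SA-FO-CTL formula.

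There is essentially no obstacle here; the entire content of the argument has already been discharged by Lemma~\ref{lm:bis-iff-eq2}, whose proof handled the induction over the structure of SA-FO-CTL using Proposition~\ref{prop:iso-inst-eq2} for the FO-sentence base case and Proposition~\ref{prop:run} for the temporal operators. The only thing to keep in mind is the convention that truth in an \aqis\ (as opposed to truth at a state) is evaluated at the distinguished initial state, so the ``lifting'' from the state-indexed Lemma~\ref{lm:bis-iff-eq2} to the \aqis-level theorem is purely definitional. I would therefore present the proof as a two-line argument: unfold $\P \approx \P'$ and $\P \models \varphi$ at the initial states, and invoke Lemma~\ref{lm:bis-iff-eq2}.
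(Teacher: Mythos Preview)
Your proposal is correct and matches the paper's approach exactly: the paper introduces this theorem with the sentence ``By applying the result above to the case of $s=s_0$ and $s'=s'_0$, we obtain the following,'' which is precisely your instantiation of Lemma~\ref{lm:bis-iff-eq2} together with the definitional unfoldings of $\P \approx \P'$ and $\P \models \varphi$.
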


In summary we have proved that bisimilar \aqis\ validate the same
SA-FO-CTL formulas.  In the next section we use this result to
reduce, under additional assumptions, the verification of an
infinite-state \aqis\ to that of a finite-state one.

\subsection{Finite Abstractions of Bisimilar AC-MAS}
\label{sec:finabs}

We now define a notion of finite abstraction for \aqis. We prove that
abstractions are bisimilar to the corresponding concrete model.  We
are particularly interested in finite abstraction; so we operate on a
special class of infinite models that we call \emph{bounded}.

%% As it
%% will be clear later in the sectionTo
%% guarantee 

%% To do so, we
%% need to put some extra conditions on \aqis.

%

%We now proceed to define a notion of bound in \aqis\
%executions, and present abstraction results.
%% We now define a notion of finite abstraction for \aqis, and prove that
%% it is bisimilar to the original concrete model.  To do so, we
%% need to put some extra conditions on \aqis.
%In the rest of
%the paper we assume without loss of generality that for any
%\aqis\ $\P$, $\adom(s_0)\subseteq C$.  If this is not the case,
%$C$ can be extended so as to include all the (finitely many) elements
%in $\adom(s_0)$.
\begin{definition}[Bounded \aqis]\label{def:abs-qis}
An \aqis\ $\P$ is {\em $b$-bounded}, for $b \in \mathbb{N}$, if for
        all $s \in \S$, $\card{\adom(s)} \leq b$.
\end{definition}
An \aqis\ is $b$-bounded if none of its reachable states contains more
than $b$ distinct elements. Observe that bounded \aqis\ may be defined
on infinite domains $U$. Furthermore, note that a $b$-bounded \aqis\ may
contain infinitely many states, all bounded by $b$. So $b$-bounded
systems are infinite-state in general.
%
%%Observe that the boundedness condition imposes no restriction on 
%%the domain $U$ of $\P$. 
%%Thus, although every $\P$-state contains no more than
%%$b$ distinct elements, if $U$ is infinite, so is the state space of
%%$\P$ in general, as infinitely many $b$-bounded distinct states can 
%%still be generated. 
%%
Notice also that the value $b$ bounds only the number of distinct
individuals in a state, not the \emph{size} of the state itself,
i.e., the amount of memory required to accommodate 
the individuals. 
Indeed, the infinitely many elements of $U$ 
need an unbounded number of bits to be represented
(e.g., as finite strings),
so, even though each state is guaranteed 
to contain at most $b$ distinct elements,
nothing can be said about how large the actual 
space required by such elements is.
On the other hand, it should be clear that memory-bounded 
\aqis\ are finite-state (hence $b$-bounded, for some $b$).

Thus, seen as programs, $b$-bounded \aqis\ are in general
memory-unbounded. 
Therefore, for the purpose of verification, they cannot be trivially 
checked by generating all their executions --as it would be 
the case if they were memory-bounded-- like standard model checking
techniques typically do.
However, we will show later that any $b$-bounded infinite-state ACMAS
admits a finite abstraction which can be used to verify it.

%% In words, an \aqis\ is $b$-bounded if it does not use more than $b$ 
%% distinct elements in all of its reachable states. 
%% Observe that boundedness imposes no restriction on the interpretation
%% domain $U$ of $\P$, which can still be infinite.
%% Indeed, even though each state uses only a bounded number of elements from $U$, 
%% there exists infinitely many such states. 
%% Thus, if $U$ is infinite, so is the state space of $\P$, in general.
%

%% The relevance of this notion resides in the fact that, as proven later on,
%% any $b$-bounded \aqis, including infinite-state ones, can in principle be verified 
%% by techniques for finite-state model checking. However, such techniques cannot be
%% directly applied to the original \aqis, but to a particular class, which we call
%% {\em finite abstractions}. 

We now introduce abstractions in a modular manner by first introducing
a set of abstract agents from a concrete \aqis.

\begin{definition}[Abstract agent] \label{def:ab-agents}
Let $A = \tup{\D, L, Act, Pr}$
%$Ag=\set{A_0,\ldots,A_n}$ 
be an agent defined on the interpretation
domain $U$. 
%with $A_i = \tup{\D_i, L_i, Act_i, Pr_i}, 0 \leq i \leq n$.
Given a 
%countable 
set $U'$ of individuals, we define 
%$Ag'$
%AL: ???
the {\em abstract
%set of 
agent} $A' = \tup{\D', L', Act',Pr'}$ on $U'$
% 0 \leq i \leq n$, 
such that:
\begin{enumerate}
\item $\D_i' = \D_i$; 
\item $L_i'\subseteq \D_i'(U')$;
\item $Act_i' = Act_i$; 
\item\label{req:ab-agents-prtcl} $\alpha(\vec u') \in Pr_i'(l_i')$ iff there exist
	 $l_i \in L_i$ and $\alpha(\vec u)\in Pr_i(l_i)$
          s.t.~$l_i' \simeq l_i$, for some witness $\iota$, and 
          $\vec u'=\iota'(\vec u)$, for some bijection $\iota'$ 
          extending $\iota$ to $\vec u$.
\end{enumerate}

Given a set $Ag$ of agents defined on $U$, let $Ag'$ be the set of the corresponding abstract agents on $U'$.
%$A_i', 0 \leq i \leq n$, as defined above.
\end{definition}

We remark that $A'$, as defined in
Definition~\ref{def:ab-agents}, is indeed an agent and complies with
Definition~\ref{def:ags}.
Notice that the protocol of $A'$ is defined on the basis of its
corresponding concrete agent $A$ and requires the existence of a
bijection between the elements in the local states and the action
parameters. Thus, in order for a ground action of $A$ to have a
counterpart in $A'$, the last requirement of
Definition~\ref{def:ab-agents} constrains $U'$ to contain a sufficient
number of distinct values.  As it will become apparent later, the size
of $U'$ determines how closely an abstract system can simulate its
concrete counterpart.

%% As it will become apparent in forthcoming
%% results, the cardinality of $U'$ plays a central role in defining how
%% informative each $A_i'$ is with respect to $A_i$.

We can now formalize the notion of abstraction that we will use in
this section. 
\begin{definition}[Abstraction] \label{ref:abs1}
Let $\P$ be an \aqis\ over $Ag$ and $Ag'$ the set of agents obtained
as in Definition~\ref{def:ab-agents}, for some $U'$.  The \aqis\ $\P'$
%= \tup{ \S', U', s_0, \tau' }$ 
defined over $Ag'$ is said to be an
{\em abstraction} of $\P$ iff:
\begin{itemize}
 \item $s'_{0} \simeq s_{0}$;
\item  $t' \in \tau'(s',\vec\alpha(\vec u'))$ for some $\vec\alpha(\vec u')\in Act(U')$ iff 
there exist $s,t \in \S$ and $\vec\alpha(\vec u) \in Act(U)$, such that 
  $t \in \tau(s,\vec\alpha(\vec u))$, $s\simeq s'$ and $t \simeq t'$ for some witness $\iota$, and $\vec{u}' = \iota'(\vec u)$ for some $\iota'$ extending $\iota$.
\end{itemize}
\end{definition}

Notice that abstractions have initial states isomorphic to their
concrete counterparts. The condition in Definition~\ref{ref:abs1}
means that whenever $s \simeq s'$ for some witness $\iota$, $\vec u'
= \iota(\vec u)$, $t
\in \tau(s, \alpha(\vec u))$ and $t' \in \tau(s', \alpha(\vec u'))$,
then $t \simeq t'$. This constraint means that action are
data-independent. So, for example, a copy action in the concrete model
has a corresponding copy action in the abstract model regardless of
the data that are copied. Crucially, this condition requires that the
domain $U'$ contains enough elements to simulate the concrete states
and action effects as the following result makes precise.
%% , as it is apparent in the following central
%% result shows the relationship between a bounded \aqis\ and its finite
%% abstractions.
%% Obviously, the well-behaved action
%% assumption requires enough elements in domain $U'$ to simulate
%% concrete states and action application, as it is apparent in the
%% following central result shows the relationship between a
%% bounded \aqis\ and its finite abstractions. 
In what follows we take $N_{Ag} = N_{Ag'} = \sum_{A_i \in
Ag} \max_{\alpha{(\vec p)} \in Act_i} \{|\vec p|\}$, i.e., $N_{Ag}$ is
the sum of the maximum numbers of parameters contained in the action
types of each agent in $Ag$.

\begin{theorem} %\label{theor1}
Consider a $b$-bounded \aqis\ $\P$ over an infinite interpretation
domain $U$, an SA-FO-CTLK formula $\varphi$, and a finite
interpretation domain $U'$ such that $C\subseteq U'$ and
$\card{U'} \geq b + \card{C} + N_{Ag}$. Any abstraction $\P'$ of $\P$
is bisimilar to $\P$.
\end{theorem}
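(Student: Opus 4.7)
The plan is to exhibit the largest natural candidate bisimulation and verify its defining properties by leveraging the ``iff'' built into Definition~\ref{ref:abs1}. Specifically, I would take $B = \{(s,s') \in \S \times \S' \mid s \iso s'\}$. The pair $(s_0, s'_0)$ already lies in $B$ by the first clause of Definition~\ref{ref:abs1}, so it suffices to show that both $B$ and $B^{-1}$ are simulations in the sense of Definition~\ref{def:SA-sim}: condition~\ref{SA-simul-1} holds by construction, and only condition~\ref{SA-simul-2} needs work.

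For the forward step, suppose $(s,s') \in B$ with witness $\iota$ and $s \to t$ in $\P$ via some $\vec\alpha(\vec u) \in Act(U)$. I would extend $\iota$ to an injection $\iota'$ whose domain also contains the ``fresh'' parameters in $\vec u$, namely those lying outside $\adom(s) \cup \const$, and whose image lies inside $U'$. The number of fresh parameters is bounded by $N_{Ag}$, while $|\adom(s') \cup \const| \leq b + |\const|$, since $\P$ is $b$-bounded and $s' \iso s$ preserves active-domain cardinality. The hypothesis $|U'| \geq b + |\const| + N_{Ag}$ then guarantees $|U' \setminus (\adom(s') \cup \const)| \geq N_{Ag}$, which is exactly the room needed to pick distinct fresh images. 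Setting $\vec u' = \iota'(\vec u)$ and applying the ``if'' direction of the iff in Definition~\ref{ref:abs1} produces some $t' \in \tau'(s', \vec\alpha(\vec u'))$ with $t \iso t'$, so $(t, t') \in B$.

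For the backward step, let $(s, s') \in B$ and $s' \to t'$ in $\P'$ via $\vec\alpha(\vec u')$. The ``only if'' direction of Definition~\ref{ref:abs1} returns some $\hat s, \hat t \in \S$ and $\vec\alpha(\vec u)$ with $\hat t \in \tau(\hat s, \vec\alpha(\vec u))$, $\hat s \iso s'$ and $\hat t \iso t'$. Composing the witness for $s \iso s'$ with the inverse of a witness for $\hat s \iso s'$ yields $s \iso \hat s$, and I would then transport the transition $\hat s \to \hat t$ back to a transition $s \to t$ with $t \iso \hat t$, relying on the fact that the protocols and the transition function of an \aqis\ as formalised in Section~\ref{subsec:acmas} are data-independent (they are defined uniformly on $U$, so renaming the individuals preserves both enabledness and the induced successor up to isomorphism). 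Combining $t \iso \hat t \iso t'$ gives $(t, t') \in B$, completing the verification.

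The main obstacle is the quantitative content of the forward step: for every concrete transition leaving $s$, enough distinct values must remain in $U'$ to host the images of the parameters introduced by the concrete action. The bound $|U'| \geq b + |\const| + N_{Ag}$ is tailored exactly for this, and the pivotal quantitative point is the observation that $|\adom(s')| \leq b$, which follows from isomorphism preserving active-domain cardinality together with the $b$-boundedness of $\P$. The backward step is conceptually lighter but tacitly leans on the data-independent formulation of the concrete transition function, a property that must be invoked explicitly (or shown to follow from the \aqis\ definition) for the argument to go through.
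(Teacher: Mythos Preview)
Your candidate relation $B=\{(s,s')\mid s\iso s'\}$ and the overall structure of the forward step are exactly what the paper does. One minor divergence: the paper builds its injection starting from $\adom(t)\cup\Const$ and only then extends to $\vec u$, rather than extending the witness for $s\iso s'$. This matters because the ``if'' direction of Definition~\ref{ref:abs1} needs a specific $t'$ with $t\iso t'$ already in hand, not just $\vec u'$; the bound $b+|\Const|+N_{Ag}$ is calibrated to the paper's construction, whereas extending a witness for $s\iso s'$ all the way to $\adom(t)$ would in general require $2b+|\Const|+N_{Ag}$.

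The backward step rests on a claim that is not correct. You assert that ``the protocols and the transition function of an \aqis\ as formalised in Section~\ref{subsec:acmas} are data-independent,'' but Definition~\ref{def:sys-db-ag} imposes no such constraint: $\tau$ is an arbitrary function compatible with the protocols, and nothing forces it to be closed under renaming of individuals. Data-independence is precisely the content of \emph{uniformity} (Definition~\ref{def:unif-aqis}), which Section~\ref{kr} introduces as an \emph{additional} hypothesis on $\P$ exactly because it does not follow from the general definition. Without it you cannot transport the transition $\hat s\to\hat t$ along the isomorphism $s\iso\hat s$ to obtain a transition out of $s$: two isomorphic reachable states of a non-uniform $\P$ may have non-isomorphic successor sets. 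The paper's own proof writes only ``$R^{-1}$ can similarly be shown to be a simulation'' for this direction and does not spell the step out either; but the specific justification you propose---that every \aqis\ is automatically data-independent---is false, so your argument as written has a genuine gap there.
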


\begin{proof}
Define a relation $R$ as $R=\set{\tup{s,s'}\in \S\times \S'\mid s\iso
s'}$. We show that $R$ is a bisimulation 
%%between $\P$ and $\P'$
such that $\tup{s_0,s_0'}\in R$.
Observe first that 
%since $s_0'=s_0$, then 
$s_0'\simeq s_0$, so
$\tup{s_0,s_0'}\in R$.
Next, consider $s \in \S$ and $s'\in\S'$ such that $ s \iso s'$ 
(i.e., $\tup{s,s'}\in R$), 
and assume
that $s \to t$, for some $t \in \S$.
Then, there exists $\alpha(\vec u) \in Act(U)$
s.t.~$t \in \tau(s,\alpha(\vec u))$.
We show next that there exists $t' \in \S'$
 s.t.~$s' \to t'$ and $t\iso t'$.
%%
%% To this end, observe that, since $\card{U'}\geq b + \card{C}$ and
%% $\card{\adom(t)} \leq b$, we can define an injective function
%% $f: \adom(t) \cup C \mapsto U'$, and let $t'= f(t)$.  It can be seen
%% that, by the way $f$ has been defined, $ t \iso t'$.
%% AL: Not sure of the above. Edited.
To this end, observe that, since $\card{U'}\geq b + \card{C}$ and
$\card{\adom(t)} \leq b$, we can define an injective function
$f: \adom(t) \cup C \mapsto U'$ such that $ f(t) \iso t$. We take
$t'=f(t)$; it remains to prove that $s' \to t'$. 
By the condition on the cardinality of $U'$ we can extend $f$ to $\vec
u$ as well, and set $\vec u' = f(\vec u)$.  Then, by the definition of
$\P'$ we have that $t' \in \tau'(s',\alpha( \vec u' ))$.
%for some $\vec u'\in U'^{\card{\vec u}}$.
Hence, $s' \to t'$.
So, $R$ is a simulation relation between $\P$ and $\P'$. Since
$R^{-1}$ can similarly be shown to be a simulation, it follows that
$\P$ and $\P'$ are bisimilar.
\end{proof}

By combining this result with Lemma~\ref{lm:bis-iff-eq2}, we can
easily derive the main result of this section.

\begin{theorem} \label{theor1.1}
If $\P$ is a $b$-bounded \aqis\ over an infinite interpretation domain $U$,
and $\P'$ an abstraction of $\P$ over a finite interpretation domain 
$U'$ such that $C\subseteq U'$ and $\card{U'} \geq b + \card{C} + N_{Ag}$,
then for every SA-FO-CTLK formula $\varphi$, we have that
\begin{eqnarray*}
\P \models \varphi & \text{iff} & \P' \models \varphi.
\end{eqnarray*}
\end{theorem}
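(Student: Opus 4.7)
The plan is to derive Theorem~\ref{theor1.1} as an immediate corollary of the two results immediately preceding it in this section. Under the cardinality hypothesis $\card{U'} \geq b + \card{C} + N_{Ag}$ and $b$-boundedness of $\P$, the preceding theorem guarantees that any abstraction $\P'$ of $\P$ satisfies $\P \approx \P'$. By the definition of bisimilarity between \aqis, this in turn implies $s_0 \approx s_0'$.

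Given $s_0 \approx s_0'$, I would invoke Lemma~\ref{lm:bis-iff-eq2}, instantiated with $s := s_0$ and $s' := s_0'$, to obtain that for every SA-FO-CTL formula $\varphi$, we have $(\P,s_0) \models \varphi$ iff $(\P',s_0') \models \varphi$. Since $\P \models \varphi$ is by definition $(\P,s_0) \models \varphi$ (and analogously for $\P'$), the biconditional $\P \models \varphi$ iff $\P' \models \varphi$ follows at once. No further induction on $\varphi$ is needed; the proof is just a two-step chaining of the prior results.

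The one delicate point worth flagging is a small mismatch between the language in the statement and that in Lemma~\ref{lm:bis-iff-eq2}: the theorem quantifies over SA-FO-CTLK formulas, thereby admitting the epistemic operators $K_i$ and $C$, while the lemma, as proved, covers only the purely temporal fragment SA-FO-CTL. To fully justify the claim as stated, the induction in the proof of Lemma~\ref{lm:bis-iff-eq2} must be extended with clauses for $K_i\psi$ and $C\psi$. This in turn requires that the bisimulation relation used here match $\sim_i$-indistinguishable states in addition to the temporal successors it already matches in Definition~\ref{def:SA-sim}, together with an epistemic analogue of Proposition~\ref{prop:run} along temporal-epistemic runs for handling $C$.

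I expect this last point to be the main obstacle to a completely rigorous argument. It is addressed by strengthening the zig-zag condition of Definition~\ref{def:SA-sim} to cover $\sim_i$-successors, exploiting the fact that in \aqis\ the epistemic relation is determined componentwise by the agents' local states; since isomorphic global states have isomorphic local components and the abstraction is built modularly over $Ag$, matching $\sim_i$-successors is naturally inherited from matching $\to$-successors in the abstraction construction. Once this extension is in place, the proof reduces, as above, to combining bisimilarity of $\P$ and $\P'$ with the extended preservation lemma.
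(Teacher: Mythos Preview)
Your proposal is correct and matches the paper's approach exactly: the paper derives the theorem by combining the immediately preceding bisimilarity theorem with Lemma~\ref{lm:bis-iff-eq2}, with no further argument. You have additionally spotted a genuine discrepancy that the paper itself glosses over---the theorem is stated for SA-FO-CTLK while Lemma~\ref{lm:bis-iff-eq2} and Definition~\ref{def:SA-sim} cover only SA-FO-CTL (indeed, the epistemic cases in the proof of Lemma~\ref{lm:bis-iff-eq2} are commented out in the paper's source), so your flag is well taken.
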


This result states that we can reduce the verification of an
infinite \aqis\ to the verification of a finite one. Given the fact
that checking a finite \aqis\ is decidable, this is a noteworthy
result.  
\begin{comment}
Further, the abstraction of an \aqis\ $\P$ can be introduced
constructively. To do so, we define $\S_{|U'}$ as
$\{s \in \S \mid \adom(s) \subseteq U' \}$ and $\tau_{|\S'}$ as the
restriction of $\tau$ to the states in $\S'$.
%
\begin{lemma}
Consider a $b$-bounded \aqis\ $\P$ over an infinite interpretation
domain $U$, an SA-FO-CTLK formula $\varphi$, and a finite subset $U'$
of $U$ such that $\adom(s_0) \subseteq U'$, $C\subseteq U'$ and
$\card{U'} \geq b + \card{C} + N_{Ag}$. The \aqis \ $\P' = \tup{\S',
U', s'_0,\tau'}$, where \myi $s'_0 = s_0$, \myii $\S' = \S_{|U'}$,
and \myiii $\tau' = \tau_{|\S'}$, is a finite abstraction of $\P$.
Moreover, $\P'$ can be given constructively.
\end{lemma}

\begin{proof}
First of all, for every agent $A = \langle \D, L, Act, Pr \rangle $ in
the set $Ag$ of agents for $\P$, we introduce an abstract agent $A'
= \langle \D, L', Act, Pr' \rangle$ such that \myi $L' = \{l \in
L \mid \adom(l) \subseteq U' \}$, and \myii for $\vec
u \in U'$, $l \in L'$,  $\alpha(\vec u)\in Pr'(l)$ iff 
$\alpha(\vec u) \in Pr(l)$. Hence, $A'$ can be seen as the restriction of $A$ to the domain $U'$. 

Let $\P'$ be the \aqis on these abstract agents, defined as above. 
We notice that $s'_0 = s_0$, therefore
in particular $s'_0 \simeq s_0$.
\end{proof}
\end{comment}
Note, however, that we do not have a constructive definition for the
construction of an abstract \aqis\ $\mathcal{P}'$ from a
concrete \aqis\ $\mathcal{P}$. This is of no consequence though, as in
practice any concrete artifact-system will be defined by a program,
e.g., in the language GSM, as discussed in the
introduction. \label{page:nonconstructive} Of importance, instead, is
to be able to derive finite abstractions not just for
arbitrary \aqis\, but for those that are models of concrete
programs. We will do this in Section~\ref{sec:as} where we will use
the result above.
%% We conclude this section with a discussion of the results above, their
%% limits and applicability.

Observe that an abstract \aqis\ as in Definition~\ref{ref:abs1}
depends on the set $Ag'$ of abstract agents defined in
Definition~\ref{def:ab-agents}.  However, other abstract \aqis\
defined on different sets of agents, exist. This is a standard
outcome when defining modular abstractions, as the same system can be
obtained by considering different agent components.  
%% This definition
%% allows us
%% % to introduce constructively,
%% %AL: it's not constructive.

%%  the abstraction
%% of an \aqis\ $\P$.  Consider any finite subset $U'$ of $U$ satisfying
%% the cardinality constraint and including the elements in $\adom(s_0)$.
%% Then, starting from $s_0$ we apply the actions as for the construction
%% of $\P$ only now with values in the subset $U' \subseteq U$.  The
%% resulting 

%% %%: AL: This construction never terminates.

%% \aqis\ is finite and by Theorem~\ref{theor1} it is an abstraction
%% of $\P$. We can conclude that each \aqis\ is bisimilar to a finite
%% subsystem, which satisfies the same SA-FO-CTL formulas.

%AL: I'd leave all this out.

\section{Abstractions for FO-CTLK} \label{kr}

In the previous section we showed that syntactical restrictions on the
specification language lead to finite abstractions for bounded \aqis. 
%AL: Unless we give the algo we cannot say decidable here.
A natural question that arises is whether the limitation to
sentence-atomic specifications can be removed. Doing so would enable
us to check any agent-based FO-CTLK specification not on an
infinite-state \aqis, but on its finite abstraction.

The key concept we identify in this section that enables us to achieve
the above is that of \emph{uniformity}. As we will see later
uniform \aqis\ are systems for which the behaviour does not depend on
the actual data present in the states. This means that the system
contains all possible transitions that are enabled according to
parametric action rules, thereby resulting in a rather ``full'' 
transition relation.
This notion 
%%essentially 
corresponds to that of \emph{genericity} in databases~\cite{AbiteboulHV95}.
We use the term ``uniformity'' as we refer to transition systems and
not databases.
%AL: Fabio to fill - name only here.

To achieve finite abstractions we proceed as follows. We first
introduce a notion of bisimulation stronger than the one discussed in the
previous section. In Subsection~\ref{obisim} we show that this new
bisimulation relation guarantees that uniform \aqis\ satisfy the same
formulas in FO-CTLK. We use this result to show that bounded, uniform systems
admit finite abstractions (Subsection~\ref{abstraction}). 
%AL: I think we still need to find a solution to separate them
%% Subsection~\ref{sec:bmc} formally
%% relates verification of bounded \aqis\ with that of unbounded ones,
%% while Subsection~\ref{complexity.tex} presents complexity results
%% for the model checking problem for bounded, uniform \aqis\.

In the rest of the section we let $\P = \tup{\S, U, s_{0}, \tau}$ and
$\P' = \tup{\S', U', s'_{0}, \tau'}$ be two \aqis\ and assume, unless
stated differently, that $s = \tup{l_0, \ldots, l_n} \in \S$, 
and $s' = \tup{l'_0, \ldots, l'_n} \in \S'$.

% t' = \tup{m'_0, \ldots, m'_n} \in \s'$.
%t = \tup{m_0, \ldots, m_n} \in \s$ 
%AL: we may need t, t'. 

\subsection{$\oplus$-Bisimulation} \label{obisim}

Plain bisimulations are known to be satisfaction preserving in a modal
propositional setting~\cite{BlackburndRV01}. In the following we
explore the conditions under which this applies to \aqis\ as well.
%% However, they no longer are
%% when quantifiers are present~\cite{french-phd}. 
We begin by using a notion of bisimulation which is also based on
isomorphism, but it is stronger than the one discussed in
Section~\ref{icsoc} and later explore its properties in the context of
uniform \aqis.

\begin{definition}[$\oplus$-Simulation]\label{def:sim-ext}
%%The \aqis\ $\P'$ {\em simulates} $\P$, written $\P \preceq \P'$,
%%iff there exists 
A relation $R$ on $\S \times \S'$ is
a \emph{$\oplus$-simulation} 
%%between $\P$ and $\P'$ 
if $\tup{s,s'}\in R$ implies:
%%\begin{itemize}
%%	\item[\myi] $s_{0} \preceq s'_{0}$, and
%%	\item[\myii] 
%%if $s \preceq s'$ then
\begin{enumerate} 
\item $s \iso s'$;
\item for every $t \in \S$, if $s \ra t$ then there exists
$t' \in \S'$ s.t.~$s' \ra t'$, $s \oplus t \iso s' \oplus t'$, and
$\tup{t,t'}\in R$; \label{one} 
\item for every $t \in \S$, for every $0 < i \leq n$, if  $s \sim_i t$ then there exists
$t' \in \S'$ s.t.~$t \sim_i t'$, $s \oplus t \iso s' \oplus t'$, and
$\tup{t,t'}\in R$. \label{two}
\end{enumerate}
%%\end{itemize}
\end{definition}
Observe that Definition~\ref{def:sim-ext} differs from
Definition~\ref{def:SA-sim} not only by adding a condition for the
epistemic relation, but also by insisting that $s \oplus t \iso
s' \oplus t'$. This condition ensures that the $\oplus$-similar
transitions in \aqis\ have isomorphic disjoint unions. 
Two states $s\in \S$ and $s'\in \S'$ are said to be $\oplus$-\emph{similar},
iff there exists an $\oplus$-simulation $R$ s.t.~$\tup{s,s'}\in R$.
Note that all
$\oplus$-similar states are isomorphic as condition 2.~above ensures
that $t \iso t'$. We use the symbol $\preceq$ both for similarity and
$\oplus$-similarity, as the context will disambiguate. 
Also
$\oplus$-similarity can be shown to be the largest $\oplus$-simulation,
reflexive, and transitive.
Further, we say that $\P'$ $\oplus$-\emph{simulates} $\P$ if $s_0\preceq s_0'$.
%AL: Comment/edit?

%AL@: To do.

$\oplus$-simulations can naturally be extended to
$\oplus$-bisimulations.

\begin{definition}[$\oplus$-Bisimulation]\label{def:bisim-ext}
A relation $B$ on $\S \times \S'$ is
a \emph{$\oplus$-bisimulation} 
%%between $\P$ and $\P'$ 
iff both
$B $ and ${B}^{-1}=\set{\tup{s',s}\mid \tup{s,s'}\in B}$ are
$\oplus$-simulations. 
%%between $\P$ and $\P'$.
\end{definition}
Two states $s\in \S$ and $s'\in \S'$ are said to be $\oplus$-bisimilar
iff there exists an $\oplus$-bisimulation $B$ such that $\tup{s,s'}\in B$. 
Also for bisimilarity and $\oplus$-bisimilarity, we use the same  
symbol, $\approx$, and can prove that $\approx$ is the largest 
$\oplus$-bisimulation, and an equivalence relation.
We say that $\P$ and $\P'$ are {\em $\oplus$-bisimilar},
written $\P\approx \P'$ iff so are $s_0$ and $s'_0$.
%%
%For simplicity, in the
%following we will often refer to (bi)simulations to mean
%$\oplus$-(bi)simulations.

%% Based on this notion, we can now define when two \aqis\ are {\em
%% bisimilar}.
%% %%
%% \begin{definition}[Bisimilarity]\label{def:bisim}
%% Two \aqis\ $\P$ and $\P'$ 
%% are {\em bisimilar}, written $\P \approx \P'$, 
%% iff there exists a {\em $\oplus$-bisimulation relation} $\approx$ on $\S \times \S'$.
%% \end{definition}
%%
%%Obviously, if $s \approx s'$ then $s \preceq s'$ and $s' \preceq s$.
%%However, the viceversa is not true in general.
%%
%Also,
%It can be easily proved that 
%%Notice that $\approx$ is an equivalence relation.

While we observed in the previous section that bisimilar, hence
isomorphic, states in bisimilar systems preserve sentence atomic
formulas, it is instructive to note that this is not the case when
full FO-CTLK formulas are considered.
%AL: example environment is not defined by the style file.
\begin{figure}
\centering

\begin{tikzpicture}[auto,node distance=1.5cm,->,>=stealth',shorten
>=1pt,semithick]

\tikzstyle{every state}=[fill=white,draw=black,text=black,minimum size=0pt,rectangle]
\tikzstyle{every initial by arrow}=[initial text=]

\begin{scope}
\node[initial left,state] 	(s0)   	{$1$};
\node[state]    			(s1) 	[right of=s0]	{$2$};
\node[state]    			(s2) 	[right of=s1]	{$3$};
\node[state]    			(s3) 	[right of=s2]	{$4$};
\node[state]    			(s4) 	[right of=s3]	{$5$};
\node    				(s5) 	[right of=s4]	{};
\node    				(s6) 	[right of=s5]	{};
\node[draw=none,fill=none,yshift=.8cm] 	(label) at (s0)	{$\P$};
\path 	(s0) 	edge (s1);
\path 	(s1) 	 edge (s2);
\path 	(s2) 	 edge (s3);
\path 	(s3) 	 edge (s4);
\path[dashed] 	(s4) 	 edge (s5);
\end{scope}

\begin{scope}[yshift=-2cm]
\node[initial left,state] 	(s0) 				{$1$};
\node[state]    			(s1) 	[right of=s0]	{$2$};

\node[draw=none,fill=none,yshift=.8cm] 	(label) at (s0)	{$\P'$};
\path 	(s0) 	edge[bend left] (s1);
\path 	(s1) 	edge[bend left] (s0);
\end{scope}

\end{tikzpicture}
\caption{$\oplus$-Bisimilar \aqis\ not satisfying the same FO-CTLK formulas.\label{fig:bisim-systems}}
\end{figure}
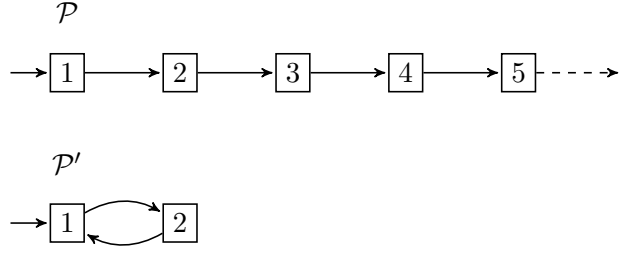

{\bf Example.} Consider Figure~\ref{fig:bisim-systems}, where
$\Const=\emptyset$ and $\P$ and $\P'$ are given as follows. 
For the number $n$ of agents equal to 1, we define
$\D=\D'=\set{P/1}$ and
$U=\mathbb{N}$; $s_0(P)=s_0'(P)=\set{1}$; $\tau=\set{\tup{s,s'}\mid
s(P)=\set{i}, s'(P)=\set{i+1}}$; $\tau'=\set{\tup{s,s'}\mid
s(P)=\set{i}, s'(P)=\set{(i+1)\mod 2}}$.
Notice that $\S\subseteq \D(\mathbb{N})$ and $\S'\subseteq \D(\mathbb{N})$.
Clearly we have that
$\P\approx \P'$. Now, consider the constant-free FO-CTLK formula
%$\varphi=AG \forall x P(x)\ra AX AG \lnot P(x)$,
% which intuitively requires an \aqis\ to
%contain no run with a same value occurring more than once.
%AL: I think we need some parantesys here. 
$\varphi=AG (\forall x (P(x) \ra AX AG \neg P(x)))$. It can be easily
seen that $\P\models\varphi$ while $\P'\not\models\varphi$.

The above shows that $\oplus$-bisimilarity is not a sufficient condition to
guarantee preservation of the satisfaction of FO-CTLK
formulas. Intuitively, this is a consequence of the fact that
$\oplus$-bisimilar \aqis\ do not preserve value associations along runs. For
instance, the value $1$ in $\P'$ is infinitely many times associated
with the odd values occurring in $\P$. By quantifying across states we
are able to express this fact and are therefore able to distinguish
the two structures. This is a difficulty as, intuitively, we would
like to use $\oplus$-bisimulations to demonstrate the existence of finite
abstractions. Indeed, as we will show later, this happens for the
class of \emph{uniform} \aqis, defined below. 
\begin{definition}[Uniformity]\label{def:unif-aqis}
An \aqis\ $\P$ 
%%and a set $C\subseteq U$, $\P$ 
is said to be \emph{uniform} iff for every $s,t,s' \in \S$, $t' \in
\D(U)$, 
%and $\alpha(\vec u) \in Act(U)$,
%%
\begin{enumerate}
	\item\label{req:unif-1} 
if $t \in \tau(s,\vec\alpha(\vec{u}))$ 
%for some $\vec u \in U^{|\vec{x}|}$, 
and $s \oplus t \iso s' \oplus t'$ for some witness $\iota$, then for
every constant-preserving bijection $\iota'$ that extends $\iota$ to
$\vec u$, we have that 
$t' \in \tau(s',\vec\alpha(\iota'(\vec{u})))$; \item\label{req:unif-ind}
if $s \sim_i t$ and $s \oplus t \iso s'\oplus t'$, then $s'\sim_i t'$.
\end{enumerate}
\end{definition}

This definition captures the idea that
actions take into account and operate only on 
the relational structure of states and action parameters, 
irrespectively of the actual data they contain 
(apart from a finite set of constants).
Intuitively, it says that if 
$t$ can be obtained by 
executing $\alpha(\vec u)$ in $s$, and we 
replace in $s$, $\vec u$ and $t$, 
the same element $v$ with
 $v'$, obtaining, say,
 $s'$, $\vec u'$ and $t'$, then $t'$ can be 
 obtained by executing $\alpha(\vec u')$ in $s'$.
In terms of the underlying Kripke structures this means that the
systems are ``full'' up to $\oplus$, i.e., in all uniform \aqis\ the
points $t'$ identified above are indeed part of the system and
reachable from $s'$. 
A similar condition is required on the epistemic relation. 
A useful property of uniform systems is the fact that the latter
requirement is implied by the former, as shown by the following
result.
%%
%%Intuitively, uniformity requires 
%%that transitions 
%%and epistemic relation 
%%do not depend on 
%%the data content of
%%each $\D$-instance, apart from constants in $C$.
%%%This definition of uniformity extends \cite{BelardinelliLP12} as
%%%parametric actions are considered explicitly.
%%%This allows for effective abstraction. 
%%%%
%%%Put differently, the requirements above capture the fact that the 
%%%system is unable to distinguish among states 
%%%containing the same constants and having the same 
%%%data structure.
%%%
%%

\begin{proposition}\label{prop1}
If an \aqis\ $\P$ satisfies req.~\ref{req:unif-1} in Def.~\ref{def:unif-aqis}
and $\adom(s_0)\subseteq C$, then req.~\ref{req:unif-ind} is also satisfied.
\end{proposition}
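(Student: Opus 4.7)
The plan is to reduce req.~\ref{req:unif-ind} to req.~\ref{req:unif-1} via an auxiliary claim about reachable states under isomorphism. The observation that drives everything is this: given $s \sim_i t$ with $l_i^s = l_i^t$, and a witness $\iota$ for $s \oplus t \iso s' \oplus t'$, the restrictions of $\iota$ give $s \iso s'$ and $t \iso t'$, and since $\iota$ is a bijection we automatically get $l_i^{s'} = \iota(l_i^s) = \iota(l_i^t) = l_i^{t'}$. Hence the only thing to verify to conclude $s' \sim_i t'$ is that both $s'$ and $t'$ are in fact reachable, i.e., belong to $\S$.

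So the whole proof reduces to establishing the following intermediate claim: for every $s \in \S$ and every $s' \in \D(U)$ with $s \iso s'$ via a witness $\iota$, we have $s' \in \S$. I would prove this by induction on the length $k$ of a shortest path from $s_0$ to $s$ in $\P$. For the base case $k=0$, the hypothesis $\adom(s_0) \subseteq C$ forces $\iota$ to be the identity on $\adom(s_0)$ (since every witness is the identity on $C$), and hence $s' = s_0 \in \S$.

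For the inductive step, I pick a predecessor $r \in \S$ reachable in $k$ steps and a global action $\vec\alpha(\vec u)$ such that $s \in \tau(r, \vec\alpha(\vec u))$. Starting from the witness $\iota$ for $s \iso s'$, I extend it to a constant-preserving bijection $\hat\iota$ defined on $\adom(s) \cup \adom(r) \cup C \cup \vec u$ by sending the ``extra'' elements of $\adom(r) \cup \vec u$ to fresh values in $U$, which is possible because $U$ is infinite (or at any rate large enough, since this is all we need for the abstraction construction). Setting $r' = \hat\iota(r)$ and $\vec u' = \hat\iota(\vec u)$, we obtain $r \oplus s \iso r' \oplus s'$ with witness $\hat\iota$. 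The inductive hypothesis applied to $r$ yields $r' \in \S$, and then uniformity req.~\ref{req:unif-1} applied to the transition $s \in \tau(r, \vec\alpha(\vec u))$ gives $s' \in \tau(r', \vec\alpha(\vec u')) \subseteq \S$, closing the induction.

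With the claim in hand, the proposition is immediate: restricting the witness $\iota$ for $s \oplus t \iso s' \oplus t'$ to $\adom(s) \cup C$ (resp.~$\adom(t) \cup C$) shows $s \iso s'$ (resp.~$t \iso t'$), so by the claim $s', t' \in \S$, and the observation above gives $l_i^{s'} = l_i^{t'}$, hence $s' \sim_i t'$. The one point that requires care — and I would flag as the main obstacle — is the extension step in the inductive case: one must choose the fresh images for $\adom(r) \setminus \adom(s)$ and for elements of $\vec u$ outside $\adom(s)$ so that $\hat\iota$ remains a bijection and constant-preserving, and so that the resulting $r'$ is genuinely isomorphic to $r$ in a way compatible with the already-fixed isomorphism between $s$ and $s'$. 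This is essentially a bookkeeping argument, but it is the place where the assumption $\adom(s_0) \subseteq C$ interacts nontrivially with the inductive construction, since it anchors the base case and prevents the isomorphism from drifting away from $s_0$ as one unwinds the path.
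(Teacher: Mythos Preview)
Your proposal is correct and follows essentially the same route as the paper: both reduce req.~\ref{req:unif-ind} to showing that the isomorphic image of a reachable state is reachable, by extending the witness along a path from $s_0$ and applying req.~\ref{req:unif-1} step by step, with the hypothesis $\adom(s_0)\subseteq C$ anchoring the base. The only minor differences are organisational: the paper extends $\iota$ once over the entire run and iterates req.~\ref{req:unif-1} directly, whereas you package the same argument as an inductive auxiliary claim; also note that $s'\in\S$ is already part of the hypothesis of req.~\ref{req:unif-ind} (see Def.~\ref{def:unif-aqis}), so only $t'\in\S$ actually needs to be established.
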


\begin{proof}
	If $s \oplus t\iso s'\oplus t'$, then there is a 
	witness $\iota : \adom(s) \cup  \adom(t) \cup C \mapsto \adom(s') \cup  \adom(t') \cup C$ 
	that is the identity on $C$ (hence on $\adom(s_0))$.
	Assume $s \sim_i t$, thus $l_i(s) = l_i(t)$, 
	and $l_i(s')=\iota(l_i(s))=\iota(l_i(t))=l_i(t')$.
	%%Thus, $l_i(s') = l_i(t')$, that is, $s' \sim_i t'$.
	Notice that this does not guarantee that $s' \sim_i t'$, 
	as we need to prove that $t' \in \S$. This can be done by
	showing that $t'$ is reachable from $s_0$. 
%To this end, consider a witness $\iota$ for $s\oplus s'\iso s'\oplus t'$. 
Since $t$ is reachable from $s_0$, there exists a run $s_0 \to s_1 \to \ldots \to s_k$ s.t.~$s_k=t$. 
Extend now $\iota$ to a total and injective function 
$\iota':\adom(s_0)\cup\cdots\cup\adom(s_k)\cup C\mapsto U$.
%s.t.~$\iota'(u) = \iota(u)$ if $u \in ad(s) \cup ad(s') \cup C$.
%%
%%Such a $\iota$ can be proven to exist by cardinality considerations.
%%
This can always be done because
$\card{U}\geq\card{\adom(s_0)\cup\cdots\cup\adom(s_k)\cup C}$.
%% FP: the following does not work.
%%by defining, for instance,  $\iota'(u) = \iota(u)$ whenever $u
%%\in \adom(s) \cup  \adom(t) \cup C$, and $\iota'(u) = u$ otherwise.
%%
Now consider the sequence $\iota'(s_0), \iota'(s_1), \ldots , \iota'(s_k)$.
%%, where for $u \in ad(s_m)$, $\iota(u) = u$ if $u \notin ad(s) \cup ad(s') \cup C$. 
%%
Since $\adom(s_0) \subseteq C$ then $\iota(s_0)=s_0$ and, because $\iota'$ extends $\iota$,
we have that $\iota'(s_0) = \iota(s_0) = s_0$. 
Further, $\iota'(s_k) = \iota(t) = t'$.
By repeated applications of req.~\ref{req:unif-1} 
we can show that
$\iota'(s_{m+1}) \in \tau(\iota'(s_m),\vec\alpha(\iota'(\vec{u})))$
whenever
$s_{m+1} \in \tau(s_m,\vec\alpha(\vec{u}))$, for $m < k$. Hence,
the sequence is actually a run 
%%$s_0 \to \iota(s_1) \to \ldots \to \iota(s_k) = t'$ 
from $s_0$ to $t'$. Thus, $t' \in \S$, and $s'\sim_i t'$.
%\qed
\end{proof}
Thus, as long as $\adom(s_0)\subseteq\Const$, 
to check whether an \aqis\ is uniform, it is sufficient to take
into account  only the transition function. 

A further distinctive feature of uniform systems is that all
isomorphic states are $\oplus$-bisimilar.
 
%We now show that uniformity, together with bisimilarity and boundedness, 
% is sufficient to preserve FO-CTLK formulas,

%In the rest of the paper we assume w.l.o.g.~that any \aqis\ $\P$ is
%s.t.~$\adom(D_0)\subseteq C$. 
%If this is not the case, $C$ can be extended so as to include all the (finitely many)
%elements in $\adom(D_0)$.
%
\begin{proposition}\label{prop:uni-bisim}
If an \aqis\ $\P$ is uniform, then for every $s, s' \in\S$,
$s \iso s'$ implies $s \approx s'$.
\end{proposition}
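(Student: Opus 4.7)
The plan is to exhibit an explicit $\oplus$-bisimulation whose domain is exactly the isomorphism relation on $\S$. Specifically, I would define
\[ B \;=\; \{ \tup{s,s'}\in \S\times\S \mid s \iso s' \}, \]
and show that $B$ is an $\oplus$-bisimulation of $\P$ with itself containing $\tup{s,s'}$. Since $\iso$ is symmetric, $B = B^{-1}$, so by Definition~\ref{def:bisim-ext} it suffices to verify that $B$ is an $\oplus$-simulation, i.e., that the three clauses of Definition~\ref{def:sim-ext} hold.

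Clause 1 is immediate by the definition of $B$. For clause 2 (the temporal case), suppose $\tup{s,s'}\in B$ and $s \ra t$, witnessed by some $\vec\alpha(\vec u)$ with $t\in\tau(s,\vec\alpha(\vec u))$, and let $\iota:\adom(s)\cup\Const\to\adom(s')\cup\Const$ be a witness for $s\iso s'$. The key construction is to extend $\iota$ to a constant-preserving injection $\iota^{*}$ on $\adom(s)\cup\adom(t)\cup\Const\cup\vec u$, picking fresh images in $U$ for each element of $(\adom(t)\cup\vec u)\setminus(\adom(s)\cup\Const)$; this is possible provided $U$ has enough unused individuals. Set $t' := \iota^{*}(t)$ and $\vec u' := \iota^{*}(\vec u)$. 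By construction, $\iota^{*}|_{\adom(s)\cup\adom(t)\cup\Const}$ is a bijection witnessing $s\oplus t \iso s'\oplus t'$, and in particular $t \iso t'$. Then clause~\ref{req:unif-1} of Definition~\ref{def:unif-aqis} applied to $s,t,s',t'$ and the extension $\iota^{*}$ yields $t' \in \tau(s',\vec\alpha(\vec u'))$; since the codomain of $\tau$ is $2^{\S}$, this forces $t'\in\S$ and hence $s'\ra t'$. Finally $\tup{t,t'}\in B$ because $t \iso t'$.

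For clause 3 (the epistemic case), the argument is parallel: given $s\sim_i t$ with $t\in\S$, I construct $t'$ by the same extension procedure, obtaining $s\oplus t \iso s'\oplus t'$, and then invoke clause~\ref{req:unif-ind} of Definition~\ref{def:unif-aqis} to conclude $s'\sim_i t'$ (which in particular requires $t'\in\S$, as asserted by the uniformity condition). Then $\tup{t,t'}\in B$ since $t\iso t'$.

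The only delicate step is guaranteeing that the extension $\iota^{*}$ to $\adom(t)$ and $\vec u$ can be chosen inside $U$: we need enough ``fresh'' individuals beyond $\adom(s')\cup\Const$. This is where one should either rely on $U$ being infinite (the standing assumption for \aqis) or remark that only finitely many new elements are ever needed at any single step, so by an inductive renaming argument the required images can always be found. Once this is in place, the verification of both uniformity clauses is essentially a direct syntactic check, so the heart of the proof is the careful choice of $\iota^{*}$ and $t':=\iota^{*}(t)$ so that $s\oplus t\iso s'\oplus t'$ holds with witness extending $\iota$ to $\vec u$ exactly as required by Definition~\ref{def:unif-aqis}.
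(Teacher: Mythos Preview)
Your proposal is correct and follows essentially the same approach as the paper: define $B=\{\tup{s,s'}\in\S\times\S\mid s\iso s'\}$, use symmetry of $\iso$ to reduce to showing $B$ is an $\oplus$-simulation, and for each clause extend a witness $\iota$ of $s\iso s'$ to an injection $\iota^{*}$ covering $\adom(t)\cup\vec u$, set $t'=\iota^{*}(t)$, and invoke the corresponding clause of uniformity. The paper's proof is the same argument, with the cardinality point handled by the brief phrase ``by cardinality considerations'' rather than your more explicit discussion.
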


\begin{proof}
We prove that $B=\set{\tup{s,s'}\in\S\times \S\mid s\iso s'}$ is a
$\oplus$-bisimulation.  Observe that since $\iso$ is an equivalence
relation, so is $B$.  Thus $B$ is symmetric and $B=B^{-1}$.
Therefore, proving that $B$ is a $\oplus$-simulation proves also that
$B^{-1}$ is a $\oplus$-simulation; hence, that $B$ is a
$\oplus$-bisimulation.
To this end, let $\tup{s,s'}\in B$, and
assume $s \to t$ for some $t \in \S$. Then,
$t \in \tau(s,\alpha(\vec{u}))$ for some $\alpha(\vec{u}) \in Act(U)$.
Consider a witness $\iota$ for $s \iso s'$.  By cardinality
considerations $\iota$ can be extended to a total and injective
function $\iota' : \adom(s) \cup \adom(t) \cup \{ \vec u \}\cup C \mapsto U$.  Consider
$\iota'(t) = t'$; it follows that $\iota'$ is a witness for $s \oplus
t \iso s' \oplus t'$. Since $\P$ is uniform,
$t' \in \tau(s',\alpha(\iota'(\vec{u})))$, that is, $s' \to t'$.
Moreover, $\iota'$ is a witness for $t \iso t'$, thus $\tup{t, t'}\in
B$.
Next assume that $\tup{s,s'}\in B$ and
	$s \sim_i t$, for some $t \in \S$. 
	By reasoning as above we can
	find
	%%
%	Consider 
a witness $\iota$ for $s \iso s'$, and an extension $\iota'$ of
	$\iota$ 
%	By cardinality considerations
%it can be seen that
%	$\iota$ can be extended to a total and injective function 
%	$\iota' : \adom(s) \cup \adom(t) \cup C \mapsto U$, for
%	instance by assuming that $\iota'(u) = u$ whenever $u
%\notin \adom(s) \cup  \adom(t) \cup C$. 
%	s.t.~$\iota'|_{\adom(s)\cup C} = \iota$. 
%Then, by defining 
	s.t.~$t' = \iota'(t)$ and $\iota'$ is a witness for $s \oplus
	t \iso s' \oplus t'$. Since $\P$ is uniform, $s' \sim_i t'$
%	Moreover, $\iota'$ is a witness for $t \iso t'$, thus 
        and $\tup{t,
	t'}\in B$.
\end{proof}

This result intuitively means that submodels generated by
isomorphic states are $\oplus$-bisimilar.

Next we prove some partial results, which will be useful in proving
our main preservation theorem.  The first two results guarantee that
under appropriate cardinality constraints the $\oplus$-bisimulation
preserves the equivalence of assignments w.r.t.~a given FO-CTLK
formula.
\begin{lemma}\label{lem:nxt-uniform}
Consider two $\oplus$-bisimilar and uniform \aqis\ $\P$ and $\P'$, two
$\oplus$-bisimilar states $s \in \S$ and $s' \in \S'$,
% s.t.~$s \approx s'$, 
and an FO-CTLK formula $\varphi$.
% over $\D$.
%% s.t.~$\const(\varphi) \subseteq C$.
%%
For every assignments $\sigma$ and $\sigma'$
equivalent for $\varphi$ w.r.t.~$s$ and $s'$, we have that:
\begin{enumerate}
\item\label{req:nxt-uniform-1} for every $t \in \S$ s.t.~$s \ra t$, if  
	$\card{U'} \geq\ |\adom(s)\cup\adom(t)\cup
	C \cup \sigma(\free(\varphi))|$, then there exists
	$t' \in \S'$ s.t.~$s' \ra t'$, $t \approx t'$, and $\sigma$
	and $\sigma'$ are equivalent for $\varphi$ w.r.t.~$t$ and
	$t'$.
\item\label{req:nxt-uniform-2} for every $t \in \S$ s.t.~$s \sim_i t$, if
	$\card{U'}\geq |\adom(s)\cup\adom(t)\cup
	C \cup \sigma(\free(\varphi))|$, then there exists
	$t' \in \S'$ s.t.~$s' \sim_i t'$, $t \approx t'$, and $\sigma$
	and $\sigma'$ are equivalent for $\varphi$ w.r.t.~$t$ and
	$t'$.
\end{enumerate}
\end{lemma}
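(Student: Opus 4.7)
The plan is to prove both parts by a parallel argument; I describe part~\ref{req:nxt-uniform-1} in detail, as part~\ref{req:nxt-uniform-2} is analogous (invoking condition~\ref{two} of Def.~\ref{def:sim-ext} and req.~\ref{req:unif-ind} of Def.~\ref{def:unif-aqis} in place of their temporal counterparts). The strategy is to use the $\oplus$-bisimulation between $\P$ and $\P'$ to produce a first candidate successor $t^*\in\S'$ of $s'$, and then exploit uniformity of $\P'$ to replace $t^*$ by a carefully constructed $t'$ that is compatible with the assignment $\sigma'$ on the free variables of $\varphi$.

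First I would unpack the equivalence of assignments: letting $V=\free(\varphi)$, Def.~\ref{def:eq-as} provides a bijection $\gamma:\adom(s)\cup C\cup\sigma(V)\to\adom(s')\cup C\cup\sigma'(V)$ which is the identity on $C$, whose restriction to $\adom(s)\cup C$ witnesses $s\iso s'$, and which satisfies $\sigma'|_V=\gamma\circ\sigma|_V$. The cardinality hypothesis on $U'$ allows me to extend $\gamma$ to an injection $\gamma_t:\adom(s)\cup\adom(t)\cup C\cup\sigma(V)\to U'$, since the domain size matches the available room in $U'$ exactly. I then set $t'=\gamma_t(t)$, viewed as an element of $\D(U')$, so that by construction $\gamma_t|_{\adom(s)\cup\adom(t)\cup C}$ is a witness for $s\oplus t\iso s'\oplus t'$.

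To show that $t'\in\S'$ and $s'\to t'$, I apply condition~\ref{one} of Def.~\ref{def:sim-ext} to the transition $s\to t$, obtaining some $t^*\in\S'$ with $s'\to t^*$, $t\approx t^*$, and $s\oplus t\iso s'\oplus t^*$ via a witness $\iota$. Composing $\gamma_t|_{\adom(s)\cup\adom(t)\cup C}$ with $\iota^{-1}$ then yields a witness, still the identity on $C$, for $s'\oplus t^*\iso s'\oplus t'$. Since $t^*\in\tau'(s',\vec\beta(\vec w))$ for some ground action, req.~\ref{req:unif-1} of Def.~\ref{def:unif-aqis} applied to $\P'$ gives $t'\in\tau'(s',\vec\beta(\iota''(\vec w)))$ for any suitable extension $\iota''$ of the composed witness, whence $s'\to t'$ and in particular $t'\in\S'$. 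The relation $t\approx t'$ follows by combining $t\approx t^*$ with $t^*\approx t'$, the latter obtained from $t^*\iso t'$ via Prop.~\ref{prop:uni-bisim} applied to the uniform $\P'$, and transitivity of $\approx$. Finally, the restriction $\gamma_t|_{\adom(t)\cup C\cup\sigma(V)}$ is exactly the bijection required by Def.~\ref{def:eq-as} to witness that $\sigma$ and $\sigma'$ are equivalent for $\varphi$ w.r.t.~$t$ and $t'$.

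The main difficulty I anticipate is the reconciliation of two distinct isomorphism witnesses: the witness $\iota$ delivered by the $\oplus$-bisimulation (targeting $\adom(s')\cup\adom(t^*)\cup C$) and the extension $\gamma_t$ (targeting $\adom(s')\cup\adom(t')\cup C$), which in general disagree even on $\adom(s')\cup C$. Uniformity of $\P'$ is precisely the lever that lets me transport the existing transition $s'\to t^*$ over to $s'\to t'$; without it, the instance $t'$ built by extending $\gamma$ would be a plausible candidate but not guaranteed to be reachable in $\P'$, and the whole reduction to the bisimulation-delivered $t^*$ would collapse.
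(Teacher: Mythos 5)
Your proposal is correct and follows essentially the same route as the paper's proof: obtain a candidate successor via the $\oplus$-bisimulation, build $t'$ by extending the assignment-equivalence witness $\gamma$ injectively over $\adom(t)$ (the paper phrases this as an explicit partition into a $\gamma$-part and a fresh part mapped into $U'\setminus \im(\gamma)$), then invoke uniformity of $\P'$ to transfer the transition to $t'$, and conclude $t\approx t'$ via Proposition~\ref{prop:uni-bisim} and transitivity, with assignment equivalence read off the construction. The paper likewise proves only the temporal clause and declares the epistemic clause analogous.
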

\textbf{Proof.}
To prove~(\ref{req:nxt-uniform-1}), let $\gamma$ be a bijection
witnessing that $\sigma$ and $\sigma'$ are equivalent for $\varphi$
w.r.t.~$s$ and $s'$.
% as in Def.~\ref{def:eq-as}.
%%
Suppose that $s \ra t$. 
Since $s \approx s'$, by definition of $\oplus$-bisimulation 
there exists $t'' \in \S'$ s.t.~$s' \ra t''$, 
$s \oplus t \iso s' \oplus t''$, and $t \approx t''$.
Now, define $Dom_j \doteq \adom(s)\cup\adom(t)\cup
C$, and partition it into:
\begin{itemize}
	\item $Dom_\gamma \doteq \adom(s)\cup C\cup(\adom(t)\cap \sigma(\free(\varphi))$;
	\item $Dom_{\iota'} \doteq \adom(t)\setminus Dom_\gamma$.
\end{itemize}

Let $\iota':Dom_{\iota'} \mapsto U' \setminus \im(\gamma)$ be an
invertible (total) function. 
Observe that $\card{\im(\gamma)}
= \card{\adom(s') \cup C \cup \sigma'(\free(\varphi))}
= |\adom(s)\cup C\cup \sigma(\free(\varphi))|$, 
thus from the fact that  $\card{U'} \geq \card{\adom(s) \cup \adom(t) \cup C\cup \sigma(\free(\varphi))}$
we have $\card{U' \setminus\im(\gamma)} \geq \card{Dom(\iota')}$, 
which guarantees the existence of $\iota'$.

Next, define $j: Dom_j\mapsto U'$ as follows:
\[
	j(u) =\left \{
		\begin{array}{l}
			\gamma(u) \mbox{, if } u\in Dom_\gamma\\
			\iota'(u) \mbox{, if } u\in Dom_{\iota'}
		\end{array}\right.	
\]
%(recall that, by Def.~\ref{def:eq-as}, $\im(\gamma)=\adom(D_2)\cup C\cup\im(\sigma_2|_{\free(\varphi)})$).

Obviously, $j$ is invertible. Thus, $j$ is a witness
for $s \oplus t \iso s' \oplus t'$, where $t' = j(t)$.
Since $s \oplus t \iso s' \oplus t''$ and $\iso$ is an equivalence
relation,
% (thus transitive), 
$s' \oplus t' \iso s' \oplus t''$.
% (hence $D_2'\iso  D''_2$). 
Thus, $s' \ra t'$, as $\P'$ is uniform.
Moreover, 
%It can be seen that 
$\sigma$ and $\sigma'$ are equivalent for $\varphi$ 
w.r.t.~$t$ and $t'$, by construction of $t'$. 
To check that $t \approx t'$, observe that, 
since $t' \iso  t''$ and $\P'$ is uniform, 
by Prop.~\ref{prop:uni-bisim} it follows that  $t' \approx  t''$.
Thus, since $t \approx  t''$ and $\approx$ 
is transitive, we obtain that $t \approx  t'$.
The proof for~(\ref{req:nxt-uniform-2}) has an analogous structure and
is omitted.
\qed

It can be proven that this result is tight, i.e., that if the 
cardinality requirement is violated, 
there exist cases where assignment equivalence is not 
preserved along temporal or epistemic transitions.

Lemma~\ref{lem:nxt-uniform} easily generalizes to t.e. runs.
\begin{lemma}\label{lem:run-uniform}
Consider two $\oplus$-bisimilar and uniform \aqis\ $\P$ and $\P'$, 
two $\oplus$-bisimilar states $s \in \S$ and $s' \in \S'$,
%s.t.~$s_1 \approx s_2$,
an FO-CTLK formula $\varphi$,
% over $\s$.
%% s.t.~$\const(\varphi)\subseteq C$.
%%
and two assignments $\sigma$ and $\sigma'$  
equivalent for $\varphi$ w.r.t.~$s$ and $s'$.
For every t.e.~run $r$ of $\P$, if $r(0) = s$ and 
for all $i \geq 0$, 
$\card{U'}\geq |\adom(r(i))\cup\adom(r(i+1))\cup C\cup \sigma(\free(\varphi))|$,
then there exists a t.e.~run $r'$ of $\P'$ s.t.~for all $i\geq 0$: 
\begin{itemize}
	\item[\myi] $r'(0)=s'$;
	\item[\myii] $r(i) \approx r'(i)$;
	\item[\myiii] $\sigma$ and $\sigma'$ are equivalent for $\varphi$ w.r.t.~$r(i)$ and $r'(i)$.
	\item[\myiv] for every $i\geq 0$, 
		if $r(i) \ra r(i+1)$ then $r'(i) \ra r'(i+1)$, and
		if $r(i) \sim_j r(i+1)$, for some $j$,  then $r'(i) \sim_j r'(i+1)$.
\end{itemize}
\end{lemma}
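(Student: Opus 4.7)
The plan is a straightforward induction on the index $i$, using Lemma~\ref{lem:nxt-uniform} as a one-step lifting result. For the base case I set $r'(0) = s'$; then~\myi holds by definition, \myii follows from $s \approx s'$, and \myiii from the hypothesis that $\sigma$ and $\sigma'$ are equivalent for $\varphi$ w.r.t.~$s$ and $s'$. Condition~\myiv is vacuous at index $0$.

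For the inductive step, assume $r'(0), \ldots, r'(i)$ have been defined so that $r(i) \approx r'(i)$ and $\sigma, \sigma'$ are equivalent for $\varphi$ w.r.t.~$r(i)$ and $r'(i)$. Since $r$ is a t.e.~run, either $r(i) \ra r(i+1)$ or $r(i) \sim_j r(i+1)$ for some agent $j$. The cardinality hypothesis of the present lemma gives precisely the requirement $\card{U'} \geq |\adom(r(i))\cup\adom(r(i+1))\cup C\cup \sigma(\free(\varphi))|$ needed to invoke Lemma~\ref{lem:nxt-uniform} at the pair $(r(i), r'(i))$. Applying clause~\ref{req:nxt-uniform-1} or~\ref{req:nxt-uniform-2} of that lemma, depending on the type of transition, I obtain $r'(i+1) \in \S'$ such that $r'(i) \ra r'(i+1)$ (respectively $r'(i) \sim_j r'(i+1)$), $r(i+1) \approx r'(i+1)$, and $\sigma, \sigma'$ are again equivalent for $\varphi$ w.r.t.~$r(i+1)$ and $r'(i+1)$. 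This simultaneously discharges \myii, \myiii, and \myiv at step $i+1$.

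Concatenating the one-step extensions yields an infinite sequence $r'$ of states in $\S'$ connected at each step by either $\ra$ or some $\sim_j$, which is by definition a t.e.~run of $\P'$ starting at $s'$ and satisfying \myi--\myiv at every index. The one point worth noting is that the construction is genuinely inductive only because Lemma~\ref{lem:nxt-uniform} delivers both $r(i+1)\approx r'(i+1)$ \emph{and} the preserved equivalence of $\sigma, \sigma'$ for $\varphi$; the latter is precisely the hypothesis required to re-apply the lemma at step $i+1$. Hence no backtracking or adjustment of a previously chosen prefix is ever needed, and the induction carries through without further obstacle.
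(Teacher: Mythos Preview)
Your proof is correct and follows essentially the same approach as the paper: a direct induction on $i$, setting $r'(0)=s'$ and repeatedly invoking Lemma~\ref{lem:nxt-uniform} to produce $r'(i+1)$ of the appropriate transition type while preserving both $r(i+1)\approx r'(i+1)$ and the equivalence of $\sigma,\sigma'$ for $\varphi$. Your explicit observation that the inductive hypothesis must carry both the $\oplus$-bisimilarity and the assignment equivalence is exactly the point the paper relies on implicitly.
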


\begin{proof}
Let $r$ be a t.e.~run s.t.~$\card{U'}\geq
|\adom(r(i))\cup\adom(r(i+1))\cup C\cup \sigma(\free(\varphi))|$ for
all $i\geq 0$.  We inductively build $r'$ and show that the conditions
above are satisfied.  For $i = 0$, let $r'(0) = s'$. By hypothesis,
$r$ is s.t.~$\card{U'} \geq \card{\adom(r(0))\cup\adom(r(1))\cup
C \cup \sigma(\free(\varphi))}$.  Thus, since $r(0) \leadsto r(1)$, by
Lemma~\ref{lem:nxt-uniform} there exists $t' \in \S'$ s.t.~$r'(0) \leadsto
t'$, $r(1)\approx t'$, and $\sigma$ and $\sigma'$ are equivalent for
$\varphi$ w.r.t.~$r(1)$ and $t'$.  Let $r'(1)= t'$.
Lemma~\ref{lem:nxt-uniform} guarantees that the transitions
$r'(0) \leadsto t'$ and $r(0) \leadsto r(1)$ can be chosen so that
they are either both temporal or both epistemic with the same index.
	
The case for $i > 0$ is similar.  Assume that $r(i)\approx r'(i)$ and
$\sigma$ and $\sigma'$ are equivalent for $\varphi$ w.r.t.~$r(i)$ and
$r'(i)$. Since $r(i)\leadsto r(i+1)$ and $\card{U'}\geq
|\adom(r(i))\cup\adom(r(i+1))\cup C\cup \sigma(\free(\varphi))|$, by
Lemma~\ref{lem:nxt-uniform} there exists $t' \in \S'$
s.t.~$r'(i)\leadsto t'$, $\sigma$ and $\sigma'$ are equivalent for
$\varphi$ w.r.t.~$r(i+1)$ and $t'$, and $r(i+1)\approx t'$. Let
$r'(i+1) = t'$. It is clear that $r'$ is a t.e.~run in $\P'$, and
that, by Lemma~\ref{lem:nxt-uniform}, the transitions of $r'$ can be
chosen so as to fulfill requirement \myiv.
\end{proof}

We can now prove the following result, which states that FO-CTLK
formulas cannot distinguish $\oplus$-bisimilar and uniform \aqis. This is in
marked contrast with the earlier example in this section which
operated on $\oplus$-bisimilar but non-uniform \aqis.
\begin{theorem}\label{th:bis-iff-eq}
Consider two $\oplus$-bisimilar and uniform \aqis\ $\P$ and $\P'$, 
two $\oplus$-bisimilar states $s \in \S$ and $s' \in \S'$, 
%s.t.~$s \approx s'$, 
an FO-CTLK formula $\varphi$,
%% s.t.~$\const(\varphi) \subseteq C$,
and two assignments $\sigma$ and $\sigma'$ 
equivalent for $\varphi$ w.r.t.~$s$ and $s'$. 

If
\begin{enumerate}
	\item\label{req:run-card-v} 
%for $i,j = 1,2$, $i \neq j$, 
for every t.e.~run $r$ s.t.~$r(0)=s$, 
		for all $k\geq 0$ we have  
		$\card{U'} \geq |\adom(r(k))\cup\adom(r(k+1))\cup C\cup    
 \sigma(\free(\varphi))| +
		|\vars(\varphi)\setminus\free(\varphi)|$; and
%		for all $k\geq 0$; 
	\item\label{req:run-card-v2} 
for every t.e.~run $r'$ s.t.~$r'(0)=s'$, 
		for all $k\geq 0$ we have  
		$\card{U} \geq |\adom(r'(k))\cup\adom(r'(k+1))\cup C \cup \sigma'(\free(\varphi))| +
		\card{\vars(\varphi)\setminus\free(\varphi)}$;
%		for all $k\geq 0$; 
%for $i,j = 1,2$, for every $s'_i$ s.t.~$s_i \sim_k s_i'$, we have  
%		$\card{U_j} \geq \card{\adom(s_i)\cup\adom(s_i')\cup C\cup\im(\sigma_i|_{\free(\varphi)})}+
%		\card{\vars(\varphi)\setminus\free(\varphi)}$.
\end{enumerate}
then 
\begin{eqnarray*}
(\P, s ,\sigma) \models \varphi & \text{iff} & (\P', s', \sigma') \models \varphi.
\end{eqnarray*}
\end{theorem}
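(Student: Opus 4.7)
The plan is to prove the theorem by structural induction on $\varphi$, with the cardinality slack $|\vars(\varphi)\setminus\free(\varphi)|$ in the hypotheses being consumed one unit at a time as we descend into quantifiers. I would prove only the left-to-right direction explicitly, since the right-to-left one is symmetric by swapping the roles of $\P$ and $\P'$ and using $\sigma'$, $\sigma$ in place of $\sigma$, $\sigma'$ (requirement~\ref{req:run-card-v2} is the mirror of~\ref{req:run-card-v}).

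For the base case, when $\varphi$ is an FO-formula, the result follows immediately from Proposition~\ref{prop:iso-inst-eq}, since $s\approx s'$ implies $s\iso s'$ and $\sigma$, $\sigma'$ are equivalent for $\varphi$ by assumption. The Boolean connectives are routine. For $\varphi\equiv\forall x\psi$, assume $(\P,s,\sigma)\models\varphi$ and pick any $u'\in\adom(s')$; I would use a witness $\gamma$ of the equivalence of $\sigma,\sigma'$ w.r.t.~$s,s'$ to define $u=\gamma^{-1}(u')$ when $u'\in\im(\gamma)$, and otherwise pick any fresh $u\in\adom(s)$, extending $\gamma$ so that $\sigma\genfrac{}{}{0pt}{}{x}{u}$ and $\sigma'\genfrac{}{}{0pt}{}{x}{u'}$ remain equivalent for $\psi$ w.r.t.~$s,s'$. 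The cardinality condition still holds for $\psi$ because $|\vars(\psi)\setminus\free(\psi)|$ decreases by at most one compared with $\varphi$ while $|\sigma(\free(\psi))|$ grows by at most one, so the induction hypothesis applies and yields $(\P',s',\sigma'\genfrac{}{}{0pt}{}{x}{u'})\models\psi$; since $u'$ was arbitrary, we conclude.

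For $\varphi\equiv AX\psi$, I would argue by contradiction: a counterexample run $r'$ in $\P'$ with $r'(1)\not\models\psi$ under $\sigma'$ lifts, via the bisimulation clause of Definition~\ref{def:sim-ext} applied to $B^{-1}$ (which is also a $\oplus$-simulation) together with seriality, to a $t\in\S$ with $s\to t$, $t\approx r'(1)$, and $\sigma,\sigma'$ equivalent for $\psi$ w.r.t.~$t,r'(1)$; the cardinality hypothesis~\ref{req:run-card-v} restricted to $\psi$ is clearly inherited from that of $\varphi$, so the induction hypothesis yields the contradiction. The cases $E\psi U\psi'$ and $A\psi U\psi'$ are the heart of the argument and I would handle them by invoking Lemma~\ref{lem:run-uniform} directly: given a run $r$ in $\P$ witnessing $E\psi U\psi'$ at $s$, condition~\ref{req:run-card-v} supplies the cardinality bound needed along $r$, so the lemma yields a t.e.~run $r'$ in $\P'$ starting at $s'$ with $r(i)\approx r'(i)$ and $\sigma,\sigma'$ equivalent for $\varphi$ (hence for $\psi$ and $\psi'$) at each level, and requirement~\myiv of the lemma ensures $r'$ is a purely temporal run. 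The induction hypothesis then transfers the witness index $k$. The $A\psi U\psi'$ case is analogous by contradiction, lifting a counterexample run in $\P'$ back to $\P$ via $B^{-1}$ and Lemma~\ref{lem:run-uniform}.

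For the epistemic cases $K_i\psi$ and $C\psi$, I would argue similarly by contradiction, using clause~\ref{req:nxt-uniform-2} of Lemma~\ref{lem:nxt-uniform} for the single epistemic step needed by $K_i$, and Lemma~\ref{lem:run-uniform} together with requirement~\myiv for $C$, since common knowledge is expressed through t.e.~runs consisting exclusively of epistemic steps. The main obstacle throughout is bookkeeping the cardinality budget: each nested quantifier introduces at most one new value into the assignment, and this is exactly compensated by the slack $|\vars(\varphi)\setminus\free(\varphi)|$ in conditions~\ref{req:run-card-v} and~\ref{req:run-card-v2}, so the bounds needed to invoke Lemmas~\ref{lem:nxt-uniform} and~\ref{lem:run-uniform} on subformulas are preserved along the induction. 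Once this accounting is set up cleanly, the rest of the proof is a routine lift of the propositional CTLK bisimulation argument to the first-order setting.
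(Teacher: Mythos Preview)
Your approach is essentially the paper's: structural induction, left-to-right only, base case via Proposition~\ref{prop:iso-inst-eq}, modal cases via Lemmas~\ref{lem:nxt-uniform}/\ref{lem:run-uniform}, and careful accounting of the slack $|\vars(\varphi)\setminus\free(\varphi)|$ as quantifiers are unwound. The bookkeeping you sketch in the last paragraph is exactly the paper's mechanism.

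There is, however, a genuine gap in your $AX$ case. You claim that ``the bisimulation clause of Definition~\ref{def:sim-ext} applied to $B^{-1}$'' yields $t\in\S$ with $s\to t$, $t\approx r'(1)$, \emph{and} $\sigma,\sigma'$ equivalent for $\psi$ w.r.t.\ $t,r'(1)$. The bisimulation clause only gives you $s\oplus t\iso s'\oplus r'(1)$ for \emph{some} witness $\iota$; there is no reason $\iota$ should agree with your original $\gamma$ on $\adom(s)\cup\Const$, let alone extend to $\sigma(\free(\varphi))$. Different isomorphism witnesses for $s\iso s'$ exist in general, and assignment equivalence is tied to the particular $\gamma$. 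The whole point of Lemma~\ref{lem:nxt-uniform} is to use \emph{uniformity} to replace the arbitrary successor given by the bisimulation with one chosen so that the assignment equivalence is preserved (its proof explicitly builds $t'$ from $\gamma$ and then appeals to uniformity to show $s'\to t'$). You correctly invoke Lemma~\ref{lem:nxt-uniform} for $K_i$ and Lemma~\ref{lem:run-uniform} for $EU$, $AU$, $C$; you must do the same for $AX$ (the paper uses Lemma~\ref{lem:run-uniform} there too). Once you make that substitution the argument goes through.

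A minor remark on your $\forall x\psi$ case: the ``otherwise'' branch never fires. Since $\gamma|_{\adom(s)\cup\Const}$ is a witness for $s\iso s'$, it restricts to a bijection $\adom(s)\to\adom(s')$, so every $u'\in\adom(s')$ already has a preimage $u=\gamma^{-1}(u')\in\adom(s)$. The paper simply runs the argument in the forward direction (pick $u\in\adom(s)$, map to $\gamma(u)\in\adom(s')$) and notes that $\gamma$ being a bijection covers all of $\adom(s')$.
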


\begin{proof}
The proof is by induction on the structure of $\varphi$. We prove
that if $(\P, s ,\sigma) \models \varphi$ then $(\P',
s', \sigma') \models \varphi$.  The other direction can be proved
analogously. 
The base case for atomic formulas follows from
Prop.~\ref{prop:iso-inst-eq}.
The inductive cases for propositional connectives are straightforward.
%\smallskip

For $\varphi\equiv\forall x\psi$, assume that $x\in\free(\psi)$
(otherwise consider $\psi$, and the corresponding case), and no variable is quantified more than
once (otherwise rename the other variables).  Let $\gamma$ be a bijection
witnessing that $\sigma$ and $\sigma'$ are equivalent for $\varphi$
w.r.t.~$s$ and $s'$.
For $u\in\adom(s)$, consider the assignment $\sigma{x \choose u}$.  By
definition, $\gamma(u)\in\adom(s')$, and $\sigma'{x \choose \gamma(u)}$
is well-defined.  Note that 
%%$\const(\psi)=\const(\varphi)$ and
$\free(\psi)=\free(\varphi)\cup\set{x}$; so
$\sigma{x \choose u}$ and $\sigma'{x \choose \gamma(u)}$ are
equivalent for $\psi$ w.r.t.~$s$ and $s'$.
%(with $\gamma$ as a witness).
Moreover, $\card{\sigma{x \choose
u}(\free(\psi))}\leq\card{\sigma(\free(\varphi))}+1$, as $u$ may not
occur in $\sigma(\free(\varphi))$.  The same considerations apply to $\sigma'$.
%$\card{\im(\sigma'{x \choose \gamma(u)}|_{\free(\psi)})}\leq\card{\im(\sigma'|_{\free(\varphi)})}+1$, 
Further, $\card{\vars(\psi)\setminus{\free(\psi)}}
= \card{\vars(\varphi)\setminus{\free(\varphi)}}-1$, as
$\vars(\psi)=\vars(\varphi)$, $\free(\psi)=\free(\varphi)\cup\set{x}$,
and $x\notin\free(\varphi)$.  
Thus, both hypotheses \ref{req:run-card-v}.~and 
\ref{req:run-card-v2}.~remain satisfied if we 
        replace $\varphi$ with $\psi$, $\sigma$ with
        $\sigma{x \choose u}$, and $\sigma'$ with
        $\sigma'{x \choose \gamma(u)}$.
Therefore, by the induction hypothesis,
        %%
%%        $\forall u\in\adom(s)$, 
        if $(\P,s,\sigma{x\choose u})\models\psi$ then
        $(\P',s', \sigma'{x\choose \gamma(u)})\models\psi$.
Since $u\in\adom(s)$ is generic and $\gamma$ is a bijection, 
the result follows.

%	\smallskip
For $\varphi\equiv AX\psi$, assume by contradiction that
$(\P,s, \sigma)\models\varphi$ but
$(\P',s', \sigma')\not\models\varphi$.  Then, there exists a run $r'$
s.t.~$r'(0)=s'$ and $(\P',r'(1), \sigma')\not\models\psi$.  By
Lemma~\ref{lem:run-uniform}, which applies as
$\card{\vars(\varphi)\setminus{\free(\varphi)}}\geq 0$, there exists a
run $r$ s.t.~$r(0)=s$, for all $i\geq 0$, $r(i)\approx r'(i)$ and
$\sigma$ and $\sigma'$ are equivalent for $\psi$ w.r.t.~$r(i)$ and
$r'(i)$.  Since $r$ is a run s.t.~$r(0)=s$, it satisfies
hypothesis~\ref{req:run-card-v}.  Moreover, the same hypothesis is
necessarily satisfied by all the t.e.~runs $r''$ s.t., , for some
$i\geq 0$, $r''(0)=r(i)$ (otherwise, the t.e.~run $r(0)\cdots r(i)
r''(1) r''(2)\cdots$ would not satisfy the hypothesis); the same
considerations apply w.r.t~hypothesis~\ref{req:run-card-v2} and for
all the t.e.~runs $r'''$ s.t.~$r'''(0)=r'(i)$, for some $i\geq 0$.  In
particular, these hold for $i=1$.
Thus, we can inductively apply the Lemma, by replacing $s$ with
$r(1)$, $s'$ with $r'(1)$, and $\varphi$ with $\psi$ (observe that
$\vars(\varphi)=\vars(\psi)$ and $\free(\varphi)=\free(\psi))$.  But
then we obtain $(\P,r(1),\sigma)\not\models\psi$, thus
$(\P,r(0), \sigma)\not\models AX\psi$. This is a contradiction.
	
%	\smallskip
For $\varphi\equiv E\psi U\phi$, assume that the only variables common
to $\psi$ and $\phi$ occur free in both formulas (otherwise
rename the quantified variables).
Let $r$ be a run s.t.~$r(0)=s$, and there exists $k\geq 0$
s.t.~$(\P,r(k), \sigma)\models \phi$, and
$(\P,r(j), \sigma)\models \psi$ for $0\leq j <k$.  By
Lemma~\ref{lem:run-uniform} there exists a run $r'$ s.t.~$r'(0)=s'$,
and for all $i\geq 0$, $r'(i)\approx r(i)$, and $\sigma$ and $\sigma'$
are equivalent for $\varphi$ w.r.t.~$r'(i)$ and $r(i)$.
From each bijection $\gamma_i$ witnessing that $\sigma$ and $\sigma'$
are equivalent for $\varphi$ w.r.t.~$r'(i)$ and $r(i)$, define the
bijections $\gamma_{i,\psi}=\gamma_i|_{\adom(r(i))\cup
C\cup \sigma(\free(\psi))}$ and
$\gamma_{i,\phi}=\gamma_i|_{\adom(r(i))\cup
C\cup \sigma(\free(\phi))}$.  Since
$\free(\psi)\subseteq\free(\varphi), \free(\phi)\subseteq\free(\varphi)$,
it can be seen that $\gamma_{i,\psi}$ and $\gamma_{i,\phi}$ witness
that $\sigma$ and $\sigma'$ are equivalent for respectively $\psi$ and
$\phi$ w.r.t.~$r'(i)$ and $r(i)$.
By the same argument used for the $AX$ case above, 
hypothesis~\ref{req:run-card-v} holds for all the t.e.~runs $r''$ s.t.~$r''(0)=r(i)$,
for some $i \geq 0$, and hypothesis~\ref{req:run-card-v2} holds for all the
 t.e.~runs $r'''$ s.t.~$r'''(0)=r'(i)$.
Now observe that
%$\sigma(\free(\phi)),\sigma(\free(\psi))\subseteq \sigma(\free(\varphi))$,
$\card{\sigma(\free(\phi))},\card{\sigma(\free(\psi))}\leq \card{\sigma(\free(\varphi))}$.
Moreover, by the assumption on the common variables of $\psi$ and $\phi$,
$(\vars(\varphi)\setminus\free(\varphi))=(\vars(\psi)\setminus\free(\psi))\uplus(\vars(\phi)\setminus\free(\phi))$,
%%$(\vars(\psi)\setminus\free(\psi))\cap(\vars(\phi)\setminus\free(\phi))=\emptyset$
thus
$\card{\vars(\varphi)\setminus\free(\varphi)}=\card{(\vars(\psi)\setminus\free(\psi)}+\card{(\vars(\phi)\setminus\free(\phi)}$, 
hence
$\card{(\vars(\psi)\setminus\free(\psi)},\card{(\vars(\phi)\setminus\free(\phi)}\leq \card{\vars(\varphi)\setminus\free(\varphi)}$.
Therefore hypotheses~\ref{req:run-card-v} and \ref{req:run-card-v2}
hold also with $\varphi$ uniformly replaced by $\psi$ or $\phi$.
Then, the induction hypothesis applies for each $i$, by replacing $s$
with $r(i)$, $s'$ with $r'(i)$, and $\varphi$ with either $\psi$ or
$\phi$.
Thus, for each $i$, $(\P,r(i), \sigma)\models\psi$ iff
$(\P',r'(i), \sigma')\models\psi$, and $(\P,r(i), \sigma)\models\phi$
iff $(\P',r'(i), \sigma')\models\phi$.  
Therefore, $r'$ is a run
s.t.~$r'(0)=s'$, $(\P',r'(k), \sigma')\models\phi$, and for every
$j$, $0\leq j < k$ implies $(\P',r'(j), \sigma')\models\psi$, i.e.,
$(\P',s', \sigma')\models E\psi U\phi$.

%\smallskip
For $\varphi\equiv A\psi U\phi$, assume by contradiction that
$(\P,s,\sigma)\models\varphi$ but
$(\P',s', \sigma')\not\models\varphi$.  Then, there exists a run $r'$
s.t.~$r'(0)=s'$ and for every $k\geq 0$, either
$(\P',r'(k), \sigma')\not\models\phi$ or there exists $j$
s.t.~$0\leq j < k$ and $(\P',r'(j), \sigma')\not\models\psi$.
By Lemma~\ref{lem:run-uniform} there exists a run $r$
s.t.~$r(0)=s$, and for all $i\geq 0$, $r(i)\approx r'(i)$ and $\sigma$ and
$\sigma'$ are equivalent for $\varphi$ w.r.t.~$r(i)$ and $r'(i)$.
Similarly to the case of $E\psi U\phi$, 
it can be shown that 
%in particular 
$\sigma$ and $\sigma'$ are equivalent for $\psi$ and $\phi$
w.r.t.~$r(i)$ and $r'(i)$, for all $i\geq 0$.
Further, assuming w.l.o.g.~that all variables common to $\psi$ and
$\phi$ occur free in both formulas, it can be shown, as in the
case of $E\psi U\phi$, that the induction hypothesis holds on every
pair of runs obtained as suffixes of $r$ and $r'$, starting from their
i-th state, for every $i\geq 0$.
%%by replacing $s$ with $r(i)$, 
%%$s'$ with $r'(i)$, and $\varphi$ with either $\psi$ or $\phi$.
Thus, $(\P,r(i), \sigma)\models\psi$ iff
$(\P',r'(i), \sigma')\models\psi$, and $(\P,r(i), \sigma)\models\phi$
iff $(\P',r'(i), \sigma')\models\phi$. But then $r$ is s.t.~$r(0)=s$
and for every $k\geq 0$, either $(\P,r(k), \sigma)\not\models\phi$ or
there exists $j$ s.t.~$0\leq j < k$ and
$(\P,r(j), \sigma)\not\models\psi$, that is,
$(\P,s, \sigma)\not\models A\psi U \phi$. This is a contradiction.
%\end{comment}	 
	
%	\smallskip
For $\varphi \equiv K_i \psi$, 
%proceed as in the case of $AX\psi$. 
assume by contradiction that $(\P,s, \sigma)\models\varphi$ but
$(\P',s', \sigma')\not\models\varphi$.
Then, there exists $s''$ s.t.~$s' \sim_i s''$ and
$(\P',s'',\sigma') \not \models\psi$.
By Lemma~\ref{lem:run-uniform} there exists $s'''$ s.t.~$s''' \approx
s''$, $s \sim_i s'''$, and $\sigma$ and $\sigma'$ are equivalent for
$\psi$ w.r.t.~$s''$ and $s'''$.
Thus, by an argument analogous to that used for the case of $AX$,
we can apply the induction hypothesis,
obtaining $(\P,s''',\sigma)\not \models \psi$. 
But then $(\P,s, \sigma)\not \models K_i \psi$, which is a contradiction.

Finally, for $\varphi \equiv C \psi$, 
%proceed as in the case of $AX\psi$. 
assume by contradiction that $(\P,s, \sigma)\models\varphi$ but
$(\P',s', \sigma')\not\models\varphi$.
Then, there exists an $s''$ s.t.~$s' \sim s''$ and
$(\P',s'',\sigma') \not \models\psi$.
Again by Lemma~\ref{lem:run-uniform} there exists $s'''$ s.t.~$s''' \approx
s''$, $s \sim s'''$, and $\sigma$ and $\sigma'$ are equivalent for
$\psi$ w.r.t.~$s''$ and $s'''$.
Thus, by an argument analogous to that used for the case of $K_i$,
we can apply the induction hypothesis,
obtaining $(\P,s''',\sigma)\not \models \psi$. 
But then $(\P,s, \sigma)\not \models C \psi$, which is a contradiction.
\end{proof}

We can now easily extend the above result to the model checking
problem for \aqis.
\begin{theorem}\label{th:bis-iff}
Consider two $\oplus$-bisimilar and uniform \aqis\ $\P$ and $\P'$, 
%two states $D \in \S$ and $D' \in \S'$ s.t.~$D \approx D'$, 
and an FO-CTLK formula $\varphi$.
%% s.t.~$\const(\varphi) \subseteq C$. 

If
\begin{enumerate}
	\item\label{req:run-card-gen} 
for all t.e.~runs $r$ s.t.~$r(0)=s_{0}$,
	and for all $k\geq 0$,
		$\card{U'} \geq |\adom(r(k))\cup\adom(r(k+1))\cup C| +
		|\vars(\varphi)|$, and
	\item\label{req:run-card-gen2} 
		for all t.e.~runs $r'$ s.t.~$r'(0)=s'_{0}$, 
		and for all $k\geq 0$,
		$\card{U} \geq |\adom(r'(k))\cup\adom(r'(k+1))\cup C| +
		|\vars(\varphi)|$
\end{enumerate}
then 
\begin{eqnarray*}
\P \models \varphi & \text{iff} & \P' \models \varphi.
\end{eqnarray*}
\end{theorem}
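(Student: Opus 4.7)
\Proofsk
The plan is to reduce the statement to Theorem~\ref{th:bis-iff-eq} applied at the initial states $s_0$ and $s_0'$. Since $\P\approx\P'$ means $s_0\approx s_0'$, and $\oplus$-bisimilarity entails isomorphism (so $s_0\iso s_0'$), the two preconditions of Theorem~\ref{th:bis-iff-eq} that remain to be verified are: \myi the existence of equivalent assignments $\sigma,\sigma'$ for $\varphi$ w.r.t.\ $s_0$ and $s_0'$, and \myii the cardinality hypotheses \ref{req:run-card-v} and \ref{req:run-card-v2} of that theorem, under the stronger hypotheses \ref{req:run-card-gen} and \ref{req:run-card-gen2} stated here.

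I would proceed as follows. Assume $\P\models\varphi$; I want to show $(\P',s_0',\sigma')\models\varphi$ for an arbitrary assignment $\sigma'$. Fix a witness $\iota:\adom(s_0)\cup C \mapsto \adom(s_0')\cup C$ of $s_0\iso s_0'$, and use hypothesis \ref{req:run-card-gen2} (taking $r'$ as any run from $s_0'$) together with the fact that $U$ is not a priori bounded to extend $\iota^{-1}$ to an injective function $\gamma^{-1}:\adom(s_0')\cup C\cup\sigma'(\free(\varphi))\mapsto U$ whose image avoids clashes; define $\sigma$ on $\free(\varphi)$ by $\sigma(x)=\gamma^{-1}(\sigma'(x))$ (and arbitrarily elsewhere). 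By construction $\sigma$ and $\sigma'$ are equivalent for $\varphi$ w.r.t.\ $s_0,s_0'$. Since $\P\models\varphi$, we have $(\P,s_0,\sigma)\models\varphi$, so it only remains to match the cardinality bounds.

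For this, observe that $|\sigma(\free(\varphi))|\le|\free(\varphi)|$ and $|\vars(\varphi)|=|\free(\varphi)|+|\vars(\varphi)\setminus\free(\varphi)|$; hence for every $k\ge 0$ and every t.e.~run $r$ from $s_0$,
\[
|\adom(r(k))\cup\adom(r(k{+}1))\cup C\cup\sigma(\free(\varphi))| + |\vars(\varphi)\setminus\free(\varphi)| \le |\adom(r(k))\cup\adom(r(k{+}1))\cup C| + |\vars(\varphi)|,
\]
so hypothesis~\ref{req:run-card-gen} here implies hypothesis~\ref{req:run-card-v} of Theorem~\ref{th:bis-iff-eq}, and symmetrically hypothesis~\ref{req:run-card-gen2} implies hypothesis~\ref{req:run-card-v2}. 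Theorem~\ref{th:bis-iff-eq} then yields $(\P',s_0',\sigma')\models\varphi$, and since $\sigma'$ was arbitrary we obtain $\P'\models\varphi$. The converse direction is symmetric, using hypothesis~\ref{req:run-card-gen} to extend $\iota$ to cover $\sigma(\free(\varphi))$.

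The only mildly delicate step is the construction of the equivalent assignments: one has to ensure that the extension of the witness isomorphism to the (finitely many) values in $\sigma'(\free(\varphi))$ remains injective and constant-preserving, which is where the cardinality clause $|\vars(\varphi)|$ (absorbing both $|\free(\varphi)|$ and $|\vars(\varphi)\setminus\free(\varphi)|$) does its work. Everything else is a direct instantiation of Theorem~\ref{th:bis-iff-eq}.
\qed
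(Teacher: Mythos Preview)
Your proposal is correct and follows essentially the same approach as the paper: reduce to Theorem~\ref{th:bis-iff-eq} at the initial states by (a) constructing an assignment equivalent to a given one via the cardinality hypotheses, and (b) observing that hypotheses~\ref{req:run-card-gen} and~\ref{req:run-card-gen2} majorise hypotheses~\ref{req:run-card-v} and~\ref{req:run-card-v2} of Theorem~\ref{th:bis-iff-eq}. The paper phrases the argument contrapositively (from a falsifying $\sigma$ in $\P$ build a falsifying $\sigma'$ in $\P'$) while you argue directly (from an arbitrary $\sigma'$ in $\P'$ build an equivalent $\sigma$ in $\P$), but this is the same argument up to logical packaging; your explicit inequality justifying the implication between the cardinality hypotheses is in fact more detailed than what the paper writes. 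One small wording issue: the phrase ``$U$ is not a priori bounded'' is misleading, since nothing in the statement forces $U$ to be infinite---what actually does the work is hypothesis~\ref{req:run-card-gen2}, which you correctly invoke anyway.
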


\begin{proof}
 Equivalently, we prove that if
$(\P,s_{0},\sigma) \not \models \varphi$ for some $\sigma$, then there
exists a $\sigma'$ such that $(\P',s'_{0},\sigma') \not \models \varphi$,
and viceversa.  To this end, observe that
hypotheses~\ref{req:run-card-gen}.~and \ref{req:run-card-gen2}.~imply,
respectively, hypotheses~\ref{req:run-card-v}.
and~\ref{req:run-card-v2}. of Theorem~\ref{th:bis-iff-eq}.
Further, notice that, by cardinality considerations, given the
assignment $\sigma: Var \mapsto U$, there exists an assignment
$\sigma': Var \mapsto U'$ s.t.~$\sigma$ and $\sigma'$ are equivalent
for $\varphi$ w.r.t.~$s_{0}$ and $s'_{0}$.
Thus, by applying Theorem~\ref{th:bis-iff-eq} we have that if there
exists an assignment $\sigma$
%:Var\mapsto U$
s.t.~$(\P,s_{0},\sigma)\not\models \varphi$, then there exists an
assignment $\sigma'$
%:Var\mapsto U'$
s.t.~$(\P',s'_{0},\sigma')\not\models \varphi$.  
%Thus, the thesis
%holds.
The converse can be proved analogously, as the hypotheses are symmetric.
\end{proof}

This result shows that uniform \aqis\ can in principle be verified by
model checking a $\oplus$-bisimilar one. Note that this applies to
infinite \aqis\ $\P$ as well. In this case the results above enable us
to show that the verification question can be posed on the
corresponding, possibly finite, $\P'$ as long as $U'$, as defined
above, is sufficiently large for $\P'$ to $\oplus$-bisimulate $\P$.
A noteworthy class of infinite systems for which these results prove 
particularly powerful is that of bounded \aqis, which, as discussed 
in the next subsection, always admit a finite abstraction.

\subsection{Finite Abstractions} \label{sec:finabs2}
We now  combine the notion of uniformity explored so far in this
section with the assumption on boundedness made in
Section~\ref{sec:finabs}. Our aim remains to identify conditions under
which the verification of an infinite \aqis\ can be reduced to the
verification of a finite one. Differently from
Section~\ref{sec:finabs} we here operate on the full FO-CTLK
specification language. The main result here is given by
Corollary~\ref{cor:preservation} which guarantees that, in the context
of bounded \aqis, uniformity is a sufficient condition for
$\oplus$-bisimilar finite abstractions to be satisfaction
preserving. 

%%   combine the results of the previous 
%%  analyze the impact of the boundedness assumption on
%% the verification of \aqis\ against FO-CTLK specifications.
%% %%
%% Similarly to Section~\ref{sec:finabs}, given a bounded 
%% \aqis\ $\P$, our aim is to 
%% define a finite-state, thus verifiable, 
%% \aqis\ that preserves satisfaction of the same FO-CTLK formulas 
%% satisfied by $\P$.
%% %%
%% To this end, we first introduce a notion 
%% of \aqis\ abstraction, then prove a preservation result 
%% between an \aqis\ and its abstractions,  
%% and finally show that, for any bounded \aqis, a finite 
%% abstraction always exists.
%% %%
%% Uniformity plays a central role here, as 
%% we exploit it to prove the preservation 
%% theorem, and to derive, through Theorem~\ref{th:bis-iff}, 
%% a cardinality upper bound on the interpretation 
%% domain of the finite abstraction.

In the following we assume that any \aqis\ $\P$ is such that
$\adom(s_0)\subseteq \Const$.  If this is not the case, $\Const$ can
be extended so as to include all the (finitely many) elements in
$\adom(s_0)$.
Further, 
%in what follows, 
we recall that  $N_{Ag}$ is the sum of the maximum numbers of parameters 
contained in the action types of each agent in $Ag$, i.e.,
$N_{Ag} = \sum_{A_i \in Ag} \max_{ \alpha(\vec x) \in Act_i} \{|\vec x|  \}$.

We start by formalizing the notion of $\oplus$-abstraction.
\begin{definition}[$\oplus$-Abstraction] \label{ref:abs}
Let $\P = \tup{ \S, U, s_0, \tau }$ be an \aqis\ over $Ag$, and $Ag'$
the set of abstract agents obtained as in
Definition~\ref{def:ab-agents}, for some domain $U'$.  The \aqis\ $\P'
= \tup{ \S', U', s'_0, \tau' }$ over $Ag'$ is said to be an {\em
  $\oplus$-abstraction} of $\P$ iff:
\begin{itemize}
\item $s'_{0} = s_{0}$;
%%\item  $\S' \subseteq \D(U')$;
\item $t' \in \tau'(s',\vec\alpha(\vec u'))$ iff there exist $s,
  t \in \S$ and $\vec\alpha(\vec u) \in Act(U)$, such that 
  $s \oplus t \simeq s' \oplus t'$, for some witness $\iota$,
$t \in \tau(s,\vec\alpha(\vec u))$,   
and $\vec u' = \iota'(\vec u)$ for some bijection $\iota'$ extending
$\iota$ to $\vec u$.
\end{itemize}
\end{definition}

Notice that $\P'$ is indeed an \aqis \ as it satisfies the relevant
conditions on protocols and transitions in
Definition~\ref{def:sys-db-ag}.  Indeed, if $t' \in
\tau'(s',\vec\alpha(\vec u'))$, then there exist $s, t \in \S$, and
$\vec\alpha(\vec u)$ such that~$t \in \tau(s,\vec\alpha(\vec u))$, $s
\oplus t \simeq s' \oplus t'$ for some witness $\iota$, and $\vec u =
\iota'(\vec u')$ for some bijection $\iota'$ extending $\iota$.  This
means that $\alpha_i(\vec{u}_i) \in Pr_i (l_i)$ for $i \leq n$.  By
definition of $Pr_i'$ we have that $\alpha_i(\vec{u}'_i) \in Pr'_i
(l'_i)$ for $i \leq n$.  Further, if $U'$ has finitely many elements,
then $\S'$ has finitely many states.
%% Further, $\P'$  
%% is finite-state whenever
%% $U'$ is. 
Observe that by varying $U'$ we obtain different $\oplus$-abstractions.

Next, we investigate the relationship between an \aqis\ and its
$\oplus$-abstractions.  A first useful result states that every finite
$\oplus$-abstraction is uniform, independently of the properties of
the \aqis\ they abstract.
%if $s'' \in
%\tau'(s',\alpha(\vec u))$, then $\alpha_{i'} \in
%Pr_{i'}(l_{i'}(s'))$ for all $i' \in Ag'$.

%\begin{comment}   

%Further, by definition of $\tau'$ we have that $\tau'(s,\alpha(\vec
%u)) = \langle \tau_E(s(l_E),\alpha_E(\vec u)), \tau(s(l),\alpha(\vec
%u)), \ldots , \tau_n(s(l_n),\alpha_n(\vec u)) \rangle$.  In fact, if
%$s'' \in \tau'(s',\alpha(\vec u))$ then there are $s$, $s'$, and
%$\vec u'$ s.t.~$s' \in \tau(s,\alpha(\vec u'))$, $s \oplus s' \simeq
%s' \oplus s''$ for some witness $\iota$, and $\vec u = \iota'(\vec
%u')$ for some bijection $\iota'$ extending $\iota$.  In particular,
%$s'(l_{i}) \in \tau_{i}(s(l_{i}),\alpha_{i}(\vec u'))$ for every
%agent $i$. By definition of agent $i'$, $s''(l_{i'}) \in
%\tau_{i'}(s'(l_{i'}),\alpha_{i'}(\vec u))$. Thus the result follows.

\begin{lemma} \label{lem_unif}
Every $\oplus$-abstraction $\P'$ of an \aqis \ $\P$ is uniform.
\end{lemma}

\begin{proof}
Consider $s,t,s' \in \S'$, $t' \in \D(U')$, and $\vec\alpha(\vec u)
\in Act'(U')$ s.t.~$t \in \tau'(s,\vec\alpha(\vec{u}))$ and $s \oplus
t \iso s' \oplus t'$, for some witness $\zeta$. We need to show that
$\P'$ admits a transition from $s'$ to $t'$. 
%AL: Cannot refer to u'' yet, hence the above.
%%
Since $\P'$ is an $\oplus$-abstraction of $\P$, given the definition of $\tau'$,
there exist $s'',t''\in \S$ and $\vec\alpha(\vec u'') \in Act(U)$
s.t.~$t'' \in \tau(s'',\vec\alpha(\vec u''))$, 
$s'' \oplus t'' \simeq s \oplus t$, for some witness $\iota$, and 
$\vec u = \iota'(\vec u'')$, for some constant-preserving bijection $\iota'$
extending $\iota$ to $\vec u''$.
Consider $\vec u'\in U'^{\card{\vec u}}$ such that~$\vec u'=\zeta'(\vec u)$, for some constant-preserving bijection $\zeta'$ extending $\zeta$ to $\vec u$.
Obviously, the composition $\zeta'\circ\iota'$ is a constant-preserving bijection
such that $\vec u'=\zeta'(\iota'(\vec u''))$. Moreover, it can be easily restricted to a 
witness for $s''\oplus t''\simeq s'\oplus t'$.
But then, since $\P'$ is an $\oplus$-abstraction of $\P$, this implies that
$t'\in\tau'(s',\vec\alpha(\vec u'))$. Thus, $\P'$ is uniform.  
\end{proof}

%%\begin{proof}
%%The result follows by the definition of $\tau'$ in $\P'$ and
%%uniformity of $\P$.
%%%
%%Consider $s,t,s' \in \S'$, $t' \in \D(U')$, and $\alpha \in Act'$
%%s.t.~$t \in \tau'(s,\alpha(\vec{u}))$ for some $\vec u \in U'^n$, and
%%$s \oplus t \iso s' \oplus t'$ for some witness $\zeta$.  By
%%definition of $\tau'$ there are $s''$, $t''$, and $\vec u' \in U^n$
%%s.t.~$t'' \in \tau(s'',\alpha(\vec u'))$, $s'' \oplus t'' \simeq s
%%\oplus t$ for some witness $\iota$, and $\vec u = \iota(\vec u')$.
%%Hence, $s'' \oplus t'' = \iota^{-1}(s) \oplus \iota^{-1}(t) \simeq
%%\iota^{-1}(s') \oplus \iota^{-1}(t')$ for the witness $\iota^{-1}
%%\circ \zeta \circ \iota$. Since $\P$ is uniform,
%%%By composing $\iota$ and $\zeta$ we obtain that 
%%%Now, $\iota$ can be extended to a function $\iota'$ s.t. 
%% $\iota^{-1}(t')
%%\in \tau(\iota^{-1}(s'),\alpha(\iota^{-1} \circ \zeta (\vec{u}))))$.
%%Again by definition of $\P'$ this means that  $t'
%%\in \tau(s',\alpha(\zeta (\vec{u})))$.
%%Thus, $\P'$ is uniform.  
%%\end{proof}

The second result below guarantees that every $b$-bounded \aqis\ is
bisimilar to any of its $\oplus$-abstractions, provided these are
built over a sufficiently large interpretation domain.
\begin{lemma} \label{theor1}
Consider a uniform, $b$-bounded \aqis\ $\P$ over an infinite
interpretation domain $U$,
%over a set $Ag$ of agents s.t.~$U$ is infinite, 
%%an FO-CTLK formula $\varphi$, 
and an interpretation domain $U' $ such that $C\subseteq U'$.
If $\card{U'} \geq 2b + \card{C} + N_{Ag}$, then any
$\oplus$-abstraction
%%s.t.~$\const(\varphi) \subseteq C$, 
$\P'$ of $\P$ over $U'$ is bisimilar to $\P$.
%%%\end{itemize}
%%Morever, 
%%\begin{eqnarray*}
%%\P \models \varphi & \text{iff} & \P' \models \varphi
%%\end{eqnarray*}
\end{lemma}

\begin{proof}
Let $B=\set{\tup{s,s'}\in\S\times \S'\mid s\simeq s'}$. 
We prove that $B$ is a $\oplus$-bisimulation 
%%between $\P$ and $\P'$.
such that $\tup{s_0,s'_0}\in B$.
We start by proving that $B$ is a $\oplus$-simulation relation.
To this end, observe that since $s_0=s'_0$, 
then $s_0\simeq s'_0$, and $\tup{s_0,s_0'}\in B$.
Next, consider $\tup{s, s'}\in B$, 
thus $s\simeq s'$.
Assume that $s \to t$, for some $t \in \S$.
Then, there must exist $\vec\alpha(\vec u) \in Act(U)$
such that $t \in \tau(s,\vec\alpha(\vec u))$.
%s.t.~$s' \in \tau(s,\alpha(\vec u))$ for some $\vec u \in U^{|\vec{x}|}$. 
%Thus, for each $\alpha_{i}(\vec u_{i})$, 
%and $i=1,\ldots,n$:
%\begin{itemize}
%	\item $\alpha_{i}(\vec u_{i}) \in Pr_{i}(l_{i})$; 
%	\item $s \oplus s'\models\psi_i(\vec w_i)$.
%\end{itemize}
%%
%
Moreover, since $\card{U'}\geq 2b + \card{C}+N_{Ag}$,
$\sum_{A_i \in Ag} \card{\vec u_i} \leq N_{Ag}$,
% ($i=1,\ldots,n$), 
and $\card{\adom(s)\cup\adom(t)}\leq 2b$,
the witness $\iota$ for $s\iso  s'$ can 
be extended to 
$\bigcup_{A_i \in Ag} \vec u_i$ as a bijection $\iota'$.
Now let $t'= \iota'(t)$. By the way $\iota'$ has been defined,
it can be seen that
$s \oplus t \iso s'\oplus t'$.
Further, since $\P'$ is an $\oplus$-abstraction of $\P$, we have that
$t' \in \tau'(s',\vec\alpha(\vec u'))$ for $\vec u' = \iota'(\vec u)$, 
that is, $s' \to t'$ in $\P'$.
Therefore, there exists $t' \in\S'$
such that $s' \to t'$, $s \oplus t \iso s' \oplus t'$, and 
$\tup{t, t'}\in B$.
As regards the epistemic relation, assume $s \sim_i t$ for some $i\in
\set{1,\ldots, n}$ and $t \in \S$.  By definition of $\sim_i$,
$l_i(s)=l_i(t)$.  Since $\card{U'}\geq 2b+ \card{C}$, any witness
$\iota$ for $s\simeq s'$ can be extended to a witness $\iota'$ for
$s\oplus t\simeq s'\oplus t'$, where $t'=\iota'(t)$. Obviously,
$l_i(s')=l_i(t')$. Thus, to prove that $s'\sim_i t'$, we need to show
that $t'\in \S'$, i.e., that $t'$ is reachable in $\P'$ from
$s'_0=s_0$.  To this end, observe that since $t\in \S$, there exists a
purely temporal run $r$ such that $r(0)=s_0$ and $r(k) = t$, for
some $k\geq 0$.  Thus, there exist also $\vec\alpha^1(\vec
u^1)\ldots,\vec\alpha^k(\vec u^k)$ such that
$r(j+1)\in\tau(r(j),\vec\alpha^{j+1}(\vec u^{j+1}))$, for $0\leq j<
k$.  Since $\card{U'}\geq 2b + \card{C}$, we can define, for $0\leq
j< k$, a function $\iota_j$ that is a witness for $r(j)\oplus
r(j+1)\simeq \iota_j(r(j))\oplus \iota_j(r(j+1))$.
In particular, this can be done starting from $j=k-1$, defining
$\iota_{k-1}$ so that $\iota_{k-1}(r(k))=\iota_{k-1}(t)=t'$, and 
proceeding backward to $j=0$, guaranteeing that, for $0\leq j<k$,
$\iota_{j}(r(j+1))=\iota_{j+1}(r(j+1))$.
Observe that since $\adom(s_0)\subseteq C$, necessarily 
$i_0(r(0))=i_0(s_0)=s_0=s'_0$.
Moreover, as $\card{U'}\geq 2b + \card{C}+N_{Ag}$, each 
$\iota_j$ can be extended to a bijection $\iota'_j$, to the elements occurring 
in $\vec u^{j+1}$. 
Thus, given that $\P'$  is an $\oplus$-abstraction of $\P$, for $0\leq j < k$,
we have that $\iota_j'(r(j+1))\in \tau(\iota_j'(r(j)),\vec\alpha(\iota_j'(\vec u^{j+1})))$.
Hence, the sequence $\iota_0'(r(0))\to\cdots\to\iota_{k-1}'(r(k))$ is a run 
of $\P'$, and, since $t'=\iota_{k-1}'(r(k))$, $t'$ is reachable in $\P'$.
Therefore $s'\sim_i t'$.  Further, since $t\simeq t'$, by definition
of $B$, it is the case that $\tup{t,t'}\in B$, hence $B$ is a
$\oplus$-simulation.

To prove that $B^{-1}$ is a $\oplus$-simulation,
given $\tup{s,s'}\in B$ (thus $s\simeq s'$),
assume that $s' \to t'$, for some $t' \in \S'$.
Obviously, there exists $\vec\alpha(\vec u') \in Act(U')$
such that $t' \in \tau'(s',\vec\alpha(\vec u'))$.
Because $\P'$ is an $\oplus$-abstraction of $\P$,
there exist $s'',t''\in \S$ and $\vec\alpha(\vec u'')\in Act(U)$ 
such that $s''\oplus t''\simeq s' \oplus t' $, for
some witness $\iota$, and 
$t''\in\tau(s'',\alpha(\vec u''))$, with $\vec u''=\iota'(\vec u')$,
for some bijection $\iota'$ extending $\iota$ to $\vec u'$.
Observe that $s' \simeq s''$, thus, by transitivity of $\simeq$, 
we have $s\simeq s''$.  The fact that there exists $t\in \S$ such 
that $s\to t$ easily follows from the uniformity of $\P$.
Thus, since $t'\simeq t$, we have $\tup{t,t'}\in B$.
For the epistemic relation, assume $s'\sim_i t'$, for some $t'\in \S'$
and $0 < i\leq n$.  Let $\iota$ be a witness for $s'\simeq s$, and
let $\iota'$ be an extension of $\iota$ that is a witness for
$s'\oplus t'\simeq s\oplus t$.  For $t=\iota'(t')$, it can be seen
that $l_i(s)=l_i(t)$.  Observe that $t'\in\S'$.  Using an argument
essentially analogous to the one above, but exploiting the fact that
$\P$ is uniform, that $\P'$ is certainly $b$-bounded, and that
$\card{U}>2b+\card{C}+N_{Ag}$ as $U$ is infinite, we show that
$t\in\S$ by constructing a run $r$ of $\P$ such that $r(k)=t$, for
some $k\geq 0$. Then $s\sim_i t$. Further, since $t'\simeq t$, we have
$\tup{t,t'}\in B$. Therefore, $B^{-1}$ is a $\oplus$-simulation. So, 
$\P$ and $\P'$ are bisimilar.
\end{proof}

This result allows us to prove  our main abstraction theorem.
\begin{theorem} \label{theor2}
Consider a $b$-bounded and uniform \aqis\ $\P$ over an infinite 
interpretation domain $U$, 
%over a set $Ag$ of agents s.t.~$U$ is infinite, 
an FO-CTLK formula $\varphi$, and an interpretation domain $U' $ such
that $C\subseteq U'$.  If $\card{U'} \geq 2b + \card{C} +
\max\set{\card{vars(\varphi)},N_{Ag}}$, then for any
$\oplus$-abstraction
%%s.t.~$\const(\varphi) \subseteq C$, 
$\P'$ of $\P$ over $U'$, we have that:
\begin{eqnarray*}
\P \models \varphi & \text{iff} & \P' \models \varphi.
\end{eqnarray*}
\end{theorem}
\begin{proof}
By Lemma~\ref{lem_unif}, $\P'$ is uniform.
Thus, by the hypothesis on the cardinalities of $U$ and $U'$,
Lemma~\ref{theor1} applies, so $\P$ and $\P'$ are bisimilar. 
Obviously, also $\P'$ is $b$-bounded.  Thus, since $\P$ and $\P'$ are
$b$-bounded, and by the cardinality hypothesis on $U$ and $U'$,
Theorem~\ref{th:bis-iff} applies. In particular, notice that for every
temporal-epistemic run $r$ s.t.~$r(0)=s_{0}$, and for all $k\geq 0$,
we have that $\card{U'} \geq |\adom(r(k))\cup\adom(r(k+1)) \cup C| +
|\vars(\varphi)|$, as $\card{\adom(r(k))}\leq b$, by $b$-boundedness.
Therefore, $\P \models \varphi$ iff $\P' \models\varphi$.
\end{proof}

Note that the theorem above does not require $U'$ to be infinite.  So,
by using a sufficient number of abstract values in $U'$, we can in
principle reduce the verification of an infinite, bounded, and uniform
\aqis\ to the verification of a finite one. The following corollary
to Theorem~\ref{theor2} states this clearly.
%%
%% Therefore, the following result can be proven as a 
%% corollary to Theorem~\ref{theor2}.
\begin{corollary}\label{cor:preservation}
Given a $b$-bounded and uniform \aqis\ $\P$ over an infinite
interpretation domain $U$, and an FO-CTLK formula $\varphi$, there
exists an \aqis\ $\P'$ over a finite interpretation domain $U'$ such
that $\P \models \varphi \text{ iff } \P' \models \varphi.$
\end{corollary}
It should also be noted that $U'$ can simply be taken to be any finite
subset of $U$ satisfying the cardinality requirement above. By doing
so, the finite $\oplus$-abstraction $\P'$ can be defined simply as the
restriction of $\P$ to $U'$.  Thus, every infinite, $b$-bounded and
uniform \aqis\ is bisimilar to a finite subsystem which satisfies the
same formulas. 
%%, which can be effectively generated.

Note that, similarly to what noted at page
\pageref{page:nonconstructive} we are not concerned in the actual
construction of the finite abstraction. This is because we intend to
construct it directly from an artifact-centric program, as we will do
in Section~\ref{sec:as}.  Before that we explore the complexity of the
model checking problem.
% on explicit structures.
 \label{abstraction}

\section{The Complexity of Model Checking Finite \aqis\ against FO-CTLK Specifications}

We now analyse the complexity of the model checking problem for finite
AC-MAS with respect to FO-CTLK specifications. The
input of the problem consists of an AC-MAS $\P$ on a finite domain $U$ and an
FO-CTLK formula $\varphi$; the solution is an assignment $\sigma$ such
that $(\P, s_0, \sigma) \models \varphi$.
%%
%Given $\S$, $b$, and $\varphi$, this essentially requires: building 
%$\hat \K_{b,\varphi}$ (from $\hat\S_{b,\varphi}$); transforming $\varphi$ into a propositional CTL formula $\varphi_p$, by recursively 
%replacing each formula of the form $\forall x\varphi(x)$ with 
%$\bigwedge_{\hat u\in  \hat U}\varphi(\hat u)$; and then applying an algorithm for  CTL model 
%checking, to check whether $\hat \K_{b,\varphi}\models \varphi_p$. This gives the following result.
%%
Hereafter we follow \cite{Grohe01} for basic notions and definitions.
To encode an AC-MAS $\P$ we use
a tuple $E_\P=\tup{U,\D,s_0, \Phi_\tau}$,
where $U$ is the (finite) interpretation domain, 
$\D$ is the global database schema,
$s_0$ is the initial state, 
and $\Phi_\tau=\set{\phi_{\alpha_1},\ldots,\phi_{\alpha_m}}$ 
is a set of FO-formulas, each capturing the transitions 
associated with a ground action $\alpha_i$.
Since $U$ is finite, 
so is the set of ground actions, thus $\Phi_\tau$.
Each  $\varphi_{\alpha_i}$ is a FO-formula
over local predicate symbols, in both normal and ``primed'' form, 
that is, $\phi_\alpha$ can mention both $P$ and $P'$. 
For the semantics of $\Phi_\tau$, we have that
$s'\in\tau(s,\alpha)$ iff 
$s\oplus s'\models \phi_\alpha$, for $s,s'\in \D(U)$.
It can be proved that every transition relation $\tau$ can be represented in this way, 
and that, given $E_\P$, the size $||\P||\doteq\card{\S}+\card{\tau}$ of the 
corresponding \aqis\ $\P$ is at most doubly exponential in 
$||E_\P||\doteq\card{U}+||\D||+\card{\Phi_\tau}$, 
where $||\D||=\sum_{P_k\in \D} q_k$, for $q_k$ the 
arity of $P_k$. In particular, 
$||\P||=\card{\S}+\card{\tau}\leq 
2^{3\cdot 2^{||E_\P||^4}}$.

%%[\textbf{FP} (only for us to check) $||\P||=\card{\S}+\card{\tau}=\card{\S}+m \card{\S}^2=
%%\card{\S}+\card{E_\tau}\card{\S}^2\leq
%%\card{\S}+||E_\P||\cdot \card{\S}^2\leq
%%||E_\P||\cdot \card{\S}^3$.
%%But for $a=||E_\P||$ we have: 
%%$\card{\S}\leq 2^{U^{q_1}}\times \cdots\times 2^{U^{q_\ell}}\leq
%%(2^{a^{a}})^a=2^{(a(a^{a}))}=
%%2^{(a^{a+1})}\leq 
%%2^{((2^a)^{a+1})}=
%%2^{(2^{(a+1)a})}=
%%2^{(2^{a^2+a)}}\leq 
%%2^{2^{a^3}}$. 
%%Then $
%%||\P||\leq ||E_\P||\cdot \card{\S}^3\leq
%%a (2^{2^{a^3}})^3= 
%%a (2^{3\cdot 2^{a^3}})\leq 
%%(2^{3\cdot 2^{a^4}})$.]

%%The length $||\P||$ of the encoding of the AC-MAS $\P$ is given by 
%%$||\P|| = \Theta(|U| + ||\D||)$, where $||\D|| = \sum_{P_k \in \D} q_k $, where
%%$q_k$ is the arity of the predicate $P_k$. 
%%As usual, we write $f(n) = \Theta(g(n))$ to state that there exist
%%$n_0, k_1, k_2 \in \mathbb{N}$ such that $k_1 \cdot g(n) \leq
%%f(n) \leq k_2 \cdot g(n)$ for all $n \geq n_0$.  Further,
%%%We fix some reasonable way to encode FO-CTLK formulas; 
%%the length of the encoding of $\varphi$ is denoted by $||\varphi||$.
%As customary~\ref{REFHERE}, 
We consider the {\em combined complexity}
of the input, that is, $||E_\P|| + ||\varphi||$. In particular, we say
that the combined complexity of model checking finite AC-MAS against
FO-CTLK specifications is EXPSPACE-complete if the problem is in
EXPSPACE, i.e., there is a polynomial $p(x)$ and an algorithm solving
the problem in space bound by $2^{p(||E_\P|| + ||\varphi||)}$. We say it
is EXPSPACE-hard if every EXPSPACE problem can be reduced to model
checking finite AC-MAS against FO-CTLK specifications. We now state
the following complexity result.

\begin{theorem}\label{th:complexity}
The complexity of the model checking problem for finite AC-MAS
against FO-CTLK specifications is EXPSPACE-complete.
%Given an artifact system $\S=\tup{\D,U,D_0,\Phi}$, a sentence-atomic FO-CTL formula $\varphi\in\L_\S$, and a 
%bound $b\geq\card{\adom(D_0)}$, 
%checking whether $\K_b\models\varphi$ can be done in time 
%% This bound is less refined $\O(2^{{\card{\hat U}^a}+2\card{\hat U}\card{\varphi}})$, 
%$\O(2^{2\card{\hat U}^a}\card{\varphi}\card{\hat U}^\card{\varphi})$, 
%where $a=\sum_{i=1,\ldots,n}a_i$, with $a_i$ the arity of $P_i\in\D$, and $\hat U$ defined as in 
%Def.~\ref{def:ab-k}.
\end{theorem}

\textbf{Proof.} 
%This result is obtained by combining the
%complexity for model checking the first-order fragment of FO-CTLK and
%the temporal epistemic fragment. 
%
%PSPACE-hardness follows by reduction to first-order model checking.
To show that the problem is in EXPSPACE, 
recall that $||\P||$ is at most doubly exponential w.r.t.~the size of the input, 
thus so is $|\S|$.  
We describe an algorithm that works in
NEXPSPACE, which combines the algorithm for model checking the
first-order fragment of FO-CTLK and the temporal epistemic fragment.
Since NEXPSPACE = EXPSPACE, the result follows.
Given an AC-MAS $\P$ and an FO-CTLK formula $\varphi$, we guess an 
%there are at
%most $|U|^{||\varphi||} \leq 2^{|U| \ ||\varphi||} $ 
assignment $\sigma$.
% to the free variables in $\varphi$. 
Given such $\sigma$, we check whether $(\P,s_0,\sigma) \models
\varphi$. This can be done by induction according to the structure of
$\varphi$. If $\varphi$ is atomic, this check can be done in
polynomial time w.r.t.~the size of the state it is evaluated on, that
is exponential time w.r.t.~$||E_\P||$.  If $\varphi$ is of the form
$\forall x \psi$, then we can apply the algorithm for model checking
first-order (non-modal) logic, which works in PSPACE. Finally, if the
outmost operator in $\varphi$ is either a temporal or epistemic
modality, then we can extend the automata-based algorithm to model
check propositional CTL in \cite{KVW00}, which works in logarithmic
space in $|\S|$. However, we remarked above that $|\S|$ is generally
doubly exponential in $||E_\P||$. Thus, if the main operator in
$\varphi$ is either a temporal or epistemic modality, then this step
can be performed in space singly exponential in $||E_\P||$. All these
steps can be performed in time polynomial in the size of $\varphi$. As
a result, the total combined complexity of model checking
finite \aqis\ is in NEXPSPACE = EXPSPACE.

To prove that the problem is EXPSPACE-hard we show a
reduction from any problem in EXPSPACE.  We assume standard definitions
of Turing machines and reductions \cite{Papadimitriou94}.
If $A$ is a problem in EXPSPACE, then there exists a deterministic
Turing machine $\T_A=\tup{\Q,\Sigma,q_0,\F,\delta}$, where $\Q$ is the
finite set of states, $\Sigma$ the machine alphabet, $q_0\in \Q$ the
initial state, $\F$ the set of accepting states, and $\delta$ the
transition function, that solves $A$ using at most space
$2^{p(|in|)}$ on a given input $in$, for some polynomial function $p$.
As standard, we assume $\delta$ to be a relation on
$(\Q\times \Sigma\times \Q\times \Sigma\times D)$, with $D=\set{L,R}$,
and $\tup{q,c,q',c',d}\in \delta$ representing a transition from state
$q$ to state $q'$, with characters $c$ and $c'$ read and written
respectively , and head direction $d$ (($L$)eft and ($R$)ight).
Without loss of generality, we assume that $\T_A$ uses only the
righthand half of the tape.
%for every problem instance $in$, $\T_A$ halts
%on input $in$ with answer $yes$, or $\T_A(in) = yes$, iff $A(in) =
%yes$, and $\T_A(in) = no$ iff $A(in) = no$.
%
%Then, given a Turing machine $\T_A$ that decides $in$ in 2EXPTIME, 
\newcommand{\blank}{\Box}
\newcommand{\psucc}[1]{\textsc{succ}(#1)}
\newcommand{\bin}[1]{\textsc{bin}(#1)}

From $\T_A$ and $in$, we build an  encoding 
$E_\P=\tup{\D,U,s_0,\Phi_\tau}$ 
of an \aqis\ $\P$ 
induced by a single (environment) agent 
$A_E = \tup{\D_E, L_E, Act_E, Pr_E}$ 
defined on $U=\Sigma\cup \Q\cup \set{0,1}$, 
where: 
\myi $\D_E = 
\set{P/p(\card{in})+1,Q/1,H/p(\card{in}),F/1}$; 
%%\myii $U=\Sigma \cup \Q \cup \set{ 0,1}$;
\myii $L_E = \D_E(U)$; 
\myiii $Act_E$
is the singleton $\{ \alpha_E \}$, with
$\alpha_E$ parameter-free; 
\myiv $\alpha_E\in Pr_E(l_E)$ for every 
$l_E\in\D(U)$.
Intuitively, the states of $\P$ correspond to configurations of
$\T_A$, while $\tau$ mimics $\delta$.
To define $E_\P$, we let $\D=\D_E$.
The intended meaning of the predicates in $\D$ 
is as follows:
the first $p(|in|)$ elements of a $P$-tuple encode 
(in binaries) the  position of a non-blank cell,
and the $(p(|in|)+1)$-th element contains 
the symbol appearing in that cell;
$Q$ contains the current state $q$ of $\T_A$;
$H$ contains the position of the cell the head 
is currently on;
%%$O$ encodes a linear order on the positions of the 
%%cells in $P$;
$F$ contains the final states of $\T_A$, i.e., $F=\F$.
The initial state $s_0$ represents the initial configuration 
of $\T_A$, that is, for $in=in_0\cdots in_\ell$: 
$s(Q)=\set{q_0}$;
$s(H)=\set{\tup{0,\ldots,0}}$;
and $s(P)=\set{\tup{\bin{i},in_i} \mid i\in\set{0,\ldots,\ell}}$,
where $\bin{i}$ stands for the binary encoding 
in $p(\card{in})$ bits of the integer $i$.
Observe that $p(|in|)$ bits are enough to index 
the (at most) $2^{p(\card{in})}$ cells used by
$\T_A$.

As to the transition relation, we define 
$\Phi_\tau=\set{\phi_{\alpha_E}}$, where:\\
\begin{eqnarray*}
\phi_{\alpha_E} \! \! \!  &  \! \! \! =  \! \! \! &  \! \! \!
\bigvee_{\tup{q,c,q',c',d}\in\delta}(\forall x F(x)\lra F'(x))\land\\
& & Q(q)\land(\forall x Q(x)\ra x=q) \land Q'(q')\land
(\forall x Q'(x)\ra x=q')\land\\
& &  \exists\vec p (H(\vec p) \land(\forall x H(x)\ra x=\vec{p})
\land (P(\vec p, c) \vee 
(c = \blank \land \lnot\exists x P(\vec p, x)))) \land\\
& &  \exists \vec{p'} 
(d = R\ra \psucc{\vec p,\vec{p'}})\land 
(d = L\ra \psucc{\vec{p'},\vec p)}\land
H'(\vec{p'})\land (\forall x H'(x)\ra x = \vec{p'})\land\\
& &  (P'(\vec p, c')\lra(c'\neq\blank))\land (\forall x P'(\vec p, x)\ra x=c')\land\\ 
& &  (\forall \vec x,y (P(\vec x,y) \land (\vec x \neq \vec p)\ra P'(\vec x,y))\land
(\forall \vec x,y P'(\vec x,y)\ra(P(\vec x,y)\vee (\vec x = \vec p \land y = c'))))
\end{eqnarray*}

The symbol $\blank$ represents the content of blank cells, while 
$\psucc{\vec x,\vec{x'}}
=\bigwedge_{i=1}^{p(\card{in})} 
(x'_i=0\vee x'_i=1) \land 
(x'_i=1\lra ((x'_i=0\land\bigwedge_{j=1}^{i-1} x_j =1) 
\vee (x'_i=1\land\lnot\bigwedge_{j=1}^{i-1} x_j =1)))
$ is a formula capturing 
that $\vec{x'}$ is the successor of $\vec x$, for $\vec x$ and $\vec{x'}$ 
interpreted as $p(\card{in})$-bit binary encodings of integers 
(observe that  $\set{0,1}\in U$). 
Such a formula can obviously be written in  
polynomial time w.r.t.~$p(\card{in})$, as well as
$E_\P$,  and in particular $s_0$ and $\phi_{\alpha_E}$.

As it can be seen by analyzing $\Phi_\tau$, 
the obtained transition function is such that
$\tau(s,\alpha_E) = s'$ iff, 
for $\delta(q,c) = (q',c',d)$ in $\T_A$, 
we have that:
%%$q\in s(Q)$
$s'(P)$ is obtained from $s(P)$ by
overwriting with $c'$ (if not blank) 
the symbol in position $(p(|in|)+1)$ of the
tuple in $s(P)$ beginning with the $p(|in|)$-tuple $s(H)$ (that is, $c$
by definition of $\phi_{\alpha_E}$); by updating $s(H)$ according to $d$,
that is by increasing or decreasing the value it contains;
%%, that
%%is, if $d = L$ then $s'(H) = s(H) - 1$, and if $d = R$ then $s'(H) = s(H)
%%+ 1$
and by setting $s'(Q) = \set{q'}$.  The predicate $F$ does not change.
Observe that cells not occurring in $P$ are interpreted as if
containing $\blank$ and that when $\blank$ is to be written on a cell,
the cell is simply removed from $P$.

It can be checked that,
starting with $s=s_0$,
by iteratively generating the
successor state $s'$ according to $\Phi_\tau$,  
i.e., $s'$ s.t.~$s\oplus s'\models \phi_{\alpha_E}$,
one obtains a (single) $\P$-run that is a representation of  
the computation of $\T_A$ on $in$, where each 
pair of consecutive $\P$-states corresponds to a computation step.
In particular, at each state, $Q$ contains the current state of $\T_A$. 
It should be clear that
$\varphi= EF (\exists x Q(x)\land F(x))$ holds in $\P$
iff $\T_A$ accepts $in$. 
Thus, by checking $\varphi$, we can check whether $\T_A$ accepts $in$.
This completes the proof of EXPSPACE-hardness.\qed

Note that the result above is given in terms of the ``data
structures'' in the model, i.e., $U$ and $\D$, and not the state space $\S$
itself. This accounts for the high complexity of model checking \aqis,
as the state space is doubly exponential in the size of data.

While EXPSPACE-hardness indicates intractability, we note that this is
to be expected given that we are dealing with quantified structures
which are in principle prone to undecidability. Recall also from
Section~\ref{abstraction} that the size of the interpretation domain
$U'$ of the abstraction $\P'$ is linear in the bound $b$, the number
of constants in $C$, the size of $\phi$, and $N_{Ag}$. Hence, model
checking bounded and uniform \aqis\ is EXPSPACE-complete with respect
to these elements, whose size will generally be small.  Thus, we
believe than in several cases of practical interest model
checking \aqis\ may be entirely feasible.

\medskip
We now conclude the section with some observations on 
the verification of bounded and unbounded systems.
Observe that the results presented in Sections~\ref{sec:finabs} and
\ref{sec:finabs2} apply to infinite but bounded \aqis,  i.e., whose
global states never exceed a certain size in any run. 
It is however worth noting that existential fragments of the
specification languages considered so far need not be examined with
respect to the whole \aqis. Indeed in bounded model checking
for CTLK submodels are iteratively explored until a witness for an
existential specification is found~\cite{PenczekL03}. If that
happens, we can deduce that the existential specification holds on the
full model as well. As we show below, we can extend these result to
the case of infinite \aqis.

%% show how the verification of specifications in
%% SA-FO-CTL, resp.~FO-CTLK, on $b$-bounded infinite-state \aqis\ can be
%% reduce to the same problem for finite-state abstractions.  Even in
%% cases where the concrete \aqis\ is not $b$-bounded, we still have a
%% partial decision procedure for the languages SA-FO-ECTL and FO-ECTLK,
%% i.e., the existential fragments of SA-FO-CTL and FO-CTLK.
%% %
To begin, define the $b$-restriction $\P_b$ of an \aqis\ $\P$ as
follows.
\begin{definition}[$b$-Restriction] %\label{def:sys-db-ag}
Given an \aqis\ $\P = \tup{\S, U, s_0,\tau}$ and $b \in \mathbb{N}$
such that $b \geq |\adom(s_0) \cup C|$, the $b$-restriction
$\P_b = \tup{\S_b, U, s_0, \tau_b}$ of $\P$ is such that
\begin{itemize}
\item $\S_b = \{s \in \S \mid |\adom(s)| \leq b \}$;
%\item $U_b = \bigcup_{s \in \S_b} \adom(s)$;
%\item $s_0 \in\S$ is the {\em initial global state};
\item  $s' \in \tau_b (s, \alpha(\vec u))$ iff $s' \in \tau (s, \alpha(\vec u))$ and $s, s' \in \S_b$.
%        : \S\times Act(U) \mapsto 2^\S$
%      is the {\em global transition function},
%      where $Act(U)=Act_0(U)\times\cdots\times Act_n(U)$
%      is the set of \emph{global (ground) actions},
%      and $\tau(\tup{l_0,\ldots,l_n},\tup{\alpha_0(\vec u_0),\ldots,\alpha_n(\vec u_n)})$ 
%      is defined iff $\alpha_i(\vec u_i) \in Pr_i(l_i)$ for $i
%      \leq n$. 
       \end{itemize}
\end{definition}
 
Notice that $s_0 \in \S_b$ by construction and $\tau_b$ is the
restriction of $\tau$ to $\S_b$; the interpretation domain $U$ is the
same in $\P$ and $\P_b$. The result below demonstrates that if a
FO-ECTLK formula holds on the $b$-restriction, then the formula holds
on the whole \aqis.

%% On the other hand, 
%% %preservation 
%% %is guaranteed  for existential specifications.
%% %%
%% it is easy to prove that formulas in FO-CTLK, and hence SA-FO-ECTL, are preserved from the $b$-restriction to the original \aqis.

\begin{theorem}\label{th:ectl}
Consider an \aqis\ $\P$ and its $b$-restriction $\P_b$, for
$b \in \mathbb{N}$. For any formula $\phi$ in FO-ECTLK, we have that:
\begin{eqnarray*}
\P_b \models \phi & \Rightarrow & \P \models \phi
\end{eqnarray*}
%where $\P_b$ is  the $b$-restriction of $\P$.
\end{theorem}

\begin{proof}
 By induction on the construction of $\phi$.  The base case for
atomic formulas and the inductive cases for propositional
connectives are trivial, as the interpretation of relation symbols
for states in $\S_b$ is the same as in $\S$.
As to the existential operators $EX$ and $EU$, it suffices to remark
that if $r$ is a run in $\P_b$ satisfying either
$EX \psi$ or $E \psi U \psi'$, then $r$ belongs to $\P$ as well by
definition of $\P_b$. 
The cases for the epistemic modalities $\bar{K}_i$ and $\bar{C}$ are similar: if
$(\P_b, s, \sigma) \models \bar{K}_i \phi$, 
%(resp.~$\bar{C} \phi$),
then there exists $s' \in \S_b$ such that $s \sim_i s'$ and $(\P_b,
s', \sigma) \models \phi$. In particular, $s' \in \S$ and therefore
$(\P, s, \sigma) \models \bar{K}_i \phi$. For $\bar{C} \phi$ the proof
is similar by considering the transitive closure of the epistemic
relations.
Finally, the case of quantifiers follows from the fact that the active
domain for each state is the same in $\P$ and $\P_b$.
\end{proof}

Observe that there are specifications in FO-CTLK that are not
preserved from $\P_b$ to $\P$.
For instance, consider the specification $\varphi_{b} = A G \forall
x_1 , \ldots , x_{b+1} \bigvee_{i \neq j} (x_i = x_j)$ in SA-FO-CTL,
%, with $\varphi_{t+}$  as in 
%Section~\ref{sec:example}, 
which expresses the fact that every state in every run contains at most
$b$ distinct elements.  The formula $\varphi_{b}$ is clearly satisfied by
$\P_b$ but not in $\P$, whenever $\P$ is unbounded.

Theorem~\ref{th:ectl} can in principle form the basis for an
incremental iterative procedure for checking an existential
specification $\phi$ on an infinite \aqis\ $\P$. We can begin by taking
a reasonable bound $b$ and check $\P_b \models \phi$. If that holds we
can deduce $\P \models \phi$; if not we can increase the bound and
repeat. The procedure is sound but clearly not complete.  As mentioned
earlier, this is in spirit of bounded model
checking~\cite{biereCCSZ03}. Here, however, the bound is on the size
of the states, rather than the length of the runs.

\section{Model Checking Artifact-Centric Programs}\label{sec:results}\label{sec:as}
We have so far developed a formalism that can be used to specify and
reason about temporal-epistemic properties of models representing
artifact-centric systems. We have identified two notable classes that
admit finite abstractions. As we remarked in the introduction,
however, artifact-centric systems are typically implemented through
declarative languages such as GSM~\cite{Hulletal11}. It is therefore of
paramount interest to investigate the verification problem, not just
on a Kripke semantics such as \aqis, but on concrete programs. As
discussed, while GSM is a mainstream declarative language for
artifact-centric environments, alternative declarative approaches
exist. In what follows for the sake of generality we ground our
discussion on a very wide class of declarative languages and define
the notion of \emph{artifact-centric program}. Intuitively, an
artifact-centric program (or AC program) is a declarative description
of a whole multi-agent system, i.e., a set of services, 
that interact with the artifact system
(see discussion in the Introduction). Since artifact systems are also
typically implemented declaratively (see~\cite{HHV11}) in what
follows AC programs will be used to encode both the artifact system
itself and the agents in the system. This also enables us to import
into the formalism the previously discussed features of views and
windows typical in GSM and other languages.

This section is organised as follows. Firstly, we define AC programs
and give their semantics in terms of \aqis. Secondly, we show that
any \aqis\ that results from an AC program is uniform. This enables us
to state that, as long as the generated \aqis\ is bounded, any AC
program admits an \aqis\ as its finite model. In this context it
is actually important to give constructive procedures for the
generation of the finite abstraction; we provide such a procedure
here. This enables us to state that, under the assumptions we
identify, AC programs admit decidable verification by means of model
checking their finite model.

%% %% So far, we have used an abstract notion of agents and \aqis, and in
%% %% particular of transition function $\tau$, which have allowed us to
%% %% achieve our main theoretical results.
%% %% %%
%% %% However, we have not addressed the important issue of how
%% %% an \aqis\ can be actually specified, and how the presented results
%% %% can be exploited to verify system specifications.

%% %% In this section we answer these questions by 
%% %% defining the notion of \aqis\ program,
%% %% provide it with a semantics,
%% %% i.e., an induced \aqis,
%% and prove that a program ``execution'' produces uniform \aqis.
%% %%
%% In addition, we prove that if the induced \aqis\ is bounded,
%% the program can be executed over a finite
%% interpretation domain to generate a finite abstraction of  
%% the original \aqis, which can thus be verified.
%% This ultimately results in an effective technique for the 
%% verification of bounded \aqis\ described by programs.

We start by defining the abstract syntax of AC programs.
\begin{definition}[AC Program]\label{def:acp}
An {\em artifact-centric program} (or AC program) is a tuple $\ACP
	= \tup{\D,U,\Sigma}$, where:
\begin{itemize}
\item $\D$ is the program's database schema;
\item $U$ is the program's interpretation domain;
\item $\Sigma=\set{\Sigma_0,\ldots,\Sigma_n}$ is the set of 
{\em agent programs} $\Sigma_i=\tup{\D_i, l_{i0}, \Omega_i}$,
where: 
\begin{itemize} \item $\D_i\subseteq \D$ is {\em agent $i$'s
database schema}, s.t.~$\D_i\cap\D_j=\emptyset$, for $i\neq j$; 
\item $l_{i0}\in \D_i(U)$ is {\em agent $i$'s initial state (as a database
instance)}; 
\item $\Omega_i$ is the set of {\em local action
descriptions} in terms of preconditions and postconditions of the form
$\alpha(\vec x)\doteq\tup{\pi(\vec y),\psi(\vec z)}$, where:
\begin{itemize}
\item $\alpha(\vec x)$ is the {\em action signature} and $\vec x
= \vec y\cup \vec z$ is the set of its {\em parameters};
\item $\pi(\vec y)$ is the {\em action precondition}, i.e., an FO-formula over
$\D_i$; 
\item $\psi(\vec z)$ is the {\em action postcondition}, i.e., an
FO-formula over
$\D\cup\D'$.  \end{itemize} \end{itemize} \end{itemize}
\end{definition}

Recall that local database schemas and instances were introduced in
Definition~\ref{def:ags}.
%%Notice that 
%% $\D_i\cap\D_j\neq\emptyset$  for $i\neq j$.
Observe that AC programs are defined modularly by giving the agents'
programs including preconditions and postconditions as well as those
of the environment.

Notice that preconditions use relation symbols from the local
database only, while postconditions can use any symbol from the whole
$\D$. This accounts for the intuition formalised in \aqis\ as well as
present in temporal-epistemic logic literature that agents' actions
may change the environment and the state of other agents.
For an action $\alpha(\vec x)$, we let
$\const(\alpha)=\const(\pi)\cup\const(\psi)$, 
$\vars(\alpha)=\vars(\pi)\cup \vars(\psi)$, and 
$\free(\alpha)=\vec x$.
An {\em execution} of $\alpha(\vec x)$ with {\em ground parameters}
$\vec u\in U^{|\vec x|}$ is the {\em ground} action $\alpha(\vec
u)=\tup{\pi(\vec v),\psi(\vec w)}$, where $\vec v$ (resp.~$\vec w$) is
obtained by replacing each $y_i$ (resp.~$z_i$) with the value
occurring in $\vec u$ at the same position as $y_i$ (resp.~$z_i$) in
$\vec x$.
%Notice that 
Such replacements make both 
$\pi(\vec v)$ and $\psi(\vec w)$ ground.
Finally, we define
the set $C_{\ACP}$ of all constants mentioned in 
$\ACP$, i.e.,
%, formally
$C_{\ACP} =\bigcup_{i=1}^n\big(\adom(D_{i0}) \cup
\bigcup_{\alpha \in\Omega_i} \const(\alpha)\big)$.
%and fix $C=C_{\ACP}$.

The semantics of a program is given in terms of the \aqis\ induced by
the agents that the program implicitly defines.  Formally, this is
captured by the following definition.
\begin{definition}[Induced Agents]\label{def:ind-ag}
Given an AC program $\ACP = \tup{\D,U,\Sigma}$, an
agent \emph{induced} by $\ACP$ is a tuple
$A_i=\tup{\D_i,L_i,Act_i,Pr_i}$ on the interpretation domain $U$ such
that, for $\Sigma_i=\tup{\D_i,l_{i0},\Omega_i}$:
\begin{itemize} 
\item $L_i\subseteq \D_i(U)$ is the set of the agent's local states; 
\item $Act_i=\set{\alpha(\vec x)\mid \alpha(\vec x)\in \Omega_i}$ is
the set of local actions; 
\item The protocol $Pr_i(l_i)$ is defined by $\alpha(\vec u)\in
Pr_i(l_i)$ iff $l_i\models \pi(\vec v)$ for $\alpha(\vec u)=\tup{\pi(\vec
	v),\psi(\vec w)}$.  
\end{itemize}
\end{definition}
Note that the definition of induced agent is in line with the
definition of Agents (Definition~\ref{def:ags}). 
%% Observe that the
%% initial state $l_{i0}$ of the agent specification does not play any
%% role in the definition of the induced agent.  This however becomes
%% relevant to define the program semantics.  To this end, we compose the
%% agents obtained above into an \aqis, following
%% Def.~\ref{def:sys-db-ag}.
%AL: Neither does in the original def...
Agents induced as above are composed to give an \aqis\ associated with
an AC program.
\begin{definition}[Induced \aqis]\label{def:ind-aqis}
Given an AC program $\ACP$ and the set $Ag=\set{A_0,\ldots,A_n}$ of
agents induced by $\ACP$, the \aqis\ \emph{induced} by $\ACP$ is
the tuple $\P_{\ACP} =\tup{\S, U, s_0,\tau}$, where: 
\begin{itemize}
\item $\S \subseteq L_0\times\cdots\times L_n$ is the set
of \emph{reachable states};
\item $s_0 = \tup{l_{00},\ldots,l_{n0}}$ is the {\em initial global
state}; 
\item $U$ is the interpretation domain;
\item $\tau$ is the global transition function defined by the
following condition: $s' \in \tau(s,\tup{\alpha_1(\vec
u_1),\ldots,$ $\alpha_n(\vec u_n)})$, with $s=\tup{l_0,\ldots,l_n}$ and
$\alpha_i(\vec u_i)=\tup{\pi_i(\vec v_i),\psi_i(\vec w_i)}$
($i\in\set{0,\ldots,n}$), iff the following conditions are satisfied: 
\begin{itemize} 
\item for every $i\in\set{0,\ldots,n}$, $l_i\models \pi_i(\vec v_i)$; 
\item $\adom(s')\subseteq \adom(s)\cup \bigcup_{i=0,\ldots,n}\vec
w_i\cup\const (\psi_i)$; \item $D_s\oplus D_{s'}\models \psi_i(\vec
w_i)$,
%AL: below needs fixing. IT is not 2.2 - either remove or explain.
%FB: It is at the very end of section 2.2.
 where $D_s$ and $D_{s'}$ are obtained from $s$ and $s'$ as
discussed on p.~\pageref{refref}.  
\end{itemize} 
\end{itemize}
\end{definition}
Given an AC program, the induced \aqis\ is the Kripke model
representing the whole execution tree for the AC program and
representing all the data in the system.
%%  the \aqis\ induced by $\ACP$ corresponds to the ``execution''
%% of $\ACP$, i.e., the structure we obtain when all action types are executed 
%% with all possible parameters.  
%%
%% Observe that, according to Definition~\ref{def:sys-db-ag},
%% $\tau(s,\tup{\alpha_1(\vec u_1),\ldots,\alpha_n(\vec u_n)})$ is
%% defined if and only if $\alpha_i(\vec u_i)\in Pr_i(l_i)$, i.e.,
%% $l_i\models \pi_i(\vec v_i)$, for $i \in Ag$.
%% %%
Observe that all actions performed are enabled by the respective
protocols and that transitions can introduce only a bounded number of
new elements in the active domain, those bound to the action
parameters.
It follows from the above that AC programs are \emph{parametric} with
respect to the interpretation domain, i.e., by replacing the
interpretation domain we obtain a different \aqis.
For simplicity, we assume that for every postcondition $\psi$ in a
program, if a predicate does not occur in the postcondition, it is
left unchanged by the relevant transitions.  Formally, this means that
we implicitly add a conjunct of the form $\forall \vec{x} P(\vec
x) \lra P'(\vec x)$ $(*)$ to the postcondition whenever $P$ is not
mentioned in $\psi$.
Further, we assume that every program induces an \aqis\ whose
transition relation is serial, i.e., \aqis\ states always have
successors. These are basic requirements that can be easily fulfilled,
for instance, by assuming that each agent has a \textsf{skip} action
with an empty precondition and a postcondition of the form $(*)$ for
every $P \in \D$. In the next section we present an example of one
such program.
%AL: Do we need to capture this in the definition, eg by formally
%stipulating the first condition and adding reflexive loops?
%%Notice that this is a  semantic condition that depends on 
%%the specific operations occurring in the \ACP. 
%%
%%Although investigating the syntactic restriction 
%%that guarantee such an assumption to hold
%%is a topic of interest, this is out of the scope of the
%%present paper.
%%

A significant feature of AC programs is that they induce
uniform \aqis.
\begin{lemma}\label{lem:spec-unif}
	Every \aqis\ $\P$ induced by an AC program $\ACP$ is uniform.
\end{lemma}

%FB: The following can be deleted.
%AL: I think what below can be deleted. Please check and delete if OK.
\begin{comment}
{\bf [FP: Case for $\sim_i$ missing. 
	The proof for the temporal case needs to be revised, as the notion of uniformity changed. \\
	Moreover, it must be checked that the result holds for the epistemic case.]}
\end{comment}

\begin{proof}
By Prop.~\ref{prop1}, it is sufficient to consider only the temporal
transition relation $\to$, as $\adom(s_0)\subseteq C_{\ACP}$.
Consider $s,s',s''\in\S$ and $s''' \in L_0\times\cdots\times L_n$,
such that $s\oplus s'\iso s''\oplus s'''$ for some witness
$\iota$. Also, assume that
%$s\ra s'$, i.e., 
there exists $\vec\alpha(\vec u)=\tup{\alpha_1(\vec
u_1),\ldots,\alpha_n(\vec u_n)}\in Act(U)$ such
that~$s'\in\tau(s,\vec\alpha(\vec u))$.  We need to prove that for
every constant-preserving bijection $\iota'$ that extends $\iota$ to
$\vec u$, we have that $s''' \in\tau(s'',\vec\alpha(\iota'(\vec u)))$.
%$s'' \ra s'''$.
%%
To this end, we remark that any witness $\iota$ for
$s\oplus s'\iso s''\oplus s'''$ can be extended to an injective
function $\iota'$ on $\bigcup_{i \in Ag} \vec u_i$.  Obviously, $U$
contains enough distinct elements for $\iota'$ to exist, as every
$\vec u_i$ takes values from $U$.  Now, by an argument analogous to
that of Proposition~\ref{prop:iso-inst-eq}, it can be seen that for
any $FO$-formula $\varphi$ and equivalent assignments $\sigma$ and
$\sigma'$, we have that $(s\oplus s',\sigma)\models \varphi$ iff
$(s''\oplus s''',\sigma')\models \varphi$.  But then, this holds, in
particular, for $\sigma'$ obtained from $\sigma$ by applying $\iota'$
to the values assigned to each parameter, i.e., $\iota'(\vec u)$, and
for the pre- and postconditions of all actions involved in the
transition $s \xrightarrow{\alpha(\vec u)} s'$.
Thus, we have
$s'''\in\tau(s'',\vec\alpha(\iota'(\vec u)))$, i.e., 
%$s''\to s'''$. Therefore, 
$\P$ is uniform.
\end{proof}

We can now define what it means for an AC program to satisfy a
specification, by referring to its induced \aqis.

%% We can now extend the notion of satisfaction of FO-CTLK formulas 
%% to \aqis\  programs, by simply referring to induced \aqis.
%%
%%
\begin{definition}\label{def:acp-satisfaction} 
Given an AC program $\ACP$, a FO-CTLK formula
$\varphi$, and an assignment $\sigma$, we say that {\em $\ACP$
satisfies $\varphi$ under $\sigma$}, written
$(\ACP,\sigma)\models\varphi$, iff
$(\P_{\ACP},s_0,\sigma)\models \varphi$.
\end{definition}
It follows that the model checking problem for an AC program
against a specification $\phi$ is defined in terms of the model
checking problem for the \aqis\ $\P_{\ACP}$ against $\phi$.

The following result allows us to reduce the verification of
any AC program with an infinite interpretation domain $U_1$, that
induces a $b$-bounded \aqis, to the verification of an AC program
over a finite $U_2$.  To show how it can be done, we let
$N_{\ACP}=\sum_{i \in \set{1,\ldots,n}} \max_{\alpha(\vec
x)\in\Omega_i}\set{\card{\vec x}}$ be the maximum number of different
parameters that can occur in a joint action of $\ACP$.
\begin{lemma}\label{lem:finite-abstraction}
%AL: Rewritten text - see if happy.
Consider an AC program $\ACP_1 = \tup{\D,U_1,\Sigma}$ operating on an
infinite interpretation domain $U_1$ and assume its induced \aqis\
$\P_{\ACP_1}=\tup{\S_1, U_1, s_{10},\tau_1}$ is $b$-bounded. Consider a
finite interpretation domain $U_2$ such that $C_{\ACP_1}\subseteq U_2$
and $\card{U_2}\geq 2b+\card{C_{\ACP_1}}+N_{\ACP_1}$ and the AC program
$\ACP_2 = \tup{\D,U_2,\Sigma}$. Then, the \aqis\ $\P_{\ACP_2}=\tup{\S_2,
U_2, s_{20},\tau_2}$ induced by $\ACP_2$ is a finite abstraction of
$\P_{\ACP_1}$.
%% If the \aqis\ $\P_{\ACP_1}=\tup{\S_1, U_1, s_{10},\tau_1}$
%%   induced by an \aqis\ program $\ACP_1 = \tup{\D,U_1,\Sigma}$
%%   with $U_1$ infinite
%%   is $b$-bounded, 
%%   then for any finite domain $U_2$ such that $C_{\ACP_1}\subseteq U_2$
%%   and $\card{U_2}\geq 2b+\card{C_{\ACP_1}}+N_{\ACP_1}$,
%%   the \aqis\ program $\ACP_2 = \tup{\D,U_2,\Sigma}$ induces an 
%%   \aqis\ $\P_{\ACP_2}=\tup{\S_2, U_2, s_{20},\tau_2}$ that is a 
%%   finite abstraction of $\P_{\ACP_1}$.
\end{lemma}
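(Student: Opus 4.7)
The plan is to verify the three conditions for $\P_{\ACP_2}$ to be a finite $\oplus$-abstraction of $\P_{\ACP_1}$ in the sense of Definition~\ref{ref:abs}. Finiteness is immediate since $U_2$ is finite, making each $\D_i(U_2)$ and hence $\S_2$ finite. The equality $s_{20} = s_{10}$ is likewise immediate because both programs share the specification $\Sigma$ and each initial local state $l_{i0}$ is well defined in both $\D_i(U_1)$ and $\D_i(U_2)$ thanks to $\adom(l_{i0}) \subseteq C_{\ACP_1} \subseteq U_2$. What remains is the transition biconditional of Definition~\ref{ref:abs}.

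The plan is to reduce this biconditional to a mirror-run lemma: for every temporal run $s_{20} \to \cdots \to s_2^k$ of $\P_{\ACP_2}$ there exists a run $s_{10} \to \cdots \to s_1^k$ of $\P_{\ACP_1}$ such that $s_1^{i-1} \oplus s_1^i \simeq s_2^{i-1} \oplus s_2^i$ for every $i \leq k$ via a constant-preserving bijection that also relates the respective action parameters, and symmetrically every run of $\P_{\ACP_1}$ can be mirrored in $\P_{\ACP_2}$. Granted this lemma, the biconditional follows at once: if $t_2 \in \tau_2(s_2, \vec\alpha(\vec u_2))$ then $s_2 \in \S_2$ sits at the end of a run of $\P_{\ACP_2}$, and mirroring its one-step extension to $t_2$ yields the required $s_1, t_1 \in \S_1$ and $\vec u_1 \in U_1$ witnessing isomorphism; the converse direction is obtained by symmetric mirroring from $\P_{\ACP_1}$ into $\P_{\ACP_2}$.

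The mirror-run lemma itself is proved by induction on run length. For the inductive step, let $\iota$ be the witness for $s_1^{i-1} \simeq s_2^{i-1}$ and suppose $s_2^{i-1} \to s_2^i$ under $\vec\alpha(\vec u_2)$. By Definition~\ref{def:ind-aqis} this amounts to the action preconditions holding at $s_2^{i-1}$ and the postconditions holding at $D_{s_2^{i-1}} \oplus D_{s_2^i}$, both of which are FO formulas. One extends $\iota$ to a constant-preserving bijection $\iota'$ on $\adom(s_2^{i-1}) \cup C_{\ACP_1} \cup \adom(s_2^i) \cup \vec u_2$. Going from $\P_{\ACP_2}$ to $\P_{\ACP_1}$ the target $U_1$ is infinite, so extension is trivial. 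Going from $\P_{\ACP_1}$ to $\P_{\ACP_2}$ the image must lie in $U_2$, and the total size of the extended domain is at most $2b + |C_{\ACP_1}| + N_{\ACP_1}$, which fits in $U_2$ by the cardinality hypothesis and the $b$-boundedness of $\P_{\ACP_1}$. Defining the mirrored $s^i$ as the image of $s_2^i$ (or preimage of $s_1^i$) under $\iota'$ and invoking Proposition~\ref{prop:iso-inst-eq} on the pre- and postconditions transfers their satisfaction across the isomorphism, so that by Definition~\ref{def:ind-aqis} the mirrored transition exists in the target \aqis.

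The main obstacle is the reachability condition implicit in Definition~\ref{def:sys-db-ag}, which requires $\tau$ to produce only states reachable from the initial one. One cannot simply lift an isolated candidate transition pointwise; instead the induction on entire runs is essential, so that the source of each transferred transition is known to be reachable in the target \aqis. The cardinality hypothesis $|U_2| \geq 2b + |C_{\ACP_1}| + N_{\ACP_1}$ is exactly what keeps the inductive step feasible at every level of the run, by guaranteeing enough fresh room in $U_2$ to accommodate two consecutive $b$-bounded active domains, the constants and the action parameters simultaneously.
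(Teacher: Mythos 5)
Your handling of the transition condition of Definition~\ref{ref:abs} is essentially sound, and in one respect more explicit than the paper's own proof: the paper argues pointwise (extend a witness to the action parameters and transfer the preconditions, the postconditions and the $\adom$ inclusion of Definition~\ref{def:ind-aqis} via Proposition~\ref{prop:iso-inst-eq}), asserting membership of the relevant states in $\S_2$ and dispatching the converse with ``follows similarly'', whereas your induction on run length makes the reachability of the concrete witnesses $s_1,t_1\in\S_1$ explicit. One caveat on your reduction, though: for the ``if'' direction of Definition~\ref{ref:abs} the abstract pair $(s_2,t_2)$ and the witness are \emph{given}, and $s_2$ already ranges over $\S_2$, so reachability is not the issue there; mirroring a run of $\P_{\ACP_1}$ into $\P_{\ACP_2}$ only yields \emph{some} isomorphic copy of the pair, not the given one. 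What that direction actually needs is exactly the pointwise transfer of $\pi_i$, $\psi_i$ and the active-domain condition across the given witness — your inductive step contains this transfer, but the phrase ``by symmetric mirroring'' does not deliver the claim as stated.

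The genuine gap is elsewhere: Definition~\ref{ref:abs} only applies once the abstract system is an \aqis\ over the set $Ag'$ of \emph{abstract agents} in the sense of Definition~\ref{def:ab-agents}. Hence, to conclude that $\P_{\ACP_2}$ is an abstraction of $\P_{\ACP_1}$ you must also prove that the agents induced by $\ACP_2$ (Definition~\ref{def:ind-ag}) satisfy requirement~\ref{req:ab-agents-prtcl} of Definition~\ref{def:ab-agents} with respect to those induced by $\ACP_1$: $\alpha(\vec u')\in Pr_i'(l_i')$ iff there exist $l_i$ and $\alpha(\vec u)\in Pr_i(l_i)$ with $l_i\simeq l_i'$ for some witness $\iota$ and $\vec u'=\iota'(\vec u)$ for some extension $\iota'$. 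This is a condition on arbitrary local states and their protocols, and it is not implied by the global transition biconditional you verify; the paper devotes the first half of its proof to it, using the same ingredients you already deploy (Proposition~\ref{prop:iso-inst-eq} applied to the action preconditions, $b$-boundedness together with $\card{U_2}\geq 2b+\card{C_{\ACP_1}}+N_{\ACP_1}$ in one direction, and the infiniteness of $U_1$ in the other). The omission is routine to repair, but as written your proposal establishes only part of what the lemma asserts.
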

\begin{proof}
Let $Ag_1$ and $Ag_2$ be the set of agents induced respectively 
by $\ACP_1$ and $\ACP_2$,  according to Def.~\ref{def:ind-ag}. 
First, we prove that the set of agents
$Ag_1$ and $Ag_2$ satisfy Def.~\ref{def:ab-agents},
for $Ag=Ag_1$ and $Ag'=Ag_2$.
To this end, observe that because $\ACP_1$ and $\ACP_2$ differ only in
$U$, by Def.~\ref{def:ind-ag}, $\D=\D'$, $L_i'\subseteq \D_i'(U')$,
and $Act'=Act$.  Thus, only requirement~\ref{req:ab-agents-prtcl} of
Def.~\ref{def:ab-agents} still needs to be proved.
To see it, fix $i\in\set{1,\ldots,n}$ and assume that $\alpha(\vec u)\in Pr_i(\l_i)$.
By Def.~\ref{def:ind-ag}, we have that $l_i\models \pi(\vec v)$, for 
$\alpha(\vec u)=\tup{\pi(\vec v),\psi(\vec w)}$.
By the assumption on $\card{U_2}$, since $\const(\alpha)\subseteq C_{\ACP_1}\subseteq U_2$,
$\card{\vec u}\leq N_{\ACP_1}$, and $\card{\adom(l_i)}\leq b$, 
we can define an injective function $\iota:\adom(l_i)\cup \vec u\cup C_{\ACP_1}\mapsto U_2$
that is the identity on $C_{\ACP_1}$. 
Thus, for $l_i'=\iota(l_i)$, we can easily extract from $\iota$ a witness for $l_i\simeq l'_i$.
Moreover, it can be seen that $\vec v$ and $\vec v'$ are equivalent for 
$\pi$.
Then, by applying Prop.~\ref{prop:iso-inst-eq} 
to $l_i$ and $l_i'$, we conclude that $l'_i\models \pi(\vec v')$,
for $\vec v'=\iota(\vec v)$. Hence, by Def.~\ref{def:ind-ag},
$\alpha(\vec u')\in Pr'_i(l'_i)$. 
So, we have shown the right-to-left part of 
requirement \ref{req:ab-agents-prtcl}.
The left-to-right part can be shown similarly and in a simplified way
as $U_1$ is infinite.

Thus, we have proven that $Ag =Ag_1$ and $Ag'=Ag_2$ are obtained as in
Def.~\ref{def:ab-agents}. Hence, the assumption on $Ag$ and $Ag'$ in
Def.~\ref{ref:abs} is fulfilled.  We prove next that also the
remaining requirements of Def.~\ref{ref:abs} are satisfied.
Obviously, since $\Sigma$ is the same for $\ACP_1$ and $\ACP_2$, by
Def.~\ref{def:ind-aqis}, $s_{10}=s_{20}$, so the initial states of
$\P_{\ACP_1}$ and $\P_{\ACP_2}$ are the same.  It remains to show that
the requirements on $\tau_1$ and $\tau_2$ are satisfied. We prove the
right-to-left part.  To this end, take two states
$s_1=\tup{l_{10},\ldots,l_{1n}}$,
$s_1'=\tup{l'_{10},\ldots,l'_{1n}}$ in $\S_1$ and a joint action
$\vec \alpha(\vec u)=\tup{\alpha_0(\vec u_0),\ldots,\alpha_n(\vec
u_n)}\in Act(U)$ such that $s_1'\in\tau_1(s_1,\vec\alpha(\vec u))$.
Consider $s_1\oplus s_1'$.  By the assumptions on $U_2$, there exists
an injective function $\iota:\adom(s_1)\cup\adom(s'_1)\cup \vec u\cup
C_{\ACP_1}\mapsto U_2$ that is the identity on $C_{\ACP_1}$ (recall
that $\card{\adom(s_1)},\card{\adom(s'_1)}\leq b$).  Then, for
$s_2=\tup{\iota(l_{10}),\ldots, \iota(l_{1n})}$,
 $s_2'=\tup{\iota(l'_{10}),\ldots, \iota(l'_{1n})}$ in $\S_2$, we can
extract, from $\iota$, a witness for $s_1\oplus s'_1\simeq s_2\oplus
s_2'$.  Moreover, it can be seen that for every $\pi_i$ and $\psi_i$
in $\vec \alpha_i(\vec x_i)=\tup{\pi_i(\vec y_i),\psi_i(\vec z_i)}$,
$\vec u$ and $\vec u'=\iota(\vec u)$ are equivalent with respect to
$s_1\oplus s'_1$ and $s_2\oplus s_2'$.
Now, consider Def.~\ref{def:ind-aqis} and recall that both $\P_{\ACP_1}$
and $\P_{\ACP_2}$ are \aqis\ induced by $\ACP_1, \ACP_2$, respectively.  By
applying Prop.~\ref{prop:iso-inst-eq}, we have that, for
$i\in\set{0,\ldots,n}$: $\iota(l_{1i})\models \pi_i(\iota(\vec v_i))$
iff $l_{1i}\models \pi_i(\vec v_i)$; $D_{s_2}\oplus
D_{s_2'}\models \psi_i(\iota(\vec w_i))$ iff $D_{s_1}\oplus
D_{s_1'}\models \psi_i(\vec w_i)$.
In addition, by the definition of $\iota$,
$\adom(s_1')\subseteq  \adom(s_1)\cup \bigcup_{i=0,\ldots,n}\vec w_i\cup\const (\psi_i)$
iff 
$\adom(s_2')\subseteq  \adom(s_2)\cup \bigcup_{i=0,\ldots,n}\iota(\vec w_i)\cup\const (\psi_i)$.
But then, it is the case that $s'_2\in \tau_2(s_2',\vec \alpha(\iota(\vec u_0),\ldots,\iota(\vec u_n)))$.
So we have proved the right-to-left part of the second requirement of
Def.~\ref{ref:abs}. The other direction follows similarly.  Therefore,
$\P_{\ACP_2}$ is an abstraction of $\P_{\ACP_1}$.
\end{proof}

Intuitively, Lemma~\ref{lem:finite-abstraction} shows that the
following diagram commutes, where $[U_1/U_2]$ stands for the
replacement of $U_1$ by $U_2$ in the definition of $\ACP_1$. Observe
that since $U_2$ is finite, one can actually apply
Def.~\ref{def:ind-aqis} to obtain $\P_{\ACP_2}$, while this cannot be
done for $\ACP_1$, as $U_1$ is infinite.
	\begin{displaymath}
	\xymatrix{	
	\ACP_1~ 
		\ar[rr]^{\text{ Def.~\ref{def:ind-aqis}}}  
		\ar[d]^{[U_1/U_2]}
		& & \P_{\ACP_1} \ar[d]^{\text{Def.~\ref{ref:abs}}} \\
	\ACP_2~\ar[rr]_{\text{ Def.~\ref{def:ind-aqis}}} 	& & \P_{\ACP_2} }
	\end{displaymath}

The following result, a direct consequence of Lemma~\ref{theor1} and
Lemma~\ref{lem:finite-abstraction}, is the key conclusion of this
section.
%%
\begin{comment}
	\begin{lemma}
	Given an AC program $\ACP_1$ over an infinite $U_1$, and a finite set $U_2 \supseteq C$, then the abstract 
	program $\ACP_2$ for $\ACP_1$ over $U_2$ is such that $\ACP_1$ simulates $\ACP_2$.
	Further, if $\ACP_1$ is $b$-bounded for some $b$ such that $|U_2| \geq 2b + |C|$, then $\ACP_1$ bisimulates 
	$\ACP_2$.
	\end{lemma}
\end{comment}

\begin{theorem}\label{th:spec-abstr}
%	Given an AS program $\ACP$, 
%	an infinite $U_1$ and a finite $U_2$, 
	%%s.t.~$C_{\ACP}\subseteq U_1,U_2$, 
%For every FO-CTLK formula $\varphi$,
	%% s.t.~$\const(\varphi)\subseteq C_{\ACP}$,	
%%AL: text revised. See if happy.
Consider an FO-CTLK formula
$\varphi$, an AC program $\ACP_1$ operating on an infinite interpretation
domain $U_1$ and assume its induced \aqis\ $\P_{\ACP_1}$ is $b$-bounded.
Consider a finite interpretation domain $U_2$ such that
$C_{\ACP_1}\subseteq U_2$ and $\card{U_2}\geq
2b+\card{C_{\ACP}}+\max\set{N_{\ACP},\card{\vars(\varphi)}}$, and the AC
program $\ACP_2 = \tup{\D,U_2,\Sigma}$. Then
%for every FO-CTLK formula $\varphi$ 
we have that:
\begin{eqnarray*} \ACP_1 \models \varphi & \text{iff} &
\ACP_2 \models \varphi.  \end{eqnarray*}
\end{theorem}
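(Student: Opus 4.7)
The plan is to chain the preceding results. First, unfolding Definition~\ref{def:acp-satisfaction} reduces the claim to showing $\P_{\ACP_1} \models \varphi$ iff $\P_{\ACP_2} \models \varphi$, i.e., an equivalence between the \aqis\ induced by the two programs. I would then invoke Lemma~\ref{lem:spec-unif} to conclude that $\P_{\ACP_1}$ is uniform; by hypothesis it is also $b$-bounded, and its interpretation domain $U_1$ is infinite, so all the structural hypotheses needed later on $\P_{\ACP_1}$ are in place.

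Next, I would apply Lemma~\ref{lem:finite-abstraction}, using the cardinality bound $\card{U_2} \geq 2b + \card{C_{\ACP_1}} + N_{\ACP_1}$ (which is implied by the stronger bound in the theorem statement), to obtain that $\P_{\ACP_2}$ is a finite $\oplus$-abstraction of $\P_{\ACP_1}$ in the sense of Definition~\ref{ref:abs}. At this point, $\P_{\ACP_1}$ is an infinite, $b$-bounded, uniform \aqis\ and $\P_{\ACP_2}$ is one of its $\oplus$-abstractions over a finite $U_2$ satisfying the cardinality assumption required by Theorem~\ref{theor2}, so that theorem applies and directly yields $\P_{\ACP_1} \models \varphi$ iff $\P_{\ACP_2} \models \varphi$.

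The only real obstacle is bookkeeping: one has to check that the program-level quantities $C_{\ACP_1}$ and $N_{\ACP_1}$ match the agent-level quantities $C$ and $N_{Ag}$ used in Lemma~\ref{theor1} and Theorem~\ref{theor2}. This is immediate from Definition~\ref{def:ind-ag}: the action types of the induced agents are exactly the action signatures in the $\Omega_i$, so the maximum-parameter sum $N_{Ag}$ coincides with $N_{\ACP_1}$; and the constants appearing in initial states and in action pre- and postconditions give $\adom(s_0)\cup\bigcup_i\const(\alpha)\subseteq C_{\ACP_1}$, so $C$ can be taken to be $C_{\ACP_1}$. Once this alignment is made explicit, the cardinality hypothesis in the theorem coincides with the one needed by Theorem~\ref{theor2}, and the equivalence follows by direct composition of Lemma~\ref{lem:spec-unif}, Lemma~\ref{lem:finite-abstraction}, and Theorem~\ref{theor2}.
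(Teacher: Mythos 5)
Your proposal is correct and follows essentially the same route as the paper: its own proof applies Lemma~\ref{lem:finite-abstraction} to obtain the $\oplus$-abstraction and then invokes the abstraction machinery of Section~\ref{kr} (the paper tersely cites Lemma~\ref{theor1}, while you go through Theorem~\ref{theor2}, which is just Lemma~\ref{theor1} packaged with Theorem~\ref{th:bis-iff}). The additional bookkeeping you supply --- uniformity of $\P_{\ACP_1}$ via Lemma~\ref{lem:spec-unif} and the identification of $C_{\ACP_1}$ and $N_{\ACP_1}$ with the $C$ and $N_{Ag}$ of the abstraction results --- is left implicit in the paper and is exactly what legitimises the application.
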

%% If the \aqis\ $\P_{\ACP_1}$ induced by an \aqis\ program
%% $\ACP_1=\tup{\D,U_1,\Sigma}$ with infinite $U_1$ is $b$-bounded and
%% $U_2$ is such that $C_{\ACP_1}\subseteq U_2$ and $\card{U_2}\geq
%% 2b+\card{C_{\ACP}}+\max\set{N_{\ACP},\card{\vars(\varphi)}}$, then, for
%% $\ACP_2=\tup{\D,U_2,\Sigma}$ and for every FO-CTLK formula
%% $\varphi$:
%% \begin{eqnarray*} \ACP_1 \models \varphi & \text{iff} &
%% \ACP_2 \models \varphi.  \end{eqnarray*}
%% \end{theorem}
%% %%
\textbf{Proof.}
By Lemma~\ref{lem:finite-abstraction} $\P_{\ACP_2}$ is a finite
abstraction of $\P_{\ACP_1}$. Moreover, $\card{U_2}\geq
2b+\card{C_{\ACP}}+\max\set{N_{\ACP},\card{\vars(\varphi)}}$ implies
$\card{U_2} \geq 2b + \card{C_{\ACP}} + \card{\vars(\varphi)}$. Hence,
we can apply Lemma~\ref{theor1} and the result follows. \qed

The above is the key result in this section. It shows that if the generated
\aqis\ model is bounded, then any AC program can be verified by model
checking its finite $\oplus$-abstraction, i.e., a
$\oplus$-bisimilar \aqis\ defined on a finite interpretation
domain. Note that in this case the procedure is entirely constructive:
given an AC program $\ACP_1 = \langle \D, U_1, \Sigma \rangle$ on an
infinite domain $U_1$ and an FO-CTLK formula $\varphi$, to check whether
$\ACP_1$ satisfies the specification $\varphi$, we first consider the
finite ``abstraction'' $\ACP_2 = \langle \D, U_2, \Sigma \rangle$
defined on a finite domain $U_2$ satisfying the requirement on
cardinality in Theorem~\ref{th:spec-abstr}. Since $U_2$ is finite,
also the induced \aqis\ $\P_{\ACP_2}$ is finite, hence we can apply
standard model checking techniques to verify whether $\P_{\ACP_2}$
satisfies $\varphi$.  Finally, by definition of satisfaction for AC
programs and Theorem~\ref{th:spec-abstr}, we can transfer the result
obtained to decide the model checking problem for the original
infinite AC program $\ACP_1$ and $\varphi$.

%as it involves the definition and
%construction of a finite model.

%% Thus, under the boundedness assumption, we can verify whether or not
%% an FO-CTLK specification is satisfied by an AS program by model
%% checking its finite abstraction.

Also observe that in the finite abstraction considered above the
abstract interpretation domain $U_2$, depends on the number of
distinct variables that the specification $\varphi$ contains.
Thus, in principle, to check the same AS program against a different
specification $\varphi'$, one should construct a new abstraction
$\P_{\ACP'_2}$ using a different interpretation domain $U_2'$, and then
check $\varphi'$ against it.  However, it can be seen that if the
number of distinct variables of $\varphi'$ does not exceed that of
$\varphi$, the abstraction $\P_{\ACP_2}$, used to check $\varphi$, can
be re-used for $\varphi'$.
Formally, let FO-CTLK$_k$ be the set of all FO-CTLK formulas
containing at most $k$ distinct variables.  We have the following
corollary to Theorem~\ref{th:spec-abstr}.
\begin{corollary}
	If $\card{U_2}\geq 2b+\card{C_{\ACP}}+\max\set{N_{\ACP},k}$,
	then, for every FO-CTLK$_k$ formula $\varphi$,
	$\ACP_1 \models \varphi$ iff $\ACP_2 \models \varphi$.
\end{corollary}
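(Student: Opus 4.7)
The plan is to derive this corollary as a direct instantiation of Theorem~\ref{th:spec-abstr}, since the only difference between the two statements lies in how the cardinality bound on $U_2$ depends on the specification: Theorem~\ref{th:spec-abstr} requires the bound to involve $\card{\vars(\varphi)}$ for the specific formula being checked, whereas the corollary replaces this formula-dependent quantity by the uniform upper bound $k$. Hence, the key observation is that the bound $k$ is strong enough to handle all formulas in FO-CTLK$_k$ simultaneously.

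More concretely, I would first fix an arbitrary $\varphi\in$~FO-CTLK$_k$. By the very definition of FO-CTLK$_k$, we have $\card{\vars(\varphi)} \leq k$, and therefore
\[
\max\set{N_{\ACP},\card{\vars(\varphi)}} \leq \max\set{N_{\ACP},k}.
\]
Combining this with the hypothesis of the corollary, we obtain
\[
\card{U_2} \geq 2b+\card{C_{\ACP}}+\max\set{N_{\ACP},k} \geq 2b+\card{C_{\ACP}}+\max\set{N_{\ACP},\card{\vars(\varphi)}},
\]
which is precisely the cardinality requirement demanded by Theorem~\ref{th:spec-abstr} for $\varphi$. Applying that theorem then yields $\ACP_1 \models \varphi$ iff $\ACP_2 \models \varphi$. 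Since $\varphi$ was arbitrary in FO-CTLK$_k$, the corollary follows.

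There is essentially no obstacle here: the corollary is a uniformisation of Theorem~\ref{th:spec-abstr}, obtained by noting that if one chooses $U_2$ large enough to accommodate the worst-case variable budget $k$, then the same abstraction $\P_{\ACP_2}$ is simultaneously $\oplus$-bisimilar in the FO-CTLK sense for every formula using at most $k$ variables. The practical upshot, which would be worth emphasising briefly, is that one does not need to rebuild the abstraction for each new specification: a single finite $\ACP_2$ suffices to verify all FO-CTLK$_k$ formulas, provided the chosen cardinality of $U_2$ meets the uniform bound above.
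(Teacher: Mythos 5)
Your proposal is correct and is exactly the argument the paper intends: the corollary is stated as an immediate consequence of Theorem~\ref{th:spec-abstr}, obtained by observing that $\card{\vars(\varphi)}\leq k$ for every $\varphi\in$ FO-CTLK$_k$, so the uniform cardinality bound on $U_2$ subsumes the formula-dependent one. Nothing further is needed.
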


This result holds in particular for $k=N_{\ACP}$; thus for 
FO-CTLK$_{N_{\ACP}}$ formulas, we have an abstraction procedure
that is specification-independent.

Theorem~\ref{th:spec-abstr} requires the induced \aqis\ to be bounded,
which may seem a difficult condition to check a priori. Note however
that AC programs are declarative. As such it is straightforward to
give postconditions that enforce that no transition will generate
states violating the boundedness requirement. The scenario in the next
section will exemplify this.

\section{The Order-to-Cash Scenario}

In this section we exemplify the methodology presented so far in the
context of a business process inspired by an IBM customer
use-case~\cite{Hulletal11}. 
The {\em order-to-cash} scenario describes the actions performed by a
number of agents in an e-commerce situation relating to the purchase
and delivery of a product.
The agents in the system consist of a {\em manufacturer}, some {\em
customers}, and some {\em suppliers}.  The process begins when a
customer prepares and submits a {\em purchase order} (PO), i.e., a
list of products the customer requires, to the manufacturer.
Upon receiving a PO, the manufacturer prepares a {\em material order}
(MO), i.e., a list of components needed to assemble the requested
products.  The manufacturer then selects a supplier and forwards him
the relevant material order.
%for each
%component, prepares one {\em material order} (MO) per
%selected supplier, 
%and submits each MO to the corresponding supplier.
%
Upon receipt a supplier can either accept or reject a MO. In the
former case he then proceeds to deliver the requested components to
the manufacturer. In the latter case he notifies the manufacturer of
his rejection.
If an MO is rejected, the manufacturer can delete it and then prepare
and submit new MOs.
%for the rejected components.  
When the components required have been delivered to the manufacturer,
he assembles the product and, provided the order has been paid for,
he delivers it to the customer.
Any order which is directly on indirectly related to a PO can be
deleted only after the PO is deleted.

%% \begin{figure}
%%         \centering
%% 	\includegraphics[scale=.6]{}
%%         \caption{The order-to-cash business process.}
%% \end{figure}

We can encode the {\em order-to-cash} business process as an
 artifact-centric program $ACP_{otc}$, where the artifact data models
 are represented as database schemas and its evolution is
%their  corresponding lifecycles are 
characterised by an appropriate set of operations.
It is natural to identify 2 classes of artifacts, representing the PO
and the MO, each corresponding to the respective orders by the agents.
An intuitive representation of the artifact lifecycles, i.e., the
evolution of some key records in the artifacts' states, capturing only
the dependence of actions from the artifact statuses, is shown in
Fig.~\ref{fig:lifecycles}.  Note that this is an incomplete
representation of the business process, as the interaction between
actions and the artifact data content is not represented.
\begin{figure}
	\centering
	\subfigure[Purchase Order lifecyle\label{subfig:cpo-lifecycle}]{
		\begin{tikzpicture}
			\begin{scope}  
				\scriptsize 
				\node (init) {};
                                	\node[state,right of=init,,xshift=.2cm] (prepared) {$prepared$}; 
                                 	\node[state,right of=prepared,xshift=1.2cm] (pending) {$pending$}; 
                                 	\node[state,right of=pending] (paid) {$paid$}; 
                                 	\node[state,right of=paid,xshift=.4cm] (shipped) {$shipped$}; 
	                         	\node[state,fill,right of=shipped,xshift=.1cm] (deleted) {}; 
	
				\path 
					(init) edge node[above] {$createPO$} (prepared)
					(prepared) edge node[above] {$submitPO$} (pending)
					(pending) edge node[above] {$pay$} (paid)
					(paid) edge node[above] {$shipPO$} (shipped)
					(shipped) edge node[above] {$deletePO$} (deleted);
			\end{scope}
		\end{tikzpicture}
	}
\begin{comment}
	\subfigure[Work Order lifecyle\label{subfig:wo-lifecycle}]{
		\begin{tikzpicture}
			\begin{scope}  
				\scriptsize
				\node (init) {};
				\node[state,right of=init,xshift=.5cm] (preparation) {$preparation$};
				\node[state,right of=preparation,xshift=1cm] (complete) {$complete$};
				\node[state,fill,right of=complete,xshift=.5cm] (deleted) {};

				\path 
					(init) edge node[above] {$createWO$} (preparation) 
%%					(empty) edge [loop above] node[above] {$add\_li\_ro$}
	%				(preparation) edge [loop above] node[above]{$addLineItemWO$} (preparation)
					(preparation) edge node[above] {$doneWO$} (complete)
					(complete) edge node[above] {$deleteWO$} (deleted);
			\end{scope}
		\end{tikzpicture}
	}
\end{comment}
	\subfigure[Material Order lifecyle\label{subfig:mpo-lifecycle}]{
		\begin{tikzpicture}
			\begin{scope}  
				\scriptsize
				\node (init) {};
%				\node[state,right of=init] (empty) {$empty$};
				\node[state,right of=init,xshift=.4cm] (preparation) {$preparation$};
				\node[state,right of=preparation,xshift=1.4cm] (submitted) {$submitted$};
				\node[state,right of=preparation,above of=submitted,yshift=-1.2cm] (accepted) {$accepted$};
				\node[state,right of=accepted,xshift=.6cm] (shipped) {$shipped$};
				\node[state,right of=preparation,below of=submitted,yshift=1.2cm] (rejected) {$rejected$};
				\node[state,fill,right of=rejected,xshift=.6cm] (deleted) {};

				\path
					(init) edge node[above] {$createMO$} (preparation) 
%					(empty) edge[bend right] node[below] {$addLineItemMO$} (preparation) 
%					(preparation) edge [loop above] node[above]{$addLineItemMO$} (preparation)
					(preparation) edge[] node[above] {$doneMO$} (submitted)
					(submitted) edge node[left] {$acceptMO$} (accepted)
					(submitted) edge node[right] {$rejectMO$} (rejected)
					(accepted) edge node[above] {$shipMO$} (shipped)
					(shipped) edge node[right] {$deleteMO$} (deleted)
					(rejected) edge node[above] {$deleteMO$} (deleted);
			\end{scope}
		\end{tikzpicture}
	}
\caption{Lifecycles of the artifacts involved in the order-to-cash scenario.\label{fig:lifecycles}}
\end{figure}
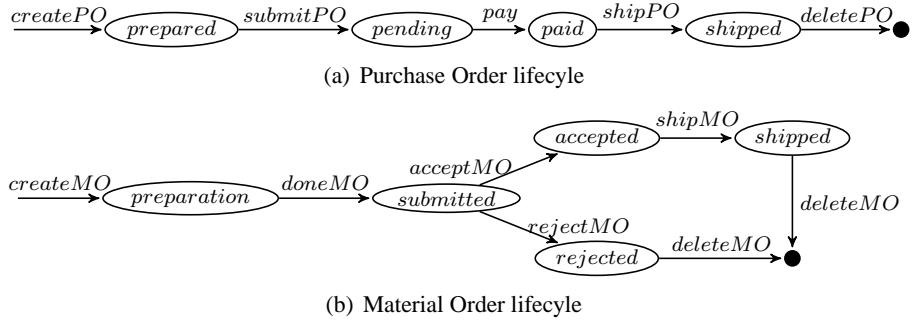

Next, we encode the whole system as an AC program, where the artifact
data models are represented as a relational database schema, and the
corresponding lifecycles are formally characterised by an appropriate
set of actions.
%%
%As to the data model, 
We reserve a distinguished relation for each artifact class.
%as well as some auxiliary relations necessary to model
%the line items present in MPOs and WOs.  
In addition, we introduce {\em static} relations to store
%customer, supplier, 
product and material information.
For the sake of presentation we assume to be dealing with three agents
only: one customer $c$, one manufacturer $m$ and one supplier $s$.
The database schema $\D_i$ for each agent $i \in \{c, m , s\}$ can
therefore be given as:
%AL: check if happy with D_is below:
\begin{itemize}
\item Customer $c$: $\D_c = \{{\it Products(prod\_code, budget)}, {\it PO(id, prod\_code, offer, status)}\};$	
\item Manufacturer $m$:  $\D_m =\{ {\it PO(id, prod\_code, offer, status)}, {\it MO(id, prod\_code, price, status)}\};$
\item Supplier $s$: $\D_s = \{{\it Materials(mat\_code, cost)}, {\it
MO(id, prod\_code, price, status)}\}.$
\end{itemize}

%The resulting database schema $\D$ is shown in
%Table~\ref{tab:data-model}.  $R(n_1,\ldots,n_{k})$ defines the
%relation symbol $R$ of arity $k$ ($R/k$), where $n_i$ is the name of
%the $i$-th component, or {\em attribute}, of its tuples.
%%
The relations
%$Customers$, $Suppliers$, 
{\it Products} and {{\it Materials}, as well as {\it PO} and {\it MO}
are self-explanatory. Note the presence of the attribute {\it status}
in the relations corresponding to artifacts.
%%
%As to $MO\_LI$ and $MPO\_LI$, they contain the line items occurring in
%MOs and MPOs, respectively.  For instance, the fact that two line
%items containing materials with codes $5$ and $6$, and quantities $20$
%and $15$, respectively, occur in the MO with $id=10$, is captured by
%the presence of tuples $\tup{10,5,20}$ and $\tup {10,6,15}$ in
%$MO\_LI$.
%%
%\input{figures/data-model.tex}

As interpretation domain, we consider the infinite set $U_{otc}$ of
alphanumeric strings. Also, we assume that in the initial state
%database instance $D_0$ 
the only non-empty relations are
%$Customers$, $Suppliers$, 
{\it Products} and {\it Materials}, which contain background
information, such as
%the possible customers, or 
the catalogue of available products.

Hence, the artifact-centric program $ACP_{otc}$ corresponding to the
order-to-cash scenario can be given formally as follows:
\begin{definition}
The {\em artifact-centric program} $ACP_{otc}$ is a tuple
$\tup{\D_{otc},U_{otc},\Sigma_{otc}}$, where:
\begin{itemize}
\item  the program's database schema $\D_{otc}$ and interpretation domain $U_{otc}$ are introduced as above, i.e., $\D_{otc} = \D_c \cup \D_m \cup \D_s = \{{\it Products}/2, {\it PO}/4, {\it MO}/4, {\it Materials}/2 \}$  and $U_{otc}$ is the set of all alphanumeric strings.
\item $\Sigma= \set{\Sigma_c, \Sigma_m, \Sigma_s  }$ is the set of 
{\em agent specifications} for the customer $c$, the manufacturer $m$ and the supplier $s$. Specifically, for each $i \in \{c, m, s \}$, $\Sigma_i=\tup{\D_i, l_{i0}, \Omega_i}$ is such that: 
\begin{itemize} \item $ \D_i\subseteq \D$ is  agent $i$'s
database schema as detailed above, i.e., $\D_c = \{{\it Products}/2, {\it PO}/4 \}$, $\D_m = \{{\it PO}/4, {\it MO}/4 \}$, and $\D_s = \{{\it MO}/4, {\it Materials}/2 \}$. 
%s.t.~$\D_i\cap\D_j=\emptyset$, for $i\neq j$; 
\item $l_{c0}$, $l_{m0}$, and $l_{s0}$ are database instances in $\D_c(U_{otc})$, $\D_m(U_{otc})$, and $\D_s(U_{otc})$ respectively s.t.~$l_{c0}({\it Products})$ and $l_{s0}({\it Materials})$ are not empty, i.e., they contain some background information, while $l_{c0}({\it PO})$, $l_{m0}({\it PO})$, $l_{m0}({\it MO})$ and $l_{s0}({\it MO})$ are empty. 
\item We assume that $\Omega_c$ contains the actions {\it createPO(prod\_code,offer)}, {\it submitPO(po\_id)},  {\it pay(po\_id)},  {\it deletePO(po\_id)}.
Similarly, $\Omega_m = \{ {\it createMO(po\_id,price)},$\newline $ {\it
doneMO(mo\_id)}, {\it shipPO(po\_id)}, {\it deleteMO(mo\_id)} \}$ and
$\Omega_s = \{ {\it acceptMO (mo\_id)},$ \newline $ {\it rejectMO(mo\_id)}, {\it
shipMO(mo\_id)} \}$.
\end{itemize}
\end{itemize}
\end{definition}

System actions capture {\em legal} operations on the underlying
database and, thus, on artifacts.  In Table~\ref{tab:action-specs} we
report some of their specifications.  Variables (from $V$) and
constants (from $U$) are distinguished by fonts $v$ and ${\sf c}$,
respectively.  From Section~\ref{sec:as} we adopt the convention that an
action affects only those relations whose name occurs in $\psi$.
\begin{table}
\caption{Specification of the actions affecting the artifacts PO and  MO in the order-to-cash scenario.\label{tab:action-specs}}
\newcommand{\tab}[0]{\phantom{xx}}
\begin{itemize}
\item $createPO(prod\_code) = \tup{\pi(prod\_code),\psi(prod\_code)}$, where:	
		\begin{itemize}
			\item $\pi(prod\_code)\equiv
 \exists b~Products(prod\_code, b)$
 % \land Materials(p, price)) \land o = price$
%				\tab$\forall id',p,p,s~(MO(id',c,p,s)\ra id\neq id')$
			\item $\psi(prod\_code) \equiv \exists
                          id, b~(PO'(id,prod\_code, b, {\sf prepared})
                          \land$\\ 
                          $Products(prod\_code, b) \land$\\ 
$\forall id', pc, o, s~(PO(id', pc, o,s) \ra
                          id \neq id'))$
\end{itemize}
\item $createMO(po\_id, price) = \tup{\pi(po\_id,price),\psi(po\_id,price)}$, where:	
		\begin{itemize}
			\item $\pi(po\_id,price)\equiv \exists pc,o~(PO(po\_id,pc,o,{\sf prepared})$
 % \land Materials(p, price)) \land o = price$
%				\tab$\forall id',p,p,s~(MO(id',c,p,s)\ra id\neq id')$
			\item $\psi(po\_id,price)\equiv
                          (MO'(po\_id,pc,price,{\sf preparation}) \land$\\
$ \exists o PO(po\_id,pc,o,{\sf prepared}) \land$\\
$ \forall id',pc,pr,s~(MO(id',pc,pr,s) \ra
                          id \neq id'))$
\end{itemize}
\begin{comment}
	\item
          $addLineItemMO(mo,mat,qty)=\tup{\pi(mo,mat),\psi(mo,mat,qty)}$,
          where:
		\begin{itemize}
			\item $\pi(mo,mat)\equiv 
				\exists po~MO(mo,po,{\sf preparation})\land\exists desc~Materials(mat,desc)\land$\\
				\tab$\lnot\exists q~MO\_LI(mo,mat,q)$
			\item $\psi(mo,mat,qty)\equiv MO\_LI'(mo,mat,qty)\land$\\
				\tab$\forall w,m,q~\big((MO\_LI(w,m,q)\ra MO\_LI'(w,m,q))\land$\\
				\tab\tab$(MO\_LI'(w,m,q)\ra(MO\_LI(w,m,q)\lor (w=mo\land m=mat\land q=qty)))\big)$
		\end{itemize}
\end{comment}			
	\item $doneMO(mo\_id)=\tup{\pi(mo\_id),\psi(mo\_id)}$, where:
		\begin{itemize}
			\item $\pi(mo\_id)\equiv\exists
                          pc, p~MO(mo\_id,pc,pr, {\sf
                            preparation})$
				%%\tab$\forall m,s,st~(MPO(m,mo,s,st)\ra st={\sf shipped})$
			\item $\psi(mo\_id)\equiv 
				\forall w,pc,p,s~\big(
					(w\neq mo\_id \ra(MO(w,pc,p,s)\lra
                                MO'(w,pc,p,s))) \land$\\ 
$(MO(mo\_id,pc,p,s) \ra 						(MO'(mo\_id,pc,p,{\sf submitted})\land$\\ 
						$(s\neq{\sf submitted}\ra\lnot MO'(mo\_id,pc,p,s))))
				\big)$
		\end{itemize}		
	\begin{comment}
		\item $deleteMO(mo)=\tup{\pi(mo),\psi(mo)}$, where:
			\begin{itemize}
				\item $\pi(mo)\equiv\exists c~MO(mo,c,{\sf submitted})$
				\item $\psi(mo)\equiv$\\
					\tab$\forall w,c,s~\big(\lnot MO'(mo,c,s)\land(w\neq mo\ra(MO(w,c,s)\lra MO'(w,c,s)))\big) \land$\\
					\tab$\forall w,m,q\big(\lnot MO\_LI'(mo,m,q)\land(w\neq mo\ra(MO\_LI(w,m,q)\lra MO\_LI'(w,m,q)))\big)$
			\end{itemize}
	\end{comment}
	\item $acceptMO(mo\_id)=\tup{\pi(mo\_id),\psi(mo\_id)}$, where:
		\begin{itemize}
			\item $\pi(mo\_id)\equiv \exists
                          mo\_id, pc, p~MO(mo\_id, pc, pr, {\sf
                            submitted}) \land Material(pc,p)$
				%%\tab$\forall m,s,st~(MPO(m,mo,s,st)\ra st={\sf shipped})$
			\item $\psi(mo\_id)\equiv 
				\forall w,pc,p,s~\big(
					(w\neq mo\_id \ra(MO(w,pc,p,s)\lra
                                MO'(w,pc,p,s))) \land$\\ $(MO(mo\_id,pc,p,s) \ra 						(MO'(mo\_id,pc,p,{\sf accepted})\land$\\ 
						$(s\neq{\sf accepted}\ra\lnot MO'(mo\_id,pc,p,s))))
				\big)$
		\end{itemize}	
\end{itemize}
\end{table}

Consider, for instance, the action {\it createPO} performed by the
customer $c$, whose purpose is the creation of a PO artifact instance
related to a given $prod\_code$.  Its precondition requires that the
action parameter $prod\_code$ refers to an actual product in the
$Products$ database; while the postcondition guarantees that the {\it
offer} value in PO is set equal to {\it budget} as well as the $id$ of
the new PO is unique.
%with respect to those present when the action is executed.
%%
As regards the action {\it createMO}, performed by the manufacturer
$m$ and meant to create instances of MO artifacts, its precondition
requires that $po\_id$ is the identifier of some existing PO.
% and the product $p$ appears in the
%{\it Materials} database.
%and that $id$ in the new MO is unique
%with respect to those present when the action is executed.
%%
Its postcondition states that, upon execution, the {\it MO} relation
contains exactly one additional tuple, with identifier attribute set
to $id$, with attribute $status$ set to ${\sf preparation}$ and asking
price set to $price$.
%%
\begin{comment}
The action $addLineItem$ adds a line item, i.e., a component, to an
existing MO.  It takes in input the identifier of the MO-artifact
($wo$), that of the material to add ($mat$), and the needed quantity
($qty$).  The precondition requires that such parameters correspond to
some existing MO-artifact and material, and that the MO-artifact is in
state ${\sf preparation}$.
%%, i.e., a state where components can be added to the MO. 
Moreover, it is required that the material being added is not already
present in the MO.  The postcondition states that the new line item is
added to $MO\_LI$.
%%, and that all those previously present are preserved.
\end{comment}
As an example of action triggering an artifact's status transition,
consider the action {\it doneMO} performed also by the manufacturer
$m$.
%%, which is executed to notify  that the preparation of a MO is complete. 
{\it doneMO} is executable only if the MO artifact is in status ${\sf
preparation}$; its effect is to set the status attribute to ${\sf
submitted}$.
Finally, as an example of an action triggered by a choice, consider
the action {\it acceptMO} performed by the supplier $s$. It is
triggered only if the entries for the product code $pc$ and the price
$p$ have matching values in the {\it Materials} database. The action
outcome is to set the status attribute to ${\sf accepted}$.

Notice that although actions are typically conceived to manipulate
artifacts of a specific class their preconditions and postconditions 
may depend on artifact instances of different classes.
For example note that the action {\it
createMO} manipulates MO artifacts, but its preconditions and
postconditions may depend on artifact instances originating from
different classes (e.g. {\it createMO}'s precondition depends on PO
artifacts).
%%Thus, in general, actions may affect or be affected by several artifacts. 
We stress that action executability depends not only on the status
attribute of an artifact, but on the data content of the whole
database, i.e., of all other artifacts. Similarly, action executions
affect not only status attributes.
Most importantly, by using first-order formulas such as $\phi_b
= \forall x_{1}, \ldots , x_{b+1} \bigvee_{i \neq j} (x_i = x_j)$ in
the postcondition $\psi$, we can guarantee that the AC program in
question is bounded and is therefore amenable to the abstraction
methodology of Section~\ref{sec:results}.

We now define the agents induced by the AC program $ACP_{otc}$ given
above according to Definition~\ref{def:ind-ag}.
\begin{definition}%[Induced Agents]\label{def:ind-ag}
Given the AC program $\ACP_{otc}
= \tup{\D_{otc},U_{otc},\Sigma_{otc}}$, the agents $A_c$, $A_m$ and
$A_s$ induced by $\ACP_{otc}$ are defined as follows:
%is a tuple
%$A_i=\tup{\D_i,L_i,Act_i,Pr_i}$ on the interpretation domain $U$ such
%that, for $\Sigma_i=\tup{\D_i,l_{i0},\Omega_i}$:
\begin{itemize} 
\item $A_c = \tup{\D_c, L_c, Act_c, Pr_c}$, where
%\begin{itemize}
\myi $\D_c$ is as above;
\myii $L_c =\D_c(U_{otc})$;
\myiii $Act_c = \Omega_c = \{ {\it createPO(prod\_code,offer)}, {\it submitPO(po\_id)},  {\it pay(po\_id)},  {\it deletePO(po\_id)} \}$; and
\myiv $\alpha(\vec u) \in Pr_c(l_c)$ iff $l_c \models \pi(\vec v)$ for $\alpha(\vec u)=\tup{\pi(\vec
	v),\psi(\vec w)}$.
\item $A_m = \tup{\D_m, L_m, Act_m, Pr_m}$, where
 \myi $\D_m$ is as above;
\myii $L_m = \D_m(U_{otc})$;
\myiii $Act_m = \Omega_m = \{ {\it createMO(po\_id,price)}, {\it doneMO(mo\_id)},  {\it shipPO(po\_id)},  {\it deleteMO(mo\_id)} \}$; and
\myiv $\alpha(\vec u) \in Pr_m(l_m)$ iff $l_m \models \pi(\vec v)$ for $\alpha(\vec u)=\tup{\pi(\vec
	v),\psi(\vec w)}$.
\item $A_s = \tup{\D_s, L_s, Act_s, Pr_s}$, where
 \myi $\D_s$ is as above;
\myii $L_s = \D_s(U_{otc})$;
\myiii $Act_s = \Omega_s = \{ {\it acceptMO(mo\_id)}, {\it rejectMO(mo\_id)},  {\it shipMO(mo\_id)} \}$; and
\myiv $\alpha(\vec u) \in Pr_s(l_s)$ iff $l_m \models \pi(\vec v)$ for $\alpha(\vec u)=\tup{\pi(\vec
	v),\psi(\vec w)}$.	
%\end{itemize}
\end{itemize}
\end{definition}
 
By the definition of $A_m$ we can see that $createMO(po\_id,
 price) \in Pr_m(l_m)$ if and only if the interpretation $l_m(PO)$ of
 the relation $PO$ in the local state $l_m$ contains a tuple $\tup{
 po\_id, pc, o, {\sf prepared} }$ for some product $pc$ and offer $o$;
 while $doneMO(mo\_id) \in Pr_m(l_m)$ iff $l_m(MO)$ contains a tuple
 in the interpretation $l_m(MO)$ with id $mo\_id$ and status ${\sf
 preparation}$. It can also be checked that, in line with our
 discussion in Section~\ref{sec:aqis}, a full version of the function
 $\tau_{otc}$ given above can easily encode the artifacts' lifecycles
 as given in Figure~\ref{fig:lifecycles}.

We can now define the \aqis\ generated by the set of agents $\{ A_c, A_m ,
A_s\}$ according to Definition~\ref{def:ind-aqis}.
\begin{definition}%[Induced \aqis]\label{def:ind-aqis}
Given the AC program $\ACP_{otc}$ and the set $Ag = \set{A_c, A_m, A_s}$ of
agents induced by $\ACP_{otc}$, the \aqis\ induced by $\ACP_{otc}$ is
the tuple $\P_{otc} =\tup{\S_{otc}, U_{otc}, s^{0}_{otc},\tau_{otc}}$, where: 
\begin{itemize}
\item $\S_{otc} \subseteq L_c \times L_m \times L_s$ is the set
of \emph{reachable states};
\item $U_{otc}$ is the interpretation domain;
\item $s^0_{otc} = \tup{l_{c0}, l_{m0}, l_{s0}}$ is the initial global
state, where the only non-empty relation are {\it Products} and {\it Materials}; 
\item $\tau_{otc}$ is the global transition function defined according to Def.~\ref{def:ind-aqis}.
%by the
%following condition: $s' \in \tau(s,\tup{\alpha_1(\vec
%u_1),\ldots,\alpha_n(\vec u_n)})$, with $s=\tup{l_0,\ldots,l_n}$ and
%$\alpha_i(\vec u_i)=\tup{\pi_i(\vec v_i),\psi_i(\vec w_i)}$
%($i\in\set{0,\ldots,n}$) iff the following conditions are satisfied: 
%\begin{itemize} 
%\item for every $i\in\set{0,\ldots,n}$, $l_i\models \pi_i(\vec v_i)$; 
%\item $\adom(s')\subseteq \adom(s)\cup \bigcup_{i=0,\ldots,n}\vec
%%w_i\cup\const (\psi_i)$; \item $D_s\oplus D_{s'}\models \psi_i(\vec
%w_i)$,
%AL: below needs fixing. IT is not 2.2 - either remove or explain.
%FB: It is at the very end of section 2.2.
% where $D_s$ and $D_{s'}$ are obtained from $s$ and $s'$ as
%discussed on p.~\pageref{refref}.  
%\end{itemize} 
\end{itemize}
\end{definition}

As an example we give a snippet of the transition function
$\tau_{otc}$ by considering the global action $\alpha(\vec u)
= \langle createPO(pc), doneMO(m), acceptMO(m') \rangle$ enabled by
the respective protocols in a global state $s$. By the definition of the
actions $createPO(pc)$, $doneMO(m)$, and $acceptMO(m')$ we have that
$l_i(s) \in Pr_i$ for $i \in \{c,m, s\}$ implies that the {\it
Products} relation contains information about the product $pc$.
%we have seen that if
%in a global state $s \in \S$ the manufacturer $m$ performs action
%$\beta_m = createMO$ and $createMO(po\_id, price) \in Pr_m(l_m(s))$
%then 
Also, the interpretation of the relation $MO$ contains the tuples
$\tup{m, p, pr, {\sf preparation} }$ and $\tup{m', p', pr', {\sf
submitted} }$ for some products $p$ and $p'$.

By the definition of $\tau_{otc}$ it follows that for every
$s' \in \S_{otc}$, $s \xrightarrow{\alpha(\vec u)} s'$ implies that
$D_s \oplus
D_{s'} \models \psi_{createPO}(pc) \land \psi_{doneMO}(m) \land \psi_{acceptMO}(m')$,
that is,
%{\small
\begin{eqnarray*}
 D_s \oplus D_{s'} & \models & 
\exists
                          id, b~(PO'(id,pc, b, {\sf prepared})
                          \land Products(pc, b) \land\\
& &  \forall id', p, o, s~(PO(id', p, o,s) \ra  id \neq id')) \land \\
& & 			\forall w,p,pr,s~\big(
					(w\neq m \ra(MO(w,p,pr,s)\lra
                                MO'(w,p,pr,s))) \land\\ 
& & (MO(m,p,pr,s) \ra 						(MO'(m,p,pr,{\sf submitted})\land\\
& &   (s\neq{\sf submitted}\ra\lnot MO'(m,p,pr,s))))
				\big) \land\\	
& & 	\forall w,p,pr,s~\big(
					(w\neq m' \ra(MO(w,p,pr,s)\lra
                                MO'(w,p,pr,s))) \land\\ 
& & (MO(m',p,pr,s) \ra 						(MO'(m',p,pr,{\sf accepted})\land\\
& &	(s\neq{\sf accepted}\ra\lnot MO'(m',p,pr,s))))
				\big)
%& & \exists 
%                          id~(MO'(id,po\_id,price,{\sf preparation})
%                          \land \forall id',pc,pr,s~(MO(id',pc,pr,s) \ra
%                          id \neq id'))
\end{eqnarray*}
%}
Hence, the interpretation of the relation {\it PO} in $D_{s'}$ extends
$D_{s}(PO)$ with the tuple $\tup{id, pc, b, {\sf prepared}}$, where
$id$ is a fresh id. The tuples for the material orders $m$ and $m'$
are updated in $D_{s'}(MO)$ by becoming $\tup{m, p, pr, {\sf
submitted} }$ and $\tup{m', p', pr', {\sf accepted} }$,
respectively. In view of the second condition on $\tau_{otc}$ in
Definition~\ref{def:ind-aqis}, no other elements are changed in the
transition. Finally, notice that these extensions are indeed the
interpretations of {\it PO} and {\it MO} in $D_{s'}$. Thus, the
operational semantics satisfies the intended meaning of actions.

%As actions are executed, the database content, hence the state of each
%artifact, changes.  Obviously, after executing an action, other
%actions become executable, their executions change the database state,
%thus make other actions executable, and so on.

%%From the perspective of a single artifact, the only actions of interest are those affecting and affected by the artifact  
%%state. 
%%The behavior induced by the execution of such actions on a particular artifact is referred to as the artifact's {\em 
%%lifecycle}. 

We can now investigate properties of the AC program $ACP_{otc}$ by
using specifications in FO-CTLK.
 %introduced in Section~\ref{sec:example}
%useful to model the order-to-cash business process.
%%
For instance, the following formula specifies that the manufacturer
$m$ knows that each material order MO has to match a corresponding
purchase order PO:
%product can be shipped to a customer only if all the required
%materials are already shipped to the manufacturer:
\begin{eqnarray*}
\varphi_{match} & = & AG~\forall id, pc~(\exists pr, s~ MO(id,
pc, pr, s)   \to K_m \exists o,s' PO(id, pc, o,s'))
\end{eqnarray*}

%\begin{comment}	
The next specification states that given a material order MO, the
customer will eventually know that the corresponding PO will be shipped.
%product can be shipped to a customer only if all the required materials are already shipped to the manufacturer:
\begin{eqnarray*}
\varphi_{\it fulfil} & = & AG~\forall id, pc~(\exists pr, s~MO(id, pc, pr, s) \to EF~K_c \exists  o~PO(id, pc, o, {\sf shipped}))
\end{eqnarray*}

%Intuitively, both property $\varphi_{match}$ and $\varphi_{fulfil}$ are to be satisfied in the order-to-cash scenario.  
Further, we may be interested in checking whether budget and costs are
always kept secret from the supplier $s$ and the customer $c$
respectively, and whether the customer (resp., the supplier) knows
this fact:
\begin{eqnarray*}
\varphi_{budget} & = & K_c~\forall pc~AG~\neg \exists b~K_s~Products(pc,b)\\
\varphi_{cost} & = & K_s~\forall mc~AG~\neg \exists c~K_c~Materials(mc,c)
\end{eqnarray*}
%\end{comment}	

%where:   
%%
%$shippedPO(x)\equiv \exists c,p~PO(x,c,p,{\sf shipped})$ and $shippedMPO(x)\equiv \exists w,sp~MPO(x,w,sp,
%{\sf shipped})$, respectively, capture the fact that the PO and the MPO with $id=x$ are in status ${\sf shipped}$; 
%and
%$related(x,y)\equiv \exists c,p,s~PO(x,c,p,s)\land \exists w,s~MO(w,x,s)\land \exists sp,st~MPO(y,w,sp,st)$ holds 
%iff the MPO with $id=y$ is related, via some MO, to the PO with $id=x$.

%The next property captures the existence of a run containing a state whose active domain exceeds a 
%given size-threshold $t$:\\
%$$\varphi_{t+}=EF~\exists x_1,\ldots,x_{t+1}~\bigwedge_{i\neq j}x_i\neq x_j.$$

%We can also express the fact that from some state there exists a way to achieve some goal (although this may not 
%necessarily happen). For instance, the next formula states that there exists always a way to empty all non-static relations:
%$$\varphi_{empty}=AG~EF~(emptyPO\land emptyMO\land emptyMPO),$$
%where: $emptyPO\equiv\lnot\exists i,c,p,s~PO(i,c,p,s)$, 
%$emptyMO\equiv\lnot\exists i,c,s~MO(i,c,s)\land\lnot\exists %w,m,q~MO\_LI(w,m,q)$, and 
%$emptyMPO$ is similar to $emptyMO$.

%\begin{comment}
Other interesting specifications describing properties of the artifact
system and the agents operating in it can be similarly formalised in
FO-CTLK, thereby providing the engineer with a valuable tool to assess
the implementation.

% such as $\varphi_{match}$ are useful to describe properties of
% business processes.
We now proceed to exploit the methodology of Section~\ref{sec:results}
to verify the AC program $ACP_{otp}$. We use $\varphi_{match}$ as an
example specification; analogous results can be obtained for other
formulas. Observe that according to Definition~\ref{def:ind-aqis} the
\aqis\ induced by $ACP_{otp}$ has infinitely many states. 

We assume two interpretations for the relations {\it Products} and
{\it Materials}, which determine an initial state $D_0$. Consider the
maximum number $max$ of parameters and the constants $C_{\Omega}$ in
the operations in $\Omega_c$, $\Omega_m$ and $\Omega_s$.  In the case
under analysis we have that $max = 2$. We earlier remarked that
formulas such as $\phi_b$ in the postcondition of actions force
the \aqis\ $\P_{otc}$ corresponding to $ACP_{otc}$ is bounded. Here we
have that $\P_{otc}$ is $b$-bounded. According to
Corollary~\ref{cor:preservation}, we can therefore consider a finite
domain $U'$ such that
\begin{eqnarray*}
 U'  & \supseteq & D_0 \cup C_{\Omega} \cup
\const(\varphi_{match})\\
 & & D_0(Products) \cup D_0(Materials) \cup C_{\Omega}
\end{eqnarray*}
and such that 
\begin{eqnarray*}
|U'| & \geq & 2b + |D_0| + |C_{\Omega}| + |\const(\varphi_{match})| + max\\
 &  =  & 2b + |D_0| + |C_{\Omega}| + 2
\end{eqnarray*}
For instance, we can consider any subset $U'$ of $U_{otc}$ satisfying
the conditions above.  Given that $U'$ satisfies the hypothesis of
Theorem~\ref{th:spec-abstr}, it follows that the AC program
$ACP_{otc}$ over $U_{otc}$ satisfies $\varphi_{match}$ if and only if
$ACP_{otc}$ over $U'$ does. But the \aqis\ induced by the latter is a
finite-state system, which can be constructively built by running the
AC program $ACP_{otc}$ on the elements in $U'$. Thus,
$ACP_{otc} \models \varphi_{match}$ is a decidable instance of model
checking that can be therefore solved by means of standard techniques.

A manual check on the finite model indeed reveals that
$\varphi_{match}, \varphi_{budget}$ and $\varphi_{cost}$ are satisfied in the
finite model, whereas $\varphi_{\it fulfil}$ is not. 
%AL: Worth adding the results for the others?
By Corollary~\ref{cor:preservation} the \aqis\ $\P_{otc}$ induced by
$ACP_{otp}$ satisfies the same specifications. Hence, in view of
Definition~\ref{def:acp-satisfaction}, we conclude that the
artifact-centric program $ACP_{otp}$ satisfies
$\varphi_{match}, \varphi_{budget}$ and $\varphi_{cost}$ but does not
satisfy $\varphi_{\it fulfil}$. This is entirely in line with our
intuitions of the scenario.

%AL: Worth adding the results for the others?

%In this paper we described a methodology aimed at the automatic model
%checking of specifications such as $\varphi_{match}$ on AS programs.
%Typically, we are interested in checking automatically whether they
%are satisfied on particular systems.  This amounts to checking
%$\S\models\varphi$, for a system $\S$ and a specification $\varphi$.
%If we consider the AS described previously in the order-to-cash
%scenario, it is not difficult to see that $\varphi_{ship}$,
%$\varphi_{t+}$, and $\varphi_{empty}$ are satisfied.
%%

%Observe that while we may, and in fact can, ascertain the 
%truth of those specifications on this specific example, we cannot be 
%able to do so automatically on any possible system, as the general 
%model checking problem is undecidable. 
%It is therefore natural to investigate decidable subclasses of this problem.
%\end{comment}

  \label{sec:toy}

%\input{toy_example2.tex}

%\section{Case Study}

%\input{example.tex} \label{example}

\section{Conclusions and Future Work}\label{sec:concl}

In this paper we put forward a methodology for verifying agent-based
artifact-centric systems. We proposed \aqis, a novel semantics
incorporating first-order features, that can be used to reason about
multi-agent systems in an artifact-centric setting. We observed that
the model checking problem for these structures against specifications
given in a first-order temporal-epistemic logic is undecidable
and proceeded to identify suitable fragments for which decidability
can be retained.

We identified two orthogonal solutions to this issue. In the former we
operated a restriction to the specification language and showed that,
by limiting ourselves to sentence-atomic temporal-epistemic
specifications, infinite-state, bounded \aqis\ admit finite
abstractions. In the latter we kept the full first-order
temporal-epistemic logic but identified the noteworthy subset of
uniform \aqis. In this setting we showed that bounded uniform \aqis\
admit finite abstractions. The abstractions we identified in each
setting depend on novel notions of bisimulation at first-order that
we proposed. 

We explored the complexity of the model checking problem in this
context and showed this to be EXPSPACE-complete. While this is
obviously a hard problem, we need to consider that these
are first-order structures which normally lead to undecidable
problems. We were also reassured by the fact that the abstract
interpretation domain is actually linear in the size of the bound
considered.

Mindful of the practical needs for verification in artifact-centric
systems, we then explored how finite abstractions can actually be
built. To this end, rather than investigating one specific
data-centric language, we defined a general class of declarative
artifact-centric programs. We showed that these systems admit uniform
\aqis\ as their semantics. Under the assumption of bounded systems we
showed that model checking these multi-agent system programs is
decidable and gave a constructive procedure operating on bisimilar,
finite models. While the results are general, they can be instantiated
for various artifact-centric languages. For
instance~\cite{BelardinelliLP12b} explores finite abstractions of GSM
programs by using these results. 

We exemplified the methodology put forward on a use-case consisting of
several agents purchasing and delivering products. While the system
has infinitely many states we showed it admits a 
finite abstraction that can be used to verify a variety of
specifications on the system.

A question left open in the present paper is whether the uniform
condition we provided is tight. While we showed this to be a
sufficient condition, we did not explore whether this is necessary for
finite abstractions or whether more general properties can be
given. In this context it is of interest that artifact-centric
programs generate uniform structures. Also, it will be worthwhile to
explore whether a notion related to uniformity can be applied to other
domains in AI, for example to retain decidability of specific
calculi. This would appear to be the case as preliminary studies in
the Situation Calculus demonstrate~\cite{degiacomo-etal-kr12}.

On the application side, we are also interested in exploring ways to
use the results of this paper to build a model checker for
artifact-centric MAS. Previous efforts in this area,
including~\cite{GonzalezGL12}, are limited to finite state systems. It
would therefore be of great interest to construct finite abstractions
on the fly to check practical e-commerce scenarios such as the one
here discussed.

%%%%%%%%%%
% References and End of Paper

\bibliographystyle{theapa}
\bibliography{jair}
\end{document}